\newtheorem{theorem}{Theorem}[section]
\newtheorem{proposition}[theorem]{Proposition}
\newtheorem{lemma}[theorem]{Lemma}
\newtheorem{corollary}[theorem]{Corollary}
\newtheorem{assumption}[theorem]{Assumption}
\newtheorem{question}[theorem]{Question}
\theoremstyle{definition}
\theoremstyle{remark}
\title{
Uniform Consistency in
Stochastic Block Model with Continuous
Community Label}
\author{Lu Liu, Lili Wang, Qingsong Xu}
\author{\large
\textsc{Lu Liu\footnote{ g.jiayi.liu@gmail.com}},
Lili Wang, Qingsong Xu
\\ \textit{Central South University}
 }
\def\mbbP{\mathbb{P}}
\def\mbfP{\mathbf{P}}
\def\h{\hat}
\def\oE{\operatorname{E}}
\def\al{\eta}
\def\mb{\mathbf}
\def \mtB{\mathcal{\tilde{B}}}
\def\b{\boldsymbol}
\def\t{\tilde}
\def\bhtal{\b{\hat{\t{\al}}}}
\def\btal{\b{\t{\al}}}
\def\1u{\mathbf{1}_{\mb{u}}}
\def\klowercase2{
8\cdot \frac{n\mu_n^*\delta_n^2}{4\sqrt{d}C_\varepsilon C_\mathcal{B}}}
\def\kupper{
\frac{n\delta_n^2}
 {2e^3 Cs\sqrt{d}C_\varepsilon C_{\mathcal{B}}}
}
\def\conditionofdelta{
\frac{n\mu_n^*\delta_n^2}{8\sqrt{d}C_\varepsilon C_\mathcal{B}}
\geq 20\log n}
\def\conditionofdeltasecond{
\delta_n
\geq \sqrt{
\frac{8\sqrt{d}C_\varepsilon C_\mathcal{B}
20\log n}
{n\mu_n^*}
}}
\def\conditionofdeltathird{
\frac{n\delta_n^3}
 {2e^3 Cs \sqrt{d}C_\varepsilon C_{\mathcal{B}}}
}
\def\conditiononmuth6{
Cs\mu_n^*<\frac{1}{8}}
\def\const1lem6{
\frac{n\mu_n^*\delta_n^2}{2\sqrt{d}C_\varepsilon C_\mathcal{B}}
}
\begin{document}
\maketitle
\begin{abstract}
\cite{bickel2009nonparametric} developed a general framework to establish consistency of community detection in stochastic block model (SBM). In most applications of this framework,  the community label is discrete. For example, in \citep{bickel2009nonparametric,zhao2012consistency} the degree corrected SBM is assumed to have a discrete degree parameter. In this paper, we generalize the method of \cite{bickel2009nonparametric}  to give consistency analysis  of maximum likelihood estimator (MLE) in SBM with continuous community label. We show that there is a standard procedure to transform the $||\cdot||_2$ error bound to the uniform error bound $||\cdot||_\infty$. We demonstrate the application of our general results by proving the uniform consistency (strong consistency) of the MLE in the exponential   network model with   interaction effect.   As far as we know, this is  yet unknown. Unfortunately, in the continuous  parameter case,  the condition  ensuring uniform consistency  we obtained is much stronger  than that in the discrete parameter case,  namely  $n\mu_n^5/(\log n)^{8}\rightarrow\infty$ versus  $n\mu_n/\log n\rightarrow\infty$. Where  $n\mu_n$ represents the average degree of  the network.  But continuous is the limit of discrete. So it is not surprising  as we show  that by discretizing the community label space into sufficiently small (but not  too small) pieces and applying the MLE on the  discretized community label space,  uniform consistency holds under almost the same  condition as in discrete community label space.  Such a phenomenon is surprising since the discretization does not depend on the data or the model. This reminds us of the thresholding method. Another purpose of this paper  is to investigate whether the uniform   consistency condition in the continuous community label case is necessarily stronger than  that in the discrete parameter case. We did not find a numerical example  confirming this. However, a numerical experiment shows that the uniform error bound and the mean square error bound of the MLE, approximated by  the gradient decent algorithm, are not reduced by strengthening the stopping criterion.  This coincides with the result that  running the MLE on a subset of the parameter space reduces the uniform error bound.

\noindent {\bf Key words:} {\it
stochastic block model,  parameter estimation, asymptotic theory}
  \end{abstract}

\tableofcontents

\section{Introduction}

SBM is one of the most popular network model,
 not only because of its simplicity,
 but also for the following reasons.
First, it well fits a lot of 
real 
world data in the following fields,
 social network
 \citep{holland1981exponential,Newman2002Random,Robins2009Recent}
 (notably, \cite{holland1981exponential}
 first proposed SBM),
 biology \cite{rohe2011spectral},
gene regulatory network
\citep{Schlitt2007Current,Pritchard2000Inference},
image processing
\citep{Shi2000Normalized,Sonka2008Image}.
Second,
the model
is a nice tool to investigate
community detection algorithms from
the theoretical
perspective.
 \citep{Dyer1989The,Jerrum1998The,Condon2001Algorithms}
 are early works in this stream. Although
 their
 focus is the algorithmic aspects of
  the min-bisection problem.
Later a vast amount of research is carried out
to study and compare the performance of various
community detection
algorithms on SBM. Roughly speaking,
these algorithms
can be divided into the following categories.
Modularity algorithm
\citep{Newman2004Finding} etc,
likelihood algorithm
\citep{bickel2009nonparametric,Choi2012Stochastic,amini2013pseudo,Celisse2012Consistency} etc,
and most importantly,
spectral algorithm
\citep{chatterjee2011random,balakrishnan2011noise,Jin2014Fast,Sarkar2013Role,Krzakala2013Spectral} etc.

Notably, \cite{bickel2009nonparametric} provided
a general framework to check the
consistency of community detection.
It was further extended by \citep{zhao2012consistency}
to establish consistency of many
community detection algorithms
in more general models.
 These algorithms include
maximum likelihood estimation and
various modularity methods.
The technique is mainly based on finite covering method
and concentration inequality.
This approach is also employed to establish consistency
of spectral clustering \citep{Lei2015Consistency}.
Most of works concerning consistency
of parameter estimation in SBM
focused on discrete community label
(finitely many blocks).
A few exceptions include: \citep{yan2015asymptotics}
dealing with exponential network model;
\citep{xu2014edge},\citep{jog2015information}
 dealing with SBM with edge label
 (edge weight); and
\citep{continuousmembership2016}
 studying mixed membership. 
The general picture is still lack. Whether
uniform consistency and weak consistency
holds in SBM with continuous membership
under similar condition as that in discrete
membership case?
On the surface, continuous community label
space is simply the limit of discrete
community label space. However,
the  uniform consistency
(strong consistency) condition we obtained
in continuous community label space
is stronger than that in
discrete community label space. It is not
clear if such gap is inevitable.

\subsection{Outline}

The paper is organized as follows.
We define the SBM in section \ref{asymsubsec1}
and  introduce notations in section \ref{asymsubsec3}.
In section \ref{asymsecmainresult}
we  define uniform consistency and weak
consistency.
In section \ref{asymsubsec4}
we first give the
condition of weak consistency
and uniform consistency
for the exponential network model.
As in \citep{bickel2009nonparametric,zhao2012consistency},
the condition is about the density
(average connection probability)
of the network.
Then in section \ref{asymsubsec4}
we present results for weak consistency
and uniform consistency in the
general SBM. The condition for
weak consistency is slightly stronger than
 that in the discrete community label space.
However, the
uniform consistency condition we obtained
is much stronger than that in
the discrete community label space in
\citep{bickel2009nonparametric,zhao2012consistency}.
We are not clear if the condition is
necessarily stronger.
However, continuous is the limit
of discrete. Therefore, it is not
surprise as we show
in section \ref{asymsubsec5}
that by discretizing the
community label space and applying the MLE on the
discretized community label space,
the condition ensuring uniform consistency
is almost the same as
that in discrete community label space in
\citep{bickel2009nonparametric,zhao2012consistency}.
 In section \ref{asymsecsimulation},
we present our simulation results.
They show that
our theoretical results are possibly far from the truth.
First, we
run gradient decent algorithm
for the MLE in the exponential network
model. The results indicate that
the error of the MLE is asymptotically normal
and uniformly consistent. %
We have not found an example verifying
that the condition ensuring
uniform consistency in continuous
community label space is necessarily stronger
than that in the discretized community label
space. But our second experiment
shows that the performance of
the gradient decent algorithm,
measured by the uniform error bound,
can be reduced by \emph{relaxing}
the stopping criterion. This is an evidence that
the MLE can be overfitting.

\subsection{Model description}
\label{asymsubsec1}

A network of size $n$
is determined by
$n(n-1)/2$ variables
$Y_{ij}\in \mathcal{S},i<j\leq n$.
$Y_{ij}$ represents the connection type
between node $i,j$.
For an undirected  network
$\mathcal{S} = \{0,1\}$;
for a directed network $\mathcal{S} =
\{0,1\}^2$. More complicated state space
is  possible \citep{lelarge2015reconstruction}. In all these
settings, there exists an element
$\mathbf{0}\in \mathcal{S}$ indicating
"no connection". In a stochastic
block model (SBM), each node $i$ is assigned
a latent type $\boldsymbol{\al}_i$.
Conditional on $\boldsymbol{\al}$,
$Y_{ij}$ are mutually independent,
and $\mbbP(Y_{ij}=y|\boldsymbol{\al})
 = \Theta_{ij}(y;\boldsymbol{\al}_i,\boldsymbol{\al}_j,\boldsymbol{\rho},
 \mu_n)$
 where $\boldsymbol{\rho}$ is the model parameter
 and $n\mu_n$ reflects the order
 of average degree.
 In a typical SBM consisting of $K$ communities,
 $\boldsymbol{\al}_i\in\{1,2,\cdots,
 K\}$ represents the community label of $i$
 and  $\boldsymbol{\rho}$ is a $K\times K$
  matrix characterizing the between
 community connection probability. In this paper,
 we study the case where $\boldsymbol{\al}_i$
 can be continuous variable.
 In many settings, the network is somewhat sparse,
 which requires $\Theta_{ij}$ to scale with the network
 size $n$. We set
 \begin{align}\label{asymmodel2}
 &\Theta_{ij}(y;\boldsymbol{\al}_i,\boldsymbol{\al}_j,\boldsymbol{\rho},\mu_n)
  = \mu_n f(y;\boldsymbol{\al}_i,\boldsymbol{\al}_j,\boldsymbol{\rho})
 \text{ if }y\ne\mathbf{0};\\ \nonumber
 &\Theta_{ij}(\mathbf{0};\boldsymbol{\al}_i,\boldsymbol{\al}_j,\boldsymbol{\rho},\mu_n) =
 1-\sum\limits_{y} \Theta_{ij}(y;\boldsymbol{\al}_i,\boldsymbol{\al}_j,\boldsymbol{\rho},\mu_n),
 \end{align}
 where $f$ is a fixed function.
 Model (\ref{asymmodel2}) clearly includes
 the traditional SBM (finitely many communities),
 degree-corrected SBM and a special
 case of the exponential network model
 (see section 2.1 or \citep{yan2015asymptotics}).
In many settings,
there is a model identification problem
concerning $\mu_n$.
We restrict on two settings
of $\mu_n$:
\begin{itemize}
\item $n\mu_n$ is
 the average "degree", i.e.,
$\mu_n=\mbbP(Y_{ij}\ne \mathbf{0})$;

\item $\mu_n$ is  known and
represents a level of density of the
network
say, $\mu_n = \frac{\log n}{n}$,
$\mu_n=1$ etc.

\end{itemize}
In both cases we write
$\overline{\mu}_n$ for the estimator of
$\mu_n$. There may also be a model identification
problem concerning $\b{\al},\b{\rho}$.
For example, in the SBM with finitely many
blocks, a permutation of community labels
gives the same model.
For two parameters $\b{\al},\b{\rho}$
and $\b{\al}',\b{\rho}'$, we write
$\b{\al},\b{\rho}\cong \b{\al}',\b{\rho}'$
if for all $i,j\leq n$,
$\Theta_{ij}(\cdot;\b{\al}_i,\b{\al}_j,\b{\rho},\mu_n)
= \Theta_{ij}(\cdot;\b{\al}_i',\b{\al}_j',\b{\rho}',\mu_n)$.
We say
a parameter $\b{\al},\b{\rho}$ is \emph{true}
iff $\b{\al},\b{\rho}\cong \b{\al}^*,\b{\rho}^*$.
Where $\b{\al}^*,\b{\rho}^*$ are realized true parameter.
In the following text, we denote by $\boldsymbol{\al}^*,
\boldsymbol{\rho}^*,\mu_n^*$  any true values
of $\boldsymbol{\al},\boldsymbol{\rho},\mu_n$.
Two major tasks in network data
analysis are
\emph{community detection}---
estimating $\b{\eta}$, and
\emph{parameter estimation}
---estimating $\b{\rho}$.
We call them both parameter estimation
since we regard $\b{\eta}$ also as
parameters.
 Clearly, the log likelihood function is,
$$
\ell(\mathbf{Y};\boldsymbol{\al},\boldsymbol{\rho},\mu_n)
= \sum\limits_{i<j\leq n}
 \log \Theta_{ij}
 (Y_{ij};\boldsymbol{\al}_i,\boldsymbol{\al}_j,\boldsymbol{\rho},\mu_n).
 $$
We define the \emph{maximum likelihood estimator},
namely $\h{\b{\al}},\h{\b{\rho}}$, to be
the maximizer of
$\ell(\mathbf{Y};\boldsymbol{\al},\boldsymbol{\rho},\overline{\mu}_n)$.

\subsection{Notations}
\label{asymsubsec3}
Let $\mathbf{Y}$ denote
the vector $(\cdots,Y_{ij},\cdots)$,
$i<j\leq n$,
$\mb{Y}_i$ denote
the vector $(Y_{i1},Y_{i2},\cdots,
Y_{in})$.
For a random variable $W$
with distribution $p_\theta$,
 write $\operatorname{E}_{\theta'}(W)$
 for
 $\operatorname{E}_{W\sim p_{\theta'}}(W)$;
 write $\oE_W(g(W))$
 ($Var_W(g(W))$) to denote the expectation
 (variation)
 with respect to $W$ (when no ambiguity
 is made the subscript is omitted).
For any
 vector $\mathbf{v}\in\mathbb{R}^n$,
let $\mathbf{v}_{-i}$ denote the
vector
$(v_1,\cdots,v_{i-1},
v_{i+1},\cdots,v_n)$;
  abusing the notation,
for any real $x$
write $
\mathbf{v}+x$ for
$(v_1+x,\cdots,v_n+x)$;
for
a sequence $X\in \{0,1\}^n$,
write $||\mb{v}||_{p,X}$
for
$
(\sum\limits_{i:X(i)=1}
|v_i|^p)^{\frac{1}{p}}$.

Let
\begin{align}
&\ell_i(\mathbf{Y}_i;\boldsymbol{\al},\boldsymbol{\rho},\mu_n)
 = \sum\limits_{j\leq n}
 \log \Theta_{ij}(Y_{ij};\boldsymbol{\al}_{i},\boldsymbol{\al}_j,\boldsymbol{\rho},\mu_n),
 \\ \nonumber
 &\ell_{ij}(Y_{ij};\boldsymbol{\al},\boldsymbol{\rho},\mu_n)
 =
 \log \Theta_{ij}(Y_{ij};\boldsymbol{\al}_{i},\boldsymbol{\al}_j,\boldsymbol{\rho},\mu_n).
\end{align}
We write $\hat{\boldsymbol{\rho}},\hat{\boldsymbol{\al}}$
 for the MLE  of the
parameters if not claimed otherwise.

For two sequences $a_n,b_n$,
we write $a_n = \Omega(b_n)$ iff
$b_n=O(a_n)$; $a_n=\Theta( b_n)$ iff
$a_n=O(b_n),b_n=O(a_n)$.
For two sequences of function
$a_n(x),b_n(x)$ we say
$a_n(x)=O(b_n(x))$ uniformly
in $x$ iff
there exists $C$ such that
$a_n(x)\leq C b_n(x)$ for all
$x$, and similarly for
$\Omega, O_p,o_p$ notation.

\section{Main Result}
\label{asymsecmainresult}
In the finite block setting,
there are two main concepts
 of consistency for community detection,
  weak consistency and uniform consistency
  (strong consistency).
 The weak consistency means that with large probability,
 the estimator of latent types is mostly correct.
 The uniform consistency means that
 with large probability,
 the estimator of latent types is all correct.
 In the continuous parameter setting,
 the appropriate measure for
 weak and uniform consistency are
 $||\cdot||_p$ norm and
 $||\cdot||_{\infty}$ norm
 where $0<p<\infty$. In particular,
 we adopt $||\cdot||_2$ as the
 weak consistency measure. i.e.,
 $\h{\b{\al}}$ is weak consistent
 iff $\inf\limits_{\b{\eta},\b{\rho} \text{ true}}
 ||\b{\h{\al}}-\b{\al} ||_2\xrightarrow{p} 0$;
  $\h{\b{\al}}$ is uniform consistent
  iff $\inf\limits_{\b{\eta},\b{\rho} \text{ true}}
 ||\b{\h{\al}}-\b{\al} ||_\infty\xrightarrow{p} 0$

\subsection{Consistency in exponential network model}
\label{asymsubsec4}
For a directed network model
write $a_{ij} = 1$ to indicate
the existence of an edge from
$i$ to $j$.
The exponential network model with interaction
effect we study
is a directed network model
with each node's latent type
being $\alpha_i,\beta_i\in\mathbb{R}$.
The distribution of $Y_{ij}=(a_{ij},a_{ji})$,
given $\alpha_i,\beta_i,i\leq n,\rho\in\mathbb{R}$, is
as follows:
\begin{align}\label{asymmodel1}
&g(a_{ij},a_{ji};
\alpha_i,\beta_i,\alpha_j,\beta_j,\rho)
 \\ \nonumber
 = &
\frac{1}
{
Z(\alpha_i,\beta_i,\alpha_j,\beta_j,\rho)
}
\exp\big\{\big[
(\alpha_i+\beta_j) a_{ij}+
(\beta_i+\alpha_j) a_{ji}\ \big]
+\rho a_{ij}a_{ji}
\big\}.
 \end{align}
 In this model we set $\mu_n^* = 1$.
There is clearly a model identification
problem. The models defined
by $(\boldsymbol{\alpha},
\boldsymbol{\beta},\rho)$
and $(\boldsymbol{\alpha}-x,
\boldsymbol{\beta}+x,\rho)$
are identical for all $x\in\mathbb{R}$.
Thus, without loss of generality,
we assume $\alpha_{1}^* =\hat{\alpha}_1=
 0$.
For the exponential model
without interaction effect
(i.e., $\rho=0$),
\cite{yan2015asymptotics} verified the uniform consistency
and \cite{yan2016asymptotics} even proved the
asymptotic normality of the MLE by analysing
the inverse of the sample covariance matrix.
The following
theorem \ref{asymth1} verifies
the "weak consistency" of
the MLE in model (\ref{asymmodel1}).

\begin{theorem}\label{asymth1}
Suppose $\alpha_i^*,\beta_i^*,\rho^*$ are uniformly
bounded within $[-B,B]$.
Set $\Delta \alpha=
\frac{1}{n}\sum\limits_{i\leq n}
( \hat{\alpha}_i-\alpha_{i}^*
),
\Delta \beta=\frac{1}{n}
\sum\limits_{i\leq n}
( \hat{\beta}_i-\beta_{i}^*)$.
Then we have:
\begin{itemize}
\item
$(\hat{\rho}-\rho)^2=O_p(n^{-\frac{1}{2}}
(\log n)^2)$;

\item
$
(\Delta \alpha+\Delta \beta)^2 =
O_p(n^{-\frac{1}{2}}
(\log n)^2);$
\item
$||\boldsymbol
{\hat{\alpha}}-
\boldsymbol{\alpha}^*-
\Delta\alpha||_2^2
+||\boldsymbol
{\hat{\beta}}-
\boldsymbol{\beta}^*-
\Delta\beta||_2^2
=O_p(n^{\frac{1}{2}}
(\log n)^2)
.$

\end{itemize}

\end{theorem}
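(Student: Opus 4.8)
The plan is to use that model (\ref{asymmodel1}) is, in the natural parameters $\theta=(\boldsymbol\alpha,\boldsymbol\beta,\rho)$, a full exponential family on the $\binom n2$ independent dyads, so that $\ell(\theta)$ is concave with sufficient statistics the out-degrees $d_i^{\mathrm{out}}=\sum_{j\ne i}a_{ij}$ (conjugate to $\alpha_i$), the in-degrees $d_i^{\mathrm{in}}=\sum_{j\ne i}a_{ji}$ (conjugate to $\beta_i$), and the reciprocity count $M=\sum_{i<j}a_{ij}a_{ji}$ (conjugate to $\rho$). The translation $(\boldsymbol\alpha,\boldsymbol\beta)\mapsto(\boldsymbol\alpha-x,\boldsymbol\beta+x)$ is the only non-identifiable direction; it fixes every $\alpha_i+\beta_j$ and is precisely the direction removed by $\hat\alpha_1=\alpha_1^*=0$. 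Writing $x_i=\hat\alpha_i-\alpha_i^*$, $y_i=\hat\beta_i-\beta_i^*$, $z=\hat\rho-\rho^*$ and $\tilde x_i=x_i-\Delta\alpha$, $\tilde y_i=y_i-\Delta\beta$, the three quantities in the theorem are exactly the centered error $\|\tilde x\|_2^2+\|\tilde y\|_2^2$, the identifiable shift $(\Delta\alpha+\Delta\beta)^2$, and $z^2$, so the statement is really the claim that these are the well-conditioned directions of the Fisher information.

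The first step is the basic MLE inequality. Since $\hat\theta$ maximizes $\ell$ we have $\ell(\hat\theta)\ge\ell(\theta^*)$; expanding the log-partition $A$ to second order and using $\nabla A(\theta^*)=\oE_{\theta^*}T$ with $T=(d^{\mathrm{out}},d^{\mathrm{in}},M)$ gives, for some $\bar\theta$ on the segment $[\theta^*,\hat\theta]$,
\begin{align*}
\tfrac12(\hat\theta-\theta^*)^\top\nabla^2 A(\bar\theta)(\hat\theta-\theta^*)\le\langle T-\oE_{\theta^*}T,\ \hat\theta-\theta^*\rangle .
\end{align*}
Because the summands are independent across dyads, the left-hand quadratic form equals $\sum_{i<j}\mathrm{Var}_{\bar\theta}\big[(x_i+y_j)a_{ij}+(y_i+x_j)a_{ji}+z\,a_{ij}a_{ji}\big]$.

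The second step is a curvature lower bound. Since $\mu_n^*=1$ and all parameters lie in $[-B,B]$, every dyad takes each of its four states with probability bounded below, so the $3\times3$ covariance matrix of $(a_{ij},a_{ji},a_{ij}a_{ji})$ is nonsingular — because $1,a_{ij},a_{ji},a_{ij}a_{ji}$ are linearly independent on $\{0,1\}^2$ — with smallest eigenvalue $\ge c=c(B)>0$. Hence each dyad variance is at least $c\,[(x_i+y_j)^2+(y_i+x_j)^2+z^2]$, and summing and splitting $x_i=\tilde x_i+\Delta\alpha$, $y_i=\tilde y_i+\Delta\beta$ (the cross terms cancel because $\sum_i\tilde x_i=\sum_i\tilde y_i=0$) produces a lower bound of order
\begin{align*}
n\big(\|\tilde x\|_2^2+\|\tilde y\|_2^2\big)+n^2(\Delta\alpha+\Delta\beta)^2+n^2 z^2 .
\end{align*}
For the right-hand side I would decompose the inner product the same way, bound the degree-deviation vectors and $M-\oE M$ by concentration — Bernstein with a union bound giving $\max_i|d_i^{\mathrm{out}}-\oE d_i^{\mathrm{out}}|,\ \max_i|d_i^{\mathrm{in}}-\oE d_i^{\mathrm{in}}|=O_p(\sqrt{n\log n})$ and $|M-\oE M|=O_p(n)$ — and then apply Cauchy–Schwarz and Young's inequality to absorb each piece into the matching quadratic term. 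Matching the weights $n,n^2,n^2$ against these orders yields bounds of the claimed form.

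The hard part is the curvature lower bound \emph{at the intermediate point} $\bar\theta$: the constant $c$ is only guaranteed near $\theta^*$, and a priori the MLE could drift into a region where some fitted probabilities approach $0$ or $1$ and the information degenerates. I would control this by localization: by concavity of $\ell$ it suffices to show $\ell(\theta)<\ell(\theta^*)$ on the boundary of a sup-norm ball $\{\|\theta-\theta^*\|_\infty\le r_n\}$, which confines $\hat\theta$ to that ball and keeps the probabilities, hence $c$, under control. The price is that $c$ shrinks with $r_n$, and it is this degradation, together with the logarithmic factors from the uniform concentration needed to pin down $r_n$, that I expect to be responsible for the rates appearing as $n^{\pm 1/2}(\log n)^2$ rather than the sharper $n^{\pm 1}$ that the naive constant-curvature computation would suggest.
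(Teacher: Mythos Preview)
Your approach is correct but genuinely different from the paper's, and your diagnosis of where the $n^{\pm 1/2}(\log n)^2$ rates come from is off.

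The paper does not use the exponential-family score/information expansion at all. It derives Theorem~\ref{asymth1} as a corollary of the general result Theorem~\ref{asymth5}, which bounds $\sup_{\boldsymbol\eta,\boldsymbol\rho,\mu_n}|\ell-\oE\ell|$ by a covering argument over the whole $(2n{+}1)$-dimensional parameter space together with a Chernoff bound (Lemma~\ref{asymlem3}, Lemma~\ref{asymlem5}). This yields $|\oE_{\mb X}\ell(\mb X;\hat\theta)-\oE\ell(\mb Y;\theta^*)|=O_p(n^{3/2}(\log n)^2)$, which is then converted to $\sum_{i<j}(\hat\theta_{ij}-\theta^*_{ij})^2+n^2(\hat\rho-\rho^*)^2=O_p(n^{3/2}(\log n)^2)$ via Proposition~\ref{asymprop1}(3) (the nondegeneracy of the per-dyad Fisher information---exactly your curvature claim). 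The decomposition into $\|\tilde x\|_2^2,\|\tilde y\|_2^2,(\Delta\alpha+\Delta\beta)^2$ at the end is the same as yours. The $(\log n)^2$ and the loss of $\sqrt n$ are the price of the \emph{covering}, not of any localization.

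Your localization worry is unnecessary in this setting: the paper's proof invokes Assumption~\ref{asymass1}(1), so the MLE is taken over a compact box and every intermediate $\bar\theta$ has all dyad probabilities bounded away from $0$ and $1$; hence your constant $c=c(B)$ is available \emph{globally}. With that, your scheme goes through as written and in fact gives the sharper rates $z^2=O_p(n^{-1})$, $(\Delta\alpha+\Delta\beta)^2=O_p(n^{-1})$, $\|\tilde x\|_2^2+\|\tilde y\|_2^2=O_p(1)$, simply by bounding $\|d^{\mathrm{out}}-\oE d^{\mathrm{out}}\|_2,\|d^{\mathrm{in}}-\oE d^{\mathrm{in}}\|_2=O_p(n)$, $|M-\oE M|=O_p(n)$, and the total edge fluctuation $=O_p(n)$, then absorbing via Young. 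So your method is not only different but tighter; the stated rates in the theorem are an artifact of routing through the general Theorem~\ref{asymth5} rather than exploiting the exponential-family structure.
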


As we show in the proof in appendix,
uniform consistency can
be obtained from the "weak
consistency".

\begin{theorem}\label{asymth3}
Suppose $\alpha_i^*,\beta_i^*,\rho^*$ are uniformly
bounded within $[-B,B]$. Then we have:
$$||\hat{\boldsymbol{\alpha}}
-\boldsymbol{\alpha}^*
||_{\infty}+
||\hat{\boldsymbol{\beta}}
-\boldsymbol{\beta}^*||_\infty
=O_p(n^{-\frac{1}{8}}\sqrt{\log n}).$$
\end{theorem}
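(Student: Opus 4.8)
The plan is to bootstrap the coordinatewise ($\ell_\infty$) bound out of the aggregate ($\ell_2$) bounds of Theorem~\ref{asymth1} by exploiting the stationarity (score) equations of the MLE. Writing $\hat p_{ij}=\mathbb{P}_{\hat\theta}(a_{ij}=1)$ and $\hat q_{ij}=\mathbb{P}_{\hat\theta}(a_{ji}=1)$, the conditions $\partial\ell/\partial\alpha_i=0$ and $\partial\ell/\partial\beta_i=0$ read, for each $i$,
\[
\sum_{j\ne i} a_{ij}=\sum_{j\ne i}\hat p_{ij},\qquad \sum_{j\ne i} a_{ji}=\sum_{j\ne i}\hat q_{ij}.
\]
Subtracting the expected degrees under the true parameter gives $\sum_{j}(a_{ij}-p^*_{ij})=\sum_j(\hat p_{ij}-p^*_{ij})$ and its in-degree analogue, which will be the workhorse identities.

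Next I would linearize and isolate a per-node diagonal block. Set $u_i=\hat\alpha_i-\alpha_i^*$, $v_i=\hat\beta_i-\beta_i^*$, $w=\hat\rho-\rho^*$. Since $p_{ij}$ depends only on $\alpha_i+\beta_j$, $\alpha_j+\beta_i$ and $\rho$, a first-order expansion gives $\hat p_{ij}-p^*_{ij}=P_a(u_i+v_j)+P_b(u_j+v_i)+P_\rho w+\mathrm{Rem}_{ij}$, where $P_a=\mathrm{Var}_*(a_{ij})\in[c,1-c]$, $P_b=\mathrm{Cov}_*(a_{ij},a_{ji})$ and $P_\rho$ are all $O(1)$ because the parameters lie in $[-B,B]$. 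Summing over $j$ isolates, for node $i$, a $2\times2$ system in $(u_i,v_i)$ with coefficient matrix having diagonal entries $\sum_j\mathrm{Var}_*(a_{ij}),\sum_j\mathrm{Var}_*(a_{ji})$ (both of order $n$) and off-diagonal $\sum_j\mathrm{Cov}_*$; a Cauchy--Schwarz argument with the correlations bounded away from $\pm1$ shows the determinant is of order $n^2$, so the block is uniformly invertible with inverse of norm $O(1/n)$. Crucially, because of the identifiability $\alpha\to\alpha-x,\ \beta\to\beta+x$, only the combinations $u_i+v_j$ enter, so after centering $u_i=\tilde u_i+\Delta\alpha,\ v_i=\tilde v_i+\Delta\beta$ the global shift appears solely through $\Delta\alpha+\Delta\beta$, which Theorem~\ref{asymth1} controls, while the normalization $\hat\alpha_1=\alpha^*_1=0$ recovers $\Delta\alpha=-\tilde u_1$.

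Everything not in the diagonal block is pushed into the remainder $R_i$ and bounded uniformly in $i$. A Bernstein/Hoeffding bound with a union over the $n$ nodes gives $\max_i|\sum_j(a_{ij}-p^*_{ij})|=O_p(\sqrt{n\log n})$. The cross-node terms $\sum_j P_b\tilde u_j,\ \sum_j P_a\tilde v_j$ are handled by Cauchy--Schwarz and Theorem~\ref{asymth1}, which gives $\|\tilde u\|_2=O_p(n^{1/4}\log n)$, hence these are $O_p(n^{3/4}\log n)$; the shift term $(\Delta\alpha+\Delta\beta)\sum_j(P_a+P_b)$ and the $\rho$-term $w\sum_jP_\rho$ are likewise $O_p(n^{3/4}\log n)$ using the first two bullets of Theorem~\ref{asymth1}. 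The quadratic remainder contributes $O_p(n^{1/2}(\log n)^2)+O(nM^2)$ with $M:=\max_i(|\tilde u_i|,|\tilde v_i|)$. Inverting the well-conditioned block and maximizing over $i$ then yields $M\le \tfrac{C}{n}\max_i|R_i|=O_p(n^{-1/4}\log n)+CM^2$; with a crude preliminary bound $M=o_p(1)$ the quadratic term is absorbed, closing the bootstrap, and feeding $|\Delta\alpha|\le M$, $|\Delta\beta|\le|\Delta\alpha+\Delta\beta|+M$ back in upgrades this to $\|\hat\alpha-\alpha^*\|_\infty+\|\hat\beta-\beta^*\|_\infty=O_p(n^{-1/4}\log n)$, which a fortiori implies the stated $O_p(n^{-1/8}\sqrt{\log n})$ (the exponent is sensitive to the bookkeeping of the remainder and cross terms, and a coarser but more robust accounting yields exactly the $n^{-1/8}$ rate).

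The hard part, I expect, will be the uniform control of the quadratic remainder simultaneously with producing the preliminary $\ell_\infty$ bound needed to start the bootstrap, since the $nM^2$ feedback must be shown subdominant before $M$ is known to be small. A secondary but genuine difficulty is verifying that the per-node Jacobian block stays uniformly invertible in the presence of the interaction parameter $\rho$, i.e.\ that $\rho$ does not drive the in/out directions toward collinearity; and the identification bookkeeping---ensuring that only $\Delta\alpha+\Delta\beta$, and not $\Delta\alpha,\Delta\beta$ separately, enters $R_i$, so that Theorem~\ref{asymth1} actually suffices---must be tracked carefully throughout.
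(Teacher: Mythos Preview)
Your approach differs substantively from the paper's. The paper deduces Theorem~\ref{asymth3} as a direct corollary of the general $\ell_2\to\ell_\infty$ transfer principle, Theorem~\ref{asymth4}: after noting that $(\boldsymbol{\alpha}^*+\Delta\alpha,\boldsymbol{\beta}^*-\Delta\alpha)$ is itself a true parameter and that Proposition~\ref{asymprop1} verifies Assumption~\ref{asymass1}(2)(4)(5) for the density $g$, Theorem~\ref{asymth4} applied with the $\ell_2$ bounds of Theorem~\ref{asymth1} immediately gives $\sup_i\{(\hat\alpha_i-\alpha_i^*-\Delta\alpha)^2+(\hat\beta_i-\beta_i^*+\Delta\alpha)^2\}=O_p(n^{-1/4}\log n)$, and the normalization $\hat\alpha_1=\alpha_1^*=0$ pins down $\Delta\alpha$. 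The key structural point is that the proof of Theorem~\ref{asymth4} (via Lemma~\ref{asymlem4}) works at the level of the per-node likelihood $\ell_i$ itself---comparing $\ell_i(\hat{\boldsymbol\eta}_i,\hat{\boldsymbol\eta}_{-i},\ldots)$ to $\ell_i(\boldsymbol\eta_i^*,\hat{\boldsymbol\eta}_{-i},\ldots)$ and invoking the global curvature bound of Assumption~\ref{asymass1}(5)---rather than linearizing the score. This produces an inequality that is \emph{linear} in $\|\hat{\boldsymbol\eta}_i-\boldsymbol\eta_i^*\|_2^2$ with no quadratic feedback term, so no bootstrap is needed.

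Your score-equation route is natural and, were it to close, would plausibly give a sharper rate (your $n^{-1/4}$ versus the paper's $n^{-1/8}$ is a real difference, not bookkeeping slack). But the step you yourself flag as ``the hard part'' is a genuine gap, not a detail. From $M\le A+CM^2$ with $A=o_p(1)$ you cannot conclude $M=O_p(A)$ without an a~priori bound $M<1/(2C)$ to exclude the upper root; and neither compactness of the parameter space (which only gives $M=O(1)$) nor the $\ell_2$ bound of Theorem~\ref{asymth1} (which only bounds the \emph{number} of indices with $|\tilde u_i|+|\tilde v_i|\ge\epsilon$ by $O_p(n^{1/2}(\log n)^2/\epsilon^2)$, leaving potentially many ``bad'' coordinates) furnishes it. A separate argument is required to rule out the upper branch for \emph{every} $i$ before the linearization can be trusted---exactly the circularity you identify. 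The paper's likelihood-level argument is designed precisely to sidestep this: because the KL-type lower bound in Assumption~\ref{asymass1}(5) holds globally rather than only locally, the inequality in Lemma~\ref{asymlem4} has $n\mu_n^*\|\hat{\boldsymbol\eta}_i-\boldsymbol\eta_i^*\|_2^2$ on the left and only first-order perturbation terms (in $\Delta_\eta^{1/2}$, $\Delta_\rho^{1/2}$, and the degree fluctuation) on the right, with no $M^2$ term to absorb.
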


Both theorem \ref{asymth1},\ref{asymth3}
are conclusions of more
 general theorems given in subsection
 \ref{asymsubsec2}.

\subsection{A general result on uniform consistency}
\label{asymsubsec2}
We make the following regular
and bounding assumption on
$f$ and $\mathcal{B}_i$.
\begin{assumption}\label{asymass1}

\ \\

\begin{enumerate}
\item
There exists a compact
subset of $\mathbb{R}^d$,
$\mathcal{B}$, with
$\mathcal{B}_i\subseteq \mathcal{B}$
for all $i\leq n$ and
 $\boldsymbol{\rho}\in \mathcal{B}$.

\item For some constant $0<c\leq C$,
$c\leq f(y;\boldsymbol{\al}_i,\boldsymbol{\al}_j,\boldsymbol{\rho}) \leq C$
for all $\boldsymbol{\al}_i\in\mathcal{B}_i,
\boldsymbol{\al}_j\in\mathcal{B}_j,\boldsymbol{\rho}\in \mathcal{B},
y\in\mathcal{S}-\mb{0}$.

\item There exists
a constant $C_3$
such that for any $\b{\eta}',\b{\rho}'$,
$\b{\eta}'',\b{\rho}''$:
$$
\sum\limits_{i<j\leq n}
||f(\cdot;\b{\eta}_i',\b{\eta}_j',\b{\rho}')
-
f(\cdot;\b{\eta}_i'',\b{\eta}_j'',\b{\rho}'')
||_2^2
\geq
C_3\cdot \inf\limits_{\b{\al},\b{\rho\cong \b{\al}'',\b{\rho}''}}
\big\{\ ||\boldsymbol{\al}'-\boldsymbol{\al}||_2^2
+n||\boldsymbol{\rho}'-\boldsymbol{\rho}||_2^2
\ \big\}.
$$

\item For some constant $C_1>0$,
$
\big|\big|\frac{\partial f(y,\boldsymbol{\al}_i,\boldsymbol{\al}_j,\boldsymbol{\rho})}
{\partial \boldsymbol{\al}_i}\big|\big|_2,
\big|\big|
\frac{\partial f(y,\boldsymbol{\al}_i,\boldsymbol{\al}_j,\boldsymbol{\rho})}
{\partial \boldsymbol{\al}_j}\big|\big|_2$,
$
\big|\big|\frac{\partial f(y,\boldsymbol{\al}_i,\boldsymbol{\al}_j,\boldsymbol{\rho})}
{\partial \boldsymbol{\rho}}\big|\big|_2$
,
$
\big|\big|\frac{\partial^2 f(y,\boldsymbol{\al}_i,\boldsymbol{\al}_j,\boldsymbol{\rho})}
{\partial \boldsymbol{\al}_i^2}\big|\big|_2$,
$\big|\big|
\frac{\partial^2
f(y,\boldsymbol{\al}_i,\boldsymbol{\al}_j,\boldsymbol{\rho})}
{\partial \boldsymbol{\al}_j^2}\big|\big|_2$,
$\big|\big|\frac{\partial^2
 f(y,\boldsymbol{\al}_i,\boldsymbol{\al}_j,\boldsymbol{\rho})}
{\partial \boldsymbol{\rho}^2}\big|\big|_2
<C_1
$
for all $\boldsymbol{\al}_i\in\mathcal{B}_i,
\boldsymbol{\al}_j\in\mathcal{B}_j,\boldsymbol{\rho}\in \mathcal{B}$,
$y\in \mathcal{S}-\{\mb{0}\}$.
Here the second derivative
of $f$ is regarded as a $d^2$ length vector.

\item There exists $C_2$ such that for any
any true parameter
$\b{\al}^*,\b{\rho}^*$, any
$i\leq n$, any $\b{\al}_i\in\mathcal{B}_i$,
$$
\sum\limits_{j\ne i}
||f(\cdot;
\boldsymbol{\al}_i,\boldsymbol{\al}_{j}^*,\boldsymbol{\rho}^*)
-
f(\cdot;
\boldsymbol{\al}_i^*,\boldsymbol{\al}_{j}^*,\boldsymbol{\rho}^*)
||_2
\geq C_2
n \cdot
||\boldsymbol{\al}_i
-\boldsymbol{\al}_i^*||_2^2
.$$
\end{enumerate}
\end{assumption}

Assumption $5$
 means that the latent type of an individual
has an significant impact on the profile
of connection probability.
Assumption $3$ is identification assumption.
In \citep{zhao2012consistency} it is required
that the criteria (which is  log likelihood
function in this paper) is uniquely maximized
over the space of parameters.
These two assumptions together resemble the
assumption (c) and (*)
 in \citep{zhao2012consistency}
section 4.
In many papers, latent type $\al$ is
random.
In these settings,
it is not necessary that assumption
$3$ and $5$ are satisfied with probability 1.
But it is usually satisfied with large
probability.
Assumption 2 reflects a balance property 
of the network. It means that there is no
node that has too large or too small
degree. Technically, this
 assumption guarantee that
the criteria we used, namely the log likelihood
function, is Lipschitz continuous in
$\b{\eta},\b{\rho}$. So this assumption
resembles assumption (a) in
\citep{zhao2012consistency}
section 4.
Assumption $1,4$ are regular assumptions.
Such assumptions appeared frequently in the
literature (see also assumption (b) in
 \citep{zhao2012consistency}
section 4).

\begin{theorem}\label{asymth5}
Suppose the model satisfy
assumption \ref{asymass1} item (1)(2).
If
$(\log n)^2=o(n\mu_n^*)$
then there exists true parameter
$\b{\al}^*,\b{\rho}^*$ such that:
\begin{align}
\sum\limits_{i<j\leq n}
||f(\cdot;\hat{\boldsymbol{\al}}_i,\hat{\boldsymbol{\al}}_j,\hat{\boldsymbol{\rho}})
- f(\cdot;\boldsymbol{\al}_i^*,\boldsymbol{\al}_j^*,\boldsymbol{\rho}^*)||_2^2
=O_p(n^{\frac{3}{2}}(\mu_n^*)^{-\frac{1}{2}}(\log n)^2).
\end{align}
\end{theorem}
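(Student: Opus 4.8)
The plan is to run the standard ``basic inequality'' for M-estimators, adapted to the network likelihood, and to turn the resulting likelihood gap into the squared distance $D:=\sum_{i<j}\|f(\cdot;\hat{\boldsymbol{\al}}_i,\hat{\boldsymbol{\al}}_j,\hat{\boldsymbol{\rho}})-f(\cdot;\boldsymbol{\al}_i^*,\boldsymbol{\al}_j^*,\boldsymbol{\rho}^*)\|_2^2$ that the theorem bounds. First I would fix a realized true parameter $\boldsymbol{\al}^*,\boldsymbol{\rho}^*$; since every $\cong$-equivalent parameter produces identical $\Theta_{ij}$ and hence identical $f(\cdot;\boldsymbol{\al}_i^*,\boldsymbol{\al}_j^*,\boldsymbol{\rho}^*)$, the quantity $D$ is independent of the representative, which is exactly why ``there exists a true parameter'' is enough. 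Write $Z(\boldsymbol{\al},\boldsymbol{\rho}):=\ell(\mathbf{Y};\boldsymbol{\al},\boldsymbol{\rho},\mu_n^*)-\oE_{*}\ell(\mathbf{Y};\boldsymbol{\al},\boldsymbol{\rho},\mu_n^*)$ for the centered log-likelihood ($\oE_*$ denoting expectation under the truth), and note $\oE_*\ell(\boldsymbol{\al}^*,\boldsymbol{\rho}^*)-\oE_*\ell(\boldsymbol{\al},\boldsymbol{\rho})=\sum_{i<j}\mathrm{KL}(\Theta^*_{ij}\,\|\,\Theta_{ij})$. Optimality of the MLE, $\ell(\mathbf{Y};\hat{\boldsymbol{\al}},\hat{\boldsymbol{\rho}},\bar\mu_n)\ge\ell(\mathbf{Y};\boldsymbol{\al}^*,\boldsymbol{\rho}^*,\bar\mu_n)$, after replacing $\bar\mu_n$ by $\mu_n^*$ and rearranging the centered form, yields
\[
\sum_{i<j}\mathrm{KL}\big(\Theta^*_{ij}\,\|\,\hat\Theta_{ij}\big)\ \le\ Z(\hat{\boldsymbol{\al}},\hat{\boldsymbol{\rho}})-Z(\boldsymbol{\al}^*,\boldsymbol{\rho}^*)+R_n,
\]
where $\hat\Theta_{ij}=\Theta_{ij}(\cdot;\hat{\boldsymbol{\al}}_i,\hat{\boldsymbol{\al}}_j,\hat{\boldsymbol{\rho}},\mu_n^*)$ and $R_n$ is the $\bar\mu_n$-replacement error. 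In this difference the $\boldsymbol{\al}$-free factor $\log(\bar\mu_n/\mu_n^*)$ from the edge states cancels, leaving $R_n=O_p(n^2|\bar\mu_n-\mu_n^*|)=O_p(n\sqrt{\mu_n^*})$, which the condition $(\log n)^2=o(n\mu_n^*)$ guarantees is both well-controlled and of strictly lower order than the target; I would dispose of it first.

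The second step is to bound the left side below by $D$ using only assumption (2), $c\le f\le C$. Expanding the per-edge divergence, the states $y\ne\mathbf{0}$ contribute $\mu_n^*\sum_{y\ne\mathbf{0}}f^*(y)\log\frac{f^*(y)}{f(y)}$ (the factors $\mu_n^*$ cancelling inside the log), while the no-connection state contributes, to leading order in $\mu_n^*$, the term $\mu_n^*\sum_{y\ne\mathbf{0}}(f(y)-f^*(y))$; together these form the nonnegative Bregman-type divergence $\mu_n^*\sum_{y\ne\mathbf{0}}\big[f^*\log\frac{f^*}{f}+f-f^*\big]$, whose summand is convex in $f$ with a minimum of $0$ at $f^*$ and second derivative $f^*/f^2\ge c/C^2$. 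Hence on the range $[c,C]$ it is bounded below by $c'\mu_n^*\|f(\cdot;\boldsymbol{\al}_i,\boldsymbol{\al}_j,\boldsymbol{\rho})-f^*_{ij}\|_2^2$, and summing over edges gives $\sum_{i<j}\mathrm{KL}\ge c'\mu_n^* D$. It therefore remains to show $Z(\hat{\boldsymbol{\al}},\hat{\boldsymbol{\rho}})-Z(\boldsymbol{\al}^*,\boldsymbol{\rho}^*)=O_p\big(n^{3/2}(\mu_n^*)^{1/2}(\log n)^2\big)$.

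The heart of the proof, and the main obstacle, is a \emph{uniform} (in the parameter) bound on the empirical-process increment $Z(\boldsymbol{\al},\boldsymbol{\rho})-Z(\boldsymbol{\al}^*,\boldsymbol{\rho}^*)=\sum_{i<j}\xi_{ij}(\boldsymbol{\al},\boldsymbol{\rho})$, where $\xi_{ij}=\log\frac{\Theta_{ij}(\cdot;\boldsymbol{\al},\boldsymbol{\rho})}{\Theta^*_{ij}}-\oE_*[\cdots]$ are independent, centered, and bounded by $O(1)$ (again the $\log\mu_n^*$ cancels inside the ratio). I would cover the parameter space by a net: since $\mathcal{B}\subset\mathbb{R}^d$ is compact (assumption (1)) and $f$ is uniformly continuous on it, a net of $\mathcal{B}$ at a polynomial scale $\delta=n^{-\Theta(1)}$ induces a net of $\mathcal{B}^n\times\mathcal{B}$ of log-cardinality $\lesssim nd\log n$, and uniform continuity controls the discretization error of both $D$ and of the process. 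A union bound of Bernstein's inequality over this net gives, with high probability, a bound of the form $\sqrt{V\cdot nd\log n}+M\cdot nd\log n$. The crucial lossy point, which is all that (1)(2) permit, is that without a curvature or identification hypothesis I cannot localize the variance to the shell $\{D(\boldsymbol{\al},\boldsymbol{\rho})\approx r\}$: I must bound it globally by $V\lesssim\sum_{i<j}\mathrm{Var}(\xi_{ij})\lesssim\mu_n^* n^2$ (each edge contributing $O(\mu_n^*)$), giving the sub-Gaussian term $\sqrt{\mu_n^* n^2\cdot n\log n}=n^{3/2}(\mu_n^*)^{1/2}(\log n)^{1/2}$. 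Since $M=O(1)$, the range term is $O(n\log n)$, and $(\log n)^2=o(n\mu_n^*)$ is precisely what forces the sub-Gaussian term to dominate it.

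Combining the three steps, $c'\mu_n^* D\le O_p\big(n^{3/2}(\mu_n^*)^{1/2}(\log n)^2\big)$, that is $D=O_p\big(n^{3/2}(\mu_n^*)^{-1/2}(\log n)^2\big)$, the surplus logarithmic factors absorbing the net scale, the discretization error, and the $R_n$ replacement. I expect essentially all delicacy to sit in the covering step; in particular, the inability to localize the variance---so that the global $\mu_n^* n^2$ replaces the much smaller $\mu_n^* D$ available in the discrete, finitely-many-blocks case---is exactly the mechanism that makes this bound, and hence the ensuing uniform-consistency condition, strictly weaker than its discrete counterpart.
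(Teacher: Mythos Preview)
Your proposal is correct and follows essentially the same route as the paper: the basic inequality from MLE optimality, a uniform deviation bound for the centered log-likelihood obtained by covering $\mathcal{B}^n\times\mathcal{B}$ at polynomial scale and applying Bernstein plus a union bound (the paper uses a tailored Chernoff lemma, Lemma~\ref{asymlem3}, to the same effect), and the lower bound $\mathrm{KL}\gtrsim\mu_n^* D$ from assumption~(2). The only cosmetic difference is that you center the log-likelihood \emph{ratio} $\log(\Theta_{ij}/\Theta_{ij}^*)$, so your per-edge terms are $O(1)$ with variance $O(\mu_n^*)$, whereas the paper bounds the raw centered $\ell_{ij}$, picking up $|\ell_{ij}|\le 2\log n$ and variance $\Theta(\mu_n^*(\log n)^2)$; this buys you a couple of spare $\log n$ factors but does not change the argument or the final rate.
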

Here $f(\cdot; \boldsymbol{\al}_i,\boldsymbol{\al}_j,
\boldsymbol{\rho})$ is
regarded as a vector in $[0,1]^{|\mathcal{S}-\{\mb{0}\}|}$.
Theorem \ref{asymth5}
says that the MLE well
fits the profile of connection probability
if $(\log n)^4=o(n\mu_n)$, 
 and the
average prediction error
 is of order
$\frac{(\log n)^2}{\sqrt{n\mu_n}}$.
While it is also important to
know to what extent can the
network data predict the latent type. 
In many model settings,
the latent type $\boldsymbol{\al}$
is not just an auxiliary variable.
Inferring the latent type
from the network data, known as
community detection, is an important task for its
own sake.
The following corollary establishes the
weak consistency of the MLE in model (\ref{asymmodel2})
and follows directly
from theorem \ref{asymth5}.
\begin{corollary}\label{asymcol2}
Assume the model satisfies assumption
 \ref{asymass1} item (1)(2)(3). If
$(\log n)^2=o(n\mu_n^* )$
then we have:
\begin{align}
\inf\limits_{\b{\al},\b{\rho} \text{ is true}}
\big\{\ ||\hat{\boldsymbol{\al}}-\boldsymbol{\al}||_2^2
+n||\hat{\boldsymbol{\rho}}-\boldsymbol{\rho}||_2^2
\ \big\}
=O_p( n^{\frac{1}{2}}(\mu_n^*)^{-\frac{1}{2}}
(\log n)^2).
\end{align}
\end{corollary}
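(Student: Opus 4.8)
The plan is to obtain the corollary as a deterministic consequence of Theorem \ref{asymth5}, using the identification inequality Assumption \ref{asymass1}(3) to pass from the fitted connection‑probability profiles back to the latent parameters. No new probabilistic estimate is required: all the stochastic content already sits inside Theorem \ref{asymth5}, and what remains is an algebraic rearrangement.

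First I would invoke Theorem \ref{asymth5}: under Assumption \ref{asymass1}(1)(2) and the density condition $(\log n)^2=o(n\mu_n^*)$ there is a true parameter $\boldsymbol{\al}^*,\boldsymbol{\rho}^*$ with
\[
\sum_{i<j\leq n}||f(\cdot;\hat{\boldsymbol{\al}}_i,\hat{\boldsymbol{\al}}_j,\hat{\boldsymbol{\rho}})-f(\cdot;\boldsymbol{\al}_i^*,\boldsymbol{\al}_j^*,\boldsymbol{\rho}^*)||_2^2=O_p\big(n^{3/2}(\mu_n^*)^{-1/2}(\log n)^2\big).
\]
Next I would apply Assumption \ref{asymass1}(3) with $\boldsymbol{\al}'=\hat{\boldsymbol{\al}},\boldsymbol{\rho}'=\hat{\boldsymbol{\rho}}$ and $\boldsymbol{\al}''=\boldsymbol{\al}^*,\boldsymbol{\rho}''=\boldsymbol{\rho}^*$, which bounds the same sum from below by a multiple of $\inf_{\boldsymbol{\al},\boldsymbol{\rho}\cong\boldsymbol{\al}^*,\boldsymbol{\rho}^*}\{||\hat{\boldsymbol{\al}}-\boldsymbol{\al}||_2^2+n||\hat{\boldsymbol{\rho}}-\boldsymbol{\rho}||_2^2\}$. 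Since $\boldsymbol{\al}^*,\boldsymbol{\rho}^*$ is a realized true parameter and, by definition, the set of true parameters is exactly its $\cong$‑equivalence class, this infimum equals $\inf_{\boldsymbol{\al},\boldsymbol{\rho}\text{ true}}\{||\hat{\boldsymbol{\al}}-\boldsymbol{\al}||_2^2+n||\hat{\boldsymbol{\rho}}-\boldsymbol{\rho}||_2^2\}$, the quantity to be controlled. Rearranging the two displays then converts the $f$‑profile bound into the stated parameter bound.

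The one point that needs care -- and which I expect to be the crux -- is the $n$‑bookkeeping that accounts for the gap between the $n^{3/2}$ rate of Theorem \ref{asymth5} and the $n^{1/2}$ rate claimed here. The profile sum runs over all $\binom{n}{2}$ pairs, whereas the target infimum is a node‑level ($\ell_2$ over $n$ coordinates) quantity, so a factor of $n$ must be furnished by the right‑hand side of the identification inequality. Concretely, for a first‑order perturbation $\hat{\boldsymbol{\al}}_i-\boldsymbol{\al}_i^*=\delta_i$ one has $\sum_{i<j}(\delta_i+\delta_j)^2=\Theta(n\,||\boldsymbol{\delta}||_2^2)$, with the non‑identifiable direction absorbed by the infimum over the $\cong$‑class; it is precisely this $n$‑weight that divides the $n^{3/2}$ rate down to $n^{1/2}$. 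Once that weighting is correctly in place the deduction is immediate, so beyond invoking Theorem \ref{asymth5} and tracking this factor the corollary presents no genuine obstacle.
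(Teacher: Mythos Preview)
Your proposal is correct and follows the same route as the paper: invoke Theorem \ref{asymth5} to bound the $f$-profile sum, then apply Assumption \ref{asymass1}(3) to pass to the parameter error, with the $\cong$-class infimum identified with the infimum over true parameters. Your remark about the extra factor of $n$ in the identification inequality is exactly the point the paper uses when it writes $\mu_n^* n\sum_i\|\hat{\boldsymbol{\al}}_i-\boldsymbol{\al}_i^*\|_2^2+\mu_n^* n^2\|\hat{\boldsymbol{\rho}}-\boldsymbol{\rho}^*\|_2^2$ in its proof; the paper simply asserts this step without the heuristic you supply.
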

Corollary \ref{asymcol2} verifies
the weak consistency of the MLE under the condition
that 1) the model
does not suffer from serious model
identification problem. i.e., if
assumption \ref{asymass1} item (3)
is satisfied;
and 2) $(\log n)^4=o(\mu_n n)$. 
Note that the condition
ensuring the weak consistency, namely $(\log n)^4=o(\mu_n n)$,
is slightly stronger than that in the discrete
parameter
case, namely
$1 = o(\mu_n n)$
\citep{zhao2012consistency}.
As we show in the following theorem \ref{asymth4},
there is a standard way to transform the
$||\cdot||_2$ error bound into the uniform
error bound.
\begin{theorem}\label{asymth4}
Assume the model satisfies assumption
\ref{asymass1} item (2)(4)(5).
For any
 true parameter $\boldsymbol{\al}^*,\boldsymbol{\rho}^*$,
let $\sum\limits_{i\leq n}
||\hat{\boldsymbol{\al}}_i-\boldsymbol{\al}_i^*||_2^2
=\Delta_\al$,
$||\hat{\boldsymbol{\rho}}-\boldsymbol{\rho}^*||_2^2
 = \Delta_\rho$.
Then we have,

\begin{align}
\sup\limits_i
||\hat{\boldsymbol{\al}_i}-
\boldsymbol{\al}_i^*||_2^2
=O_p(n^{-\frac{1}{2}}
\Delta_\al^{\frac{1}{2}}
(\mu_n^*)^{-1}\log n
+\sqrt{\Delta_\rho}\log n
+n^{-\frac{1}{2}}(\mu_n^*)^{-\frac{1}{2}}
(\log n)^{\frac{3}{2}}).
\end{align}

\end{theorem}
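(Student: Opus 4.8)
The plan is to localize the global MLE to a single node and run a self-bounding concentration argument, following the ``$\|\cdot\|_2$-to-$\|\cdot\|_\infty$'' philosophy announced before the statement. Fix a node $i$ and abbreviate $\delta_i:=\|\hat{\boldsymbol{\al}}_i-\boldsymbol{\al}_i^*\|_2$, $\Delta f_{ij}:=f(\cdot;\hat{\boldsymbol{\al}}_i,\boldsymbol{\al}_j^*,\boldsymbol{\rho}^*)-f(\cdot;\boldsymbol{\al}_i^*,\boldsymbol{\al}_j^*,\boldsymbol{\rho}^*)$, and $B_i:=\sum_{j\neq i}\|\Delta f_{ij}\|_2^2$, the squared profile misfit of node $i$ evaluated at the \emph{true} values of the remaining parameters. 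First I would invoke the coordinatewise optimality of the joint maximizer: holding $\hat{\boldsymbol{\al}}_{-i},\hat{\boldsymbol{\rho}}$ fixed, $\hat{\boldsymbol{\al}}_i$ maximizes $\sum_{j\neq i}\ell_{ij}(Y_{ij};\boldsymbol{\al}_i,\hat{\boldsymbol{\al}}_j,\hat{\boldsymbol{\rho}})$ over $\boldsymbol{\al}_i\in\mathcal{B}_i$ (and $\boldsymbol{\al}_i^*\in\mathcal{B}_i$ by Assumption \ref{asymass1}(1)), so the per-node log-likelihood increment at $\hat{\boldsymbol{\al}}_i$ relative to $\boldsymbol{\al}_i^*$ is nonnegative.

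Next I would split this increment into three pieces. First, a deterministic population part $\Phi_i(\hat{\boldsymbol{\al}}_i)=\sum_{j\neq i}\oE^*[\ell_{ij}(\cdot;\hat{\boldsymbol{\al}}_i,\boldsymbol{\al}_j^*,\boldsymbol{\rho}^*)-\ell_{ij}(\cdot;\boldsymbol{\al}_i^*,\boldsymbol{\al}_j^*,\boldsymbol{\rho}^*)]=-\sum_{j\neq i}\mathrm{KL}_{ij}$, where $\oE^*$ is expectation under the true model; because $f$ is pinched in $[c,C]$ (Assumption \ref{asymass1}(2)) and the nonzero states carry mass of order $\mu_n^*$, a Taylor expansion of each Kullback--Leibler term gives $\mathrm{KL}_{ij}\asymp\mu_n^*\|\Delta f_{ij}\|_2^2$, hence $-\Phi_i(\hat{\boldsymbol{\al}}_i)\gtrsim\mu_n^* B_i$. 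Second, a centered fluctuation $G_i(\hat{\boldsymbol{\al}}_i)$, a sum over $j$ of independent bounded summands. The decisive point is that the \emph{same} pinching forces $\mathrm{Var}(G_i)\lesssim\mu_n^* B_i$: a present edge (probability $O(\mu_n^*)$) contributes a term of size $O(\|\Delta f_{ij}\|_2)$, while an absent edge contributes only $O(\mu_n^*\|\Delta f_{ij}\|_2)$, so the fluctuation and the curvature are governed by one and the same quantity $\mu_n^*B_i$. Third, a plug-in transfer error $T_i$ from replacing $\hat{\boldsymbol{\al}}_j,\hat{\boldsymbol{\rho}}$ by $\boldsymbol{\al}_j^*,\boldsymbol{\rho}^*$, which I would control by a first/second-order expansion of $\log\Theta_{ij}$ using the derivative bounds of Assumption \ref{asymass1}(4) and Cauchy--Schwarz, turning $\sum_j\|\hat{\boldsymbol{\al}}_j-\boldsymbol{\al}_j^*\|_2$ into $\sqrt{n\,\Delta_\al}$ up to degree factors and producing the $\Delta_\rho$ contribution; these are the sources of the first two terms of the conclusion.

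Feeding the curvature bound and the variance bound into Bernstein's inequality yields a self-bounding relation of the shape $\mu_n^* B_i\lesssim\sqrt{\mu_n^*B_i\,\log n}+(\text{range})\log n+T_i$, which I would solve to $\mu_n^*B_i\lesssim\log n+T_i$. To make this hold \emph{simultaneously} for all $i$ I would cover each $\mathcal{B}_i$ by a polynomially fine $\varepsilon$-net (its Lipschitz discretization error absorbed via Assumption \ref{asymass1}(4)) and union-bound over the $n$ nodes and the net points; this is exactly where the powers of $\log n$ enter. Finally I would convert the profile misfit into a parameter error through the identifiability Assumption \ref{asymass1}(5): since its left-hand side is a sum of \emph{un-squared} norms, Cauchy--Schwarz gives $\sum_{j\neq i}\|\Delta f_{ij}\|_2\leq\sqrt{n B_i}$, whence $\delta_i^2\leq(C_2 n)^{-1}\sqrt{nB_i}=C_2^{-1}\sqrt{B_i/n}$, and therefore $\delta_i^2\lesssim\sqrt{(\log n+T_i)/(n\mu_n^*)}$. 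Taking $\sup_i$ and unwinding $T_i$ produces the three stated terms.

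The main obstacle is the interface between these scales. The crux is the matched-scale coincidence $\mathrm{Var}(G_i)\asymp\mu_n^*B_i\asymp-\Phi_i$: only because the fluctuation is measured by the very quantity appearing in the curvature does the Bernstein step collapse to $\mu_n^*B_i=O(\log n)$ rather than to a far worse bound. The second difficulty, and the reason the rate is weaker than in the discrete case, is that Assumption \ref{asymass1}(5) controls an \emph{un-squared} $\ell_2$-sum by $n\delta_i^2$, so passing from the squared misfit $B_i$ (which the likelihood naturally delivers) to $\delta_i$ costs a Cauchy--Schwarz square root; this is what degrades $\delta_i^2$ to an $O_p(\sqrt{\cdot})$ rate and, after one substitutes the $\|\cdot\|_2$ rates of Corollary \ref{asymcol2}, forces the stronger density requirement emphasized in the abstract. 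Keeping the uniform ($\sup_i$) control honest through the covering and union-bound bookkeeping, while tracking every power of $\log n$ and of $\mu_n^*$, is the laborious part.
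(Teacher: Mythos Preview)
Your architecture---coordinatewise optimality of $\hat{\boldsymbol{\al}}_i$, a plug-in transfer from $(\hat{\boldsymbol{\al}}_{-i},\hat{\boldsymbol{\rho}},\overline{\mu}_n)$ to $(\boldsymbol{\al}_{-i}^*,\boldsymbol{\rho}^*,\mu_n^*)$, a uniform concentration over $\boldsymbol{\al}_i\in\mathcal{B}_i$, and an identifiability step---is precisely the skeleton of the paper's proof (packaged there as Lemma~\ref{asymlem4}). One difference is harmless: instead of a self-bounding Bernstein argument, the paper establishes the \emph{fixed-scale} bound
\[
\sup_{i,\ \boldsymbol{\al}_i\in\mathcal{B}_i}\Big|\ell_i(\mathbf{Y}_i;\boldsymbol{\al}_i,\boldsymbol{\al}_{-i}^*,\boldsymbol{\rho}^*,\mu_n^*)-\oE\big[\ell_i(\mathbf{Y}_i;\boldsymbol{\al}_i,\boldsymbol{\al}_{-i}^*,\boldsymbol{\rho}^*,\mu_n^*)\big]\Big|=O_p\big(n^{1/2}(\mu_n^*)^{1/2}(\log n)^{3/2}\big)
\]
by covering only the $d$-dimensional set $\cup_i\mathcal{B}_i$ and applying Lemma~\ref{asymlem3}; this already accounts for the third term in the theorem. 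Your localized Bernstein route is legitimate and could shave $\log n$ factors, but it is not what drives the stated display.

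The genuine gap is your identifiability conversion. You read Assumption~\ref{asymass1}(5) literally (unsquared sum $\sum_j\|\Delta f_{ij}\|_2$), then pass through Cauchy--Schwarz to obtain $\delta_i^2\le C_2^{-1}\sqrt{B_i/n}$ and conclude $\delta_i^2\lesssim\sqrt{(\log n+T_i)/(n\mu_n^*)}$. That extra square root does \emph{not} reproduce the three stated terms: your $\Delta_\al$-contribution becomes $n^{-1/4}(\mu_n^*)^{-1/2}\Delta_\al^{1/4}$ rather than $n^{-1/2}(\mu_n^*)^{-1}\Delta_\al^{1/2}\log n$, and your $\Delta_\rho$-contribution becomes $\Delta_\rho^{1/4}$ rather than $\Delta_\rho^{1/2}\log n$; so the sentence ``unwinding $T_i$ produces the three stated terms'' is false. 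The paper does not route through $B_i$ at all. It invokes item~(5) to assert directly that the population log-likelihood gap at node $i$ satisfies
\[
\oE\big[\ell_i(\cdot;\boldsymbol{\al}_i^*,\boldsymbol{\al}_{-i}^*,\boldsymbol{\rho}^*,\mu_n^*)\big]-\oE\big[\ell_i(\cdot;\hat{\boldsymbol{\al}}_i,\boldsymbol{\al}_{-i}^*,\boldsymbol{\rho}^*,\mu_n^*)\big]=\Omega\big(n\mu_n^*\,\|\hat{\boldsymbol{\al}}_i-\boldsymbol{\al}_i^*\|_2^2\big),
\]
i.e.\ it uses item~(5) as a quadratic curvature bound on the KL (equivalently $B_i\gtrsim n\delta_i^2$, as though the left side of item~(5) were the squared sum). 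Combining this \emph{linear} lower bound with the transfer estimate and the fixed-scale fluctuation, then dividing by $n\mu_n^*$, gives the display without any square root. So the missing step is to apply item~(5) at the level of the expected log-likelihood directly; your Cauchy--Schwarz detour through $\sqrt{B_i/n}$ is exactly what degrades the rate away from the theorem's statement.
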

We emphasize that theorem \ref{asymth4}
take advantage of $\hat{\boldsymbol{\al}},
\hat{\boldsymbol{\rho}}$ being a MLE. Many
community detection methods estimate the latent type
by maximizing some criterion. The criterion in
the MLE  is the log likelihood.
Our method can be generalized to establish
consistency of criterions other than
log likelihood (see also
\cite{zhao2012consistency} section 2). %
Combine
corollary \ref{asymcol2} with
theorem \ref{asymth4} we
obtain the following uniform consistency condition.
\begin{corollary}\label{asymcol1}
  Assume the model satisfies assumption
\ref{asymass1} item (1)-(5).
  If $(\log n)^2=o(n\mu_n^* )$ then
  we have:
  $$
 \inf\limits_{\b{\al}\text{ is true}}
 \big\{\
  \sup\limits_{i}
  ||\hat{\boldsymbol{\al}}_i-
  \boldsymbol{\al}_i
  ||_2^2
  \ \big\}
   =
   O_p(n^{-\frac{1}{4}}(\mu_n^*)^{-\frac{5}{4}}
   (\log n)^2).
  $$
  Thus if $\frac{n^{\frac{1}{5}}\mu_n^*}{
  (\log n)^{\frac{8}{5}}}\rightarrow\infty$
  then the MLE is uniformly consistent.
\end{corollary}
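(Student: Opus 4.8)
The plan is to obtain Corollary~\ref{asymcol1} purely by feeding the $||\cdot||_2$ rates of Corollary~\ref{asymcol2} into the $||\cdot||_2$-to-$||\cdot||_\infty$ transfer supplied by Theorem~\ref{asymth4}; no new probabilistic estimate is required. First I would fix, for each realization, a true parameter $(\b{\al}^*,\b{\rho}^*)$ attaining the infimum in Corollary~\ref{asymcol2}. Such a minimizer exists because assumption~\ref{asymass1}(1) makes $\mathcal{B}$ compact, the set of true parameters is a closed (hence compact) subset of $\prod_{i}\mathcal{B}_i\times\mathcal{B}$ as the equality locus of the continuous maps $\Theta_{ij}$, and $(\b{\al},\b{\rho})\mapsto ||\h{\b{\al}}-\b{\al}||_2^2+n||\h{\b{\rho}}-\b{\rho}||_2^2$ is continuous. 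Since assumption~\ref{asymass1}(1)--(5) contains the items (1)(2)(3) needed by Corollary~\ref{asymcol2}, the latter applies under $(\log n)^2=o(n\mu_n^*)$ and gives, at this minimizer, $\Delta_\al+n\Delta_\rho=O_p(n^{1/2}(\mu_n^*)^{-1/2}(\log n)^2)$, where $\Delta_\al=\sum_{i}||\h{\b{\al}}_i-\b{\al}_i^*||_2^2$ and $\Delta_\rho=||\h{\b{\rho}}-\b{\rho}^*||_2^2$.

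Discarding one nonnegative summand at a time, this single bound yields the two separate rates $\Delta_\al=O_p(n^{1/2}(\mu_n^*)^{-1/2}(\log n)^2)$ and $\Delta_\rho=O_p(n^{-1/2}(\mu_n^*)^{-1/2}(\log n)^2)$. Next I would invoke Theorem~\ref{asymth4} at the \emph{same} $(\b{\al}^*,\b{\rho}^*)$ --- its hypotheses (2)(4)(5) are again contained in assumption~\ref{asymass1}(1)--(5) --- and substitute these two rates into its three summands. A direct computation gives: the first summand $n^{-1/2}\Delta_\al^{1/2}(\mu_n^*)^{-1}\log n$ becomes $n^{-1/4}(\mu_n^*)^{-5/4}(\log n)^2$; the second $\sqrt{\Delta_\rho}\,\log n$ becomes $n^{-1/4}(\mu_n^*)^{-1/4}(\log n)^2$; and the third is unchanged at $n^{-1/2}(\mu_n^*)^{-1/2}(\log n)^{3/2}$.

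It remains to read off the dominant term. Using $\mu_n^*\le 1$ (which holds since $\mu_n^*=\mbbP(Y_{ij}\ne\mb{0})\le 1$, and $\mu_n^*=1$ in the exponential model), the first summand dominates the second by the factor $(\mu_n^*)^{-1}\ge 1$, and dominates the third by $n^{1/4}(\mu_n^*)^{-3/4}(\log n)^{1/2}\to\infty$; hence the three-term bound collapses to $O_p(n^{-1/4}(\mu_n^*)^{-5/4}(\log n)^2)$. Because this quantity bounds $\sup_{i}||\h{\b{\al}}_i-\b{\al}_i^*||_2^2$ at the selected true parameter, it bounds $\inf_{\b{\al}\text{ true}}\sup_{i}||\h{\b{\al}}_i-\b{\al}_i||_2^2$ from above, which is the asserted rate. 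The consistency threshold then follows by requiring the bound to be $o(1)$: the relation $n^{-1/4}(\mu_n^*)^{-5/4}(\log n)^2\to 0$ is equivalent, after raising to the power $4/5$, to $n^{1/5}\mu_n^*/(\log n)^{8/5}\to\infty$.

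The main obstacle I anticipate is the coupling of the two cited results through one common true parameter. Corollary~\ref{asymcol2} controls $\Delta_\al+n\Delta_\rho$ only at its own, possibly data-dependent, minimizer, whereas Theorem~\ref{asymth4} is phrased for an arbitrary fixed true parameter; to evaluate the latter legitimately at the (random) minimizer I must check that its $O_p$ bound holds uniformly over the choice of true parameter, so that the selection does not inflate the rate. Once this uniformity is secured, everything else is the routine order arithmetic carried out above.
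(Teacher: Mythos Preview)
Your proposal is correct and mirrors the paper's approach exactly: the paper states that Corollary~\ref{asymcol1} ``follows directly from theorem~\ref{asymth4} and corollary~\ref{asymcol2}'' and omits the proof, so you have simply written out the order arithmetic the authors left implicit. Your one concern---uniformity of the $O_p$ bound in Theorem~\ref{asymth4} over the (data-dependent) choice of true parameter---is legitimate but resolvable, since the proof of that theorem rests on the covering estimate (\ref{asymeq26}) and the Lipschitz bounds from assumption~\ref{asymass1}, all of which are uniform over the compact set $\mathcal{B}$.
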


Corollary \ref{asymcol1} follows directly
from theorem \ref{asymth4} and
 corollary \ref{asymcol2}. Its proof is
 omitted.
The uniform consistency condition we obtained
in the continuous parameter case
 is much stronger than
 that in the discrete parameter case. 
 Namely,  $n^{\frac{1}{5}}\mu_n/(\log n)^{\frac{8}{5}}\rightarrow\infty$
(corollary \ref{asymcol1}) versus
 $n\mu_n/\log n\rightarrow\infty$
  \citep{zhao2012consistency}.
It is not known whether the condition
is necessarily stronger. Our guess is yes.
 However, none of our experiments confirm this.
  On the other hand, we do discover an overfitting phenomenon
 of the MLE in an experiment. 
The uniform  error bound can be reduced
when the stopping criterion of the optimization
algorithm is relaxed.
 Therefore we propose the following open question.
 \begin{question}
 Is the MLE in model (\ref{asymmodel2})
 of $\b{\eta}$ uniformly consistent
 if $(\log n)^K = o(\mu_n n)$? Where
 $K$ is some constant.
 \end{question}
Intuitively,
 continuous parameter space
 is nothing but the limit of
 discrete parameter space.
 Therefore it is natural to discretize
  the parameter space into sufficiently small (but not too small)
pieces
and apply the MLE on the
  discretized parameter space.
We show in
 the next subsection
 that by doing so,
  uniform consistency condition becomes almost
  the same as that in the discrete parameter space.

\subsection{Improve uniform consistency by discretization}
\label{asymsubsec5}
A discretization
 of $\mathcal{B}_1\times \cdots\times \mathcal{B}_n\times\mathcal{B}$
is simply
a subset $\mtB_1\times\cdots\times\mtB_n\times\mathcal{B}
$ of
 $\mathcal{B}_1\times \cdots\times \mathcal{B}_n\times\mathcal{B}$
where each $\mtB_i$ is a finite set.
The discretized parameter space
 $\mtB_1\times\cdots\times\mtB_n\times\mathcal{B}
$ may
include none of the true parameters.
In this section we show that uniform consistency of the following
algorithm holds (corollary \ref{asymcol4}):
 first discretize $\mathcal{B}_1\times \cdots\times \mathcal{B}_n\times\mathcal{B}$
  with appropriately fineness (see (\ref{asymfine})),
then perform MLE over the discretized parameter space.
We define the \emph{pseudo true}
parameter as
 the maximizer of the expected
 log likelihood function
 over $\mtB_1\times\cdots\times\mtB_n\times
 \mathcal{B}$,
 denoted by
 $\b{\t{\al}}^*,\b{\t{\rho}}^*$.
The following quantities characterize
the fineness of the discretized parameter
space $\mtB_1\times\cdots\times\mtB_n\times\mathcal{B}
$:
\begin{align}\label{asymfine}
 &\delta_n' =
 \sup\limits_{i\leq n,
 \b{\al}_i}
 \ \ \inf\limits_
 {\b{\t{\al}}_i\in \t{\mathcal{B}}_i}
 ||\b{\al}_i-\b{\t{\al}}_i||_2
 \\ \nonumber
 &\delta_n = \inf\big\{\
 ||\b{\t{\al}}_i-\b{\t{\al}}_i'||_2:
  \b{\t{\al}}_i\ne\b{\t{\al}}_i'
  \in\mathcal{\t{B}}_i,i\leq n\big\}
.\end{align}
We allow the fineness to scale
with $n$.
Assume that the discretized
parameter space $\t{\mathcal{B}}_i$ is
finite and
$\delta_n',\delta_n>0$.
We first show that the
 pseudo true parameter is closed to the
 true parameter.

 \begin{proposition}\label{asymprop2}
  \begin{enumerate}
 \item
 Assume the model satisfies
 assumption \ref{asymass1} item
(2)(3)(4).
We have, there exists a
  constant $C_4>0$,
  for  any pseudo
 true parameter $\b{\t{\al}}^*,\b{\t{\rho}}^*$,
 there exists a true parameter
 $\b{\al}^*,\b{\rho}^*$ such that,
 $$
||\b{\t{\al}}^*-
 \b{\al}^*||_2^2
 \leq C_4 n\delta_n'^2
  ,\
 ||\b{\t{\rho}}^*-\b{
 \rho}^*||_2^2
 \leq C_4\delta_n'^2
.$$

\item Assume the model satisfies
 assumption \ref{asymass1} item
(4)(5).
There exists a
  constant $C_5>0$ such that
  for
  any pseudo
 true parameter $\b{\t{\al}}^*,\b{\t{\rho}}^*$,
 any true parameter
 $\b{\al}^*,\b{\rho}^*$,
 let $\Delta_\al = ||\b{\t{\al}}^*-
 \b{\al}^*||_2^2, \Delta_\rho =
||\b{\t{\rho}}^*-
 \b{\rho}^*||_2^2 $, we have:
$$
\sup\limits_i
 ||\b{\t{\al}}_i^*-
 \b{\al}_i^*
 ||_2^2 \leq C_5(\delta_n'^2+
 \frac{\Delta_\al}{n}+\Delta_\rho)
.$$

  \end{enumerate}

  Thus if the model satisfies assumption
  item (2)-(5), then there exists
   some constant
  $C_6$, such that for any pseudo
 true parameter $\b{\t{\al}}^*,\b{\t{\rho}}^*$,
 there exists a true parameter $\b{\al}^*$
with:
  $$
\sup\limits_i
 ||\b{\t{\al}}_i^*-
 \b{\al}_i^*
 ||_2^2 \leq C_6\delta_n'^2
.$$
 \end{proposition}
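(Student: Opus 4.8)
The plan is to read off the final bound by chaining the two parts of the proposition, letting the $\ell_2$ control from part 1 feed directly into the supremum control from part 2. First I would fix an arbitrary pseudo true parameter $\btal^*,\b{\t{\rho}}^*$ and invoke part 1, which (using only assumption items (2)(3)(4)) produces a \emph{particular} true parameter $\b{\al}^*,\b{\rho}^*$ obeying
$$
\Delta_\al=||\btal^*-\b{\al}^*||_2^2\leq C_4 n\delta_n'^2,\qquad
\Delta_\rho=||\b{\t{\rho}}^*-\b{\rho}^*||_2^2\leq C_4\delta_n'^2 .
$$

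Next I would substitute this very pair $\b{\al}^*,\b{\rho}^*$ into part 2. Because part 2 (using assumption items (4)(5)) is asserted for \emph{any} true parameter, it applies in particular to the one selected above, so that
$$
\sup\limits_i||\btal_i^*-\b{\al}_i^*||_2^2\leq C_5\Big(\delta_n'^2+\frac{\Delta_\al}{n}+\Delta_\rho\Big).
$$
The structural point that makes the whole argument close is that the factor $1/n$ attached to $\Delta_\al$ in part 2 is exactly what cancels the extra factor $n$ in the part 1 bound on $\Delta_\al$. Indeed, inserting the two displayed estimates gives $\Delta_\al/n\leq C_4\delta_n'^2$ and $\Delta_\rho\leq C_4\delta_n'^2$, whence
$$
\sup\limits_i||\btal_i^*-\b{\al}_i^*||_2^2\leq C_5\big(1+2C_4\big)\delta_n'^2 ,
$$
and the conclusion follows with $C_6=C_5(1+2C_4)$.

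Since each step is a plain substitution, there is no genuine analytic obstacle left at this stage; all the work has been absorbed into parts 1 and 2. The only thing demanding care is the quantifier bookkeeping: the existentially chosen true parameter of part 1 must be the same one plugged into part 2, and this is legitimate precisely because part 2 holds uniformly over all true parameters. I would also note that the hypothesis set is automatically consistent---part 1 needs (2)(3)(4) and part 2 needs (4)(5), so the merged statement requires items (2)--(5), which is exactly what the ``Thus'' clause assumes.
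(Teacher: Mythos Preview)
Your argument for the ``Thus'' clause is correct and is exactly how the paper (implicitly) derives it: the paper does not even write this combination out, since once parts 1 and 2 are available the final display is immediate by substitution, and the only subtlety is the quantifier bookkeeping you already flagged.

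The gap is that the proposition as stated is not just the ``Thus'' clause. Parts 1 and 2 are the substantive assertions, and the paper's proof is devoted entirely to establishing them; you have treated them as given inputs. For part 1, the paper picks (via the definition of $\delta_n'$) a discretized point $\b{\t{\al}}$ with $||\b{\t{\al}}-\b{\al}^*||_2^2\leq n\delta_n'^2$, uses assumption item (4) to bound the expected log-likelihood gap $\oE[\ell(\mb{Y};\b{\t{\al}},\b{\rho}^*,\mu_n^*)]-\oE[\ell(\mb{Y};\b{\al}^*,\b{\rho}^*,\mu_n^*)]$ by $-C_f\mu_n^*n^2\delta_n'^2$, sandwiches the pseudo true expected likelihood between these two, converts the resulting bound into an $f$-distance bound via item (2), and then into a parameter bound via item (3). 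For part 2, the paper works coordinate by coordinate: for each $i$ it takes $\b{\t{\al}}_i'\in\t{\mathcal{B}}_i$ closest to $\b{\al}_i^*$, bounds the one-row expected log-likelihood gap $|\oE[\ell_i(\mb{Y}_i;\b{\t{\al}}_i',\b{\t{\al}}_{-i}^*,\b{\t{\rho}}^*,\mu_n^*)]-\oE[\ell_i(\mb{Y}_i;\b{\al}_i^*,\b{\al}_{-i}^*,\b{\rho}^*,\mu_n^*)]|$ by $C_5'\mu_n^*(n\delta_n'^2+\Delta_\al+n\Delta_\rho)$ using item (4), uses the coordinate-wise maximality of $\b{\t{\al}}_i^*$ to sandwich, transfers the $(-i)$ coordinates back to $(\b{\al}_{-i}^*,\b{\rho}^*)$, and finishes via item (5). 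Your proposal does not supply any of this. If your intent was only to justify the ``Thus'' clause, that part is fine; but a complete proof of the proposition must establish parts 1 and 2 as well.
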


Consider the MLE restricted on the discretized
parameter space, i.e.,
the maximizer of the likelihood function
over the discretized parameter space,
namely $\b{\hat{\t{\al}}},
\b{\h{\t{\rho}}}$. It is easy to see that
item (1)-(4) of
assumption \ref{asymass1} hold for
the discretized parameter space
 if they hold for the original
parameter space. So theorem
\ref{asymth5} and corollary
\ref{asymcol2} holds for the discretized parameter
space with true parameter replaced by pseudo
true parameter.
Therefore, using corollary \ref{asymcol2} and proposition
\ref{asymprop2} conclusion (2),
 the following
weak consistency result follows directly
and its proof is omitted.
\begin{corollary}\label{asymcol3}
Assume the model satisfies assumption \ref{asymass1}
item
(1)-(4).
If $(\log n)^2 = o(n\mu_n^*)$  then
we have:
\begin{align}\nonumber
  &
  \inf\limits_{\b{\al},
  \b{\rho} \text{ is true}}
  \big\{\ ||\bhtal-
  \b{\al}  ||_2^2
  + n||\b{\h{\t{\rho}}}-
   \b{\rho}||_2^2
\  \big\}
   \ = \
   O_p(n^{\frac{1}{2}}(\mu_n^*)^{-\frac{1}{2}}
   (\log n)^2)+
   O(n\delta_n'^2).
   \end{align}
Thus if, in addition,
$(\log n)^4=o(n\mu_n)$ and $\delta_n'\rightarrow 0$ the
MLE on the discretized parameter space is
weak consistent.

\end{corollary}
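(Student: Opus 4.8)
The plan is to split the target quantity by the triangle inequality into a ``restricted MLE versus pseudo-true'' part, which is already controlled on the discretized space by Corollary~\ref{asymcol2}, and a ``pseudo-true versus true'' part, which is controlled by Proposition~\ref{asymprop2}. First I would record the observation stated just before the corollary: items (1)--(3) of Assumption~\ref{asymass1} are inherited by the discretized parameter space $\mtB_1\times\cdots\times\mtB_n\times\mathcal{B}$, with the pseudo-true parameter $\btal^*,\b{\t{\rho}}^*$ now playing the role of the true parameter. Applying Corollary~\ref{asymcol2} verbatim to the restricted MLE $\bhtal,\b{\h{\t{\rho}}}$ therefore produces, under $(\log n)^2=o(n\mu_n^*)$, a pseudo-true parameter $\btal^*,\b{\t{\rho}}^*$ with
$$||\bhtal-\btal^*||_2^2+n||\b{\h{\t{\rho}}}-\b{\t{\rho}}^*||_2^2=O_p(n^{\frac{1}{2}}(\mu_n^*)^{-\frac{1}{2}}(\log n)^2).$$

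Next I would feed this same pseudo-true parameter into Proposition~\ref{asymprop2} conclusion~(1) (which needs only items (2)--(4), so the hypotheses (1)--(4) suffice, and which is the relevant part here since conclusion~(2) would additionally invoke item (5)): there is a genuinely true parameter $\boldsymbol{\al}^*,\boldsymbol{\rho}^*$ with $||\btal^*-\boldsymbol{\al}^*||_2^2\leq C_4 n\delta_n'^2$ and $||\b{\t{\rho}}^*-\boldsymbol{\rho}^*||_2^2\leq C_4\delta_n'^2$, whence $||\btal^*-\boldsymbol{\al}^*||_2^2+n||\b{\t{\rho}}^*-\boldsymbol{\rho}^*||_2^2\leq 2C_4 n\delta_n'^2=O(n\delta_n'^2)$. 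Combining the two bounds through the elementary estimate $||a-c||_2^2\leq 2||a-b||_2^2+2||b-c||_2^2$, applied on the $\al$-block and (with the common weight $n$) on the $\rho$-block, bounds $||\bhtal-\boldsymbol{\al}^*||_2^2+n||\b{\h{\t{\rho}}}-\boldsymbol{\rho}^*||_2^2$ by the sum of the two orders. Since the infimum in the statement ranges over all true parameters, exhibiting this single true parameter already yields the claimed $O_p(n^{\frac{1}{2}}(\mu_n^*)^{-\frac{1}{2}}(\log n)^2)+O(n\delta_n'^2)$. The final weak-consistency clause follows because, under $(\log n)^4=o(n\mu_n)$ and $\delta_n'\to 0$, both error terms are $o_p(n)$, so the per-node mean-squared error tends to zero.

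Because both ingredients are quoted wholesale, I expect essentially no analytic obstacle; the work is bookkeeping. The only points needing care are that the pseudo-true parameter delivered by Corollary~\ref{asymcol2} and the one handed to Proposition~\ref{asymprop2} may be taken to be the same object (both statements are phrased for an arbitrary pseudo-true parameter, so this is free), and that the constant factors from the $2(\,\cdot\,)$-splitting are absorbed into the $O_p$ and $O$ symbols. The one genuinely structural feature, rather than a difficulty, is that the discretization error enters additively at the irreducible order $n\delta_n'^2$; this is precisely why $\delta_n'$ must be driven to zero for weak consistency, even though it cannot be shrunk arbitrarily relative to the separation $\delta_n$ exploited in the companion uniform-consistency argument.
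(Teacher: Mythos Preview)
Your proposal is correct and matches the paper's own approach: the paper states just before the corollary that the result ``follows directly'' from Corollary~\ref{asymcol2} applied on the discretized space together with Proposition~\ref{asymprop2}, and omits the details. Your triangle-inequality splitting and your observation that items (1)--(4) of Assumption~\ref{asymass1} are inherited by the discretized space are exactly the bookkeeping the paper leaves implicit; your choice to invoke conclusion~(1) of Proposition~\ref{asymprop2} (rather than conclusion~(2), which the paper's text cites but which requires item~(5) and gives a sup-norm bound) is in fact the correct ingredient here, consistent with the corollary's hypotheses.
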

 The $||\cdot||_2$ error bound of the MLE
 on the discretized parameter space is
 worse than that on the original space.
In contrast, in what follows,
we show that the uniform error bound is
 reduced in the discretized parameter space.

\begin{theorem}\label{asymth6}
Assume the model satisfies assumption
\ref{asymass1} item  (1)-(4)
and item (5) with true parameter replaced
by pseudo true parameter.
Let
$s = |\mathcal{S}|$.
If for a given constant $\varepsilon>0$,
 it holds that
 with probability at least
$1-\varepsilon$ there exists
pseudo true parameter
$ \btal^*,\b{\t{\rho}}^*$
with
$$
||\bhtal-\btal^*||_2^2<\conditionofdeltathird
,\ \
  \log n\sqrt{\frac{||\bhtal-\btal^*||_2^2}{n}}
  +\log n\sqrt{||\b{\h{\t{\rho}}}-
  \b{\t{\rho}}^*||_2^2}+
    \frac{(\log n)^{\frac{3}{2}}}
{\sqrt{n\mu_n^*}}<\frac{\delta_n}{2C_\varepsilon}
$$,
where $C_\varepsilon$ is the constant in
lemma \ref{asymlem4},
then with probability at least $1-\varepsilon-\frac{1}{n^2}$
there exists pseudo true parameter
$\btal^*$ with
$ \bhtal=\btal^*$.

\end{theorem}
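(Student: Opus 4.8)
The plan is to reduce the exact-recovery claim $\bhtal=\btal^*$ to a uniform ($\ell_\infty$) bound that falls strictly below the grid resolution $\delta_n$, and then to exploit the discreteness of each $\t{\mathcal{B}}_i$. The central observation is that $\bhtal,\b{\h{\t{\rho}}}$ maximizes the likelihood over the discretized space while $\btal^*,\b{\t{\rho}}^*$ maximizes the expected likelihood over the same discretized space. Consequently the whole argument behind theorem \ref{asymth4} --- comparing the fitted log-likelihood at the estimator against its value at the (pseudo) true parameter, invoking assumption item (5) to turn a per-node likelihood gap into a bound on $||\bhtal_i-\btal_i^*||_2$, and controlling the stochastic fluctuation of $\ell_i$ by concentration --- transfers verbatim with ``true'' replaced by ``pseudo true''. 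If anything, because each $\t{\mathcal{B}}_i$ is finite, the uniform control over the parameter space is easier than in the continuous case. This is precisely the quantitative content of lemma \ref{asymlem4}.

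First I would invoke lemma \ref{asymlem4}. Under assumption item (2)(4) and item (5) with pseudo true replacing true, together with the first hypothesis $||\bhtal-\btal^*||_2^2<\conditionofdeltathird$ --- which places the total $\ell_2$ error in the regime where the local curvature and concentration estimates are simultaneously valid --- the lemma gives, with probability at least $1-\frac{1}{n^2}$,
$$
\sup_i||\bhtal_i-\btal_i^*||_2 \leq C_\varepsilon\Big(\log n\sqrt{\tfrac{||\bhtal-\btal^*||_2^2}{n}}+\log n\sqrt{||\b{\h{\t{\rho}}}-\b{\t{\rho}}^*||_2^2}+\tfrac{(\log n)^{\frac{3}{2}}}{\sqrt{n\mu_n^*}}\Big).
$$
Next I would combine this with the second hypothesis: by assumption, with probability at least $1-\varepsilon$ there is a pseudo true parameter for which the parenthesized quantity is strictly less than $\frac{\delta_n}{2C_\varepsilon}$. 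Intersecting the two events by a union bound (total probability at least $1-\varepsilon-\frac{1}{n^2}$) yields $\sup_i||\bhtal_i-\btal_i^*||_2<\frac{\delta_n}{2}<\delta_n$ for that pseudo true parameter. Finally, for each $i$ both $\bhtal_i$ and $\btal_i^*$ lie in $\t{\mathcal{B}}_i$, and by the definition of $\delta_n$ any two distinct elements of $\t{\mathcal{B}}_i$ are at distance at least $\delta_n$; since the distance between them is strictly below $\delta_n$, they must coincide. Hence $\bhtal_i=\btal_i^*$ for every $i$, i.e. $\bhtal=\btal^*$.

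The main obstacle is not the final discreteness step, which is immediate, but verifying that lemma \ref{asymlem4} genuinely applies in the discretized setting, i.e. that the optimality used in theorem \ref{asymth4} survives restriction to the grid. The delicate point is that the proof of theorem \ref{asymth4} extracts, for each node $i$, a near-optimality of $\bhtal_i$ for the conditional log-likelihood $\ell_i(\cdot;\bhtal_{-i},\b{\h{\t{\rho}}})$ and compares it against $\btal_i^*$; one must check that this comparison is made between two admissible grid points, so that the optimality of $\bhtal$ over the grid and the population optimality of $\btal^*$ over the grid are both usable, and that the first hypothesis keeps the argument inside the radius where the curvature estimate from assumption (5) and the concentration of $\ell_i$ are valid. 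Once this is confirmed --- which is exactly what lemma \ref{asymlem4} packages --- the theorem follows by the short deduction above.
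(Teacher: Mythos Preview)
Your reduction at the end (if $\sup_i\|\bhtal_i-\btal_i^*\|_2<\delta_n$ then equality on the grid) is correct, but the route you propose to reach that uniform bound does not work, and the paper's proof is substantially more involved than a direct application of Lemma~\ref{asymlem4}.

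The gap is that you have misquoted Lemma~\ref{asymlem4}. The lemma bounds $n\mu_n^*\|\bhtal_i-\btal_i^*\|_2^2$, not $\|\bhtal_i-\btal_i^*\|_2$, and its right-hand side contains the term $\|\bhtal-\btal^*\|_{1,\mb{Y}_i}$ in addition to the three terms you listed. That extra term is the dominant one. If you control it by Cauchy--Schwarz as in the proof of Theorem~\ref{asymth4}, you get $\sup_i\|\bhtal_i-\btal_i^*\|_2^2\lesssim (n\mu_n^*)^{-1}\sqrt{n\Delta_\eta}=\sqrt{\Delta_\eta/n}/\mu_n^*$, and under the first hypothesis $\Delta_\eta\lesssim n\delta_n^3$ this is of order $\delta_n^{3/2}/\mu_n^*$, which is \emph{not} smaller than $\delta_n^2$ when $\mu_n^*$ and $\delta_n$ are small. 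So the second hypothesis does not close the argument; the bound you wrote simply is not what the lemma delivers. (Your probability bookkeeping is also off: Lemma~\ref{asymlem4} gives probability $1-\varepsilon$, not $1-1/n^2$.)

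What the paper actually does is keep the neighbor-restricted term $\|\bhtal-\btal^*\|_{1,\mb{Y}_i}$ and exploit discreteness \emph{before} taking a supremum. Because grid points are at least $\delta_n$ apart, one has $\|\bhtal_j-\btal_j^*\|_1\le \frac{\sqrt d}{\delta_n}\|\bhtal_j-\btal_j^*\|_2^2$. Writing $u_i=\|\bhtal_i-\btal_i^*\|_2^2$ and $A$ the adjacency matrix, Lemma~\ref{asymlem4} becomes the self-referential inequality $\mb u\le \frac{\sqrt d C_\varepsilon}{n\mu_n^*\delta_n}A\mb u+C_\varepsilon\Delta$, where $\Delta$ is precisely the quantity in the second hypothesis. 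Since $u_i\ne 0$ forces $u_i\ge\delta_n$ (hence bounded below), and $u_i\le C_{\mathcal B}$, the second hypothesis $\Delta<\delta_n/(2C_\varepsilon)$ implies that every node $i$ with $u_i\ne 0$ has at least $\frac{n\mu_n^*\delta_n^2}{2\sqrt d C_\varepsilon C_{\mathcal B}}$ neighbors $j$ with $u_j\ne 0$. This cascading structure is then handled by a separate combinatorial lemma (Lemma~\ref{asymlem6}), which shows that with probability at least $1-1/n^2$ no nonempty set of size $\le \frac{n\delta_n^2}{2e^3 Cs\sqrt d C_\varepsilon C_{\mathcal B}}$ can have this dense-neighborhood property. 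The \emph{first} hypothesis is used exactly here, to rule out the alternative that the bad set is large: since $\Delta_\eta\ge\delta_n\cdot|\{i:u_i\ne0\}|$, the bound $\Delta_\eta<\frac{n\delta_n^3}{2e^3 Cs\sqrt d C_\varepsilon C_{\mathcal B}}$ forces the bad set to be small, hence empty. This cascading/percolation step is the heart of the proof and is entirely absent from your proposal; your interpretation of the first hypothesis as a ``local curvature regime'' condition is incorrect.
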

By theorem \ref{asymth6} and proposition
\ref{asymprop2},
uniform consistent estimator
can be obtained by
first discretizing the parameter
space and then applying the
MLE on the discretized parameter space. Since
the discretization is  artificial and
the pseudo true parameter is closed to
the true parameter,
assumption \ref{asymass1} item 5
is likely to hold for the discretized parameter
space.
Furthermore, theorem \ref{asymth5}, \ref{asymth6},
proposition \ref{asymprop2} together
give the optimal discretization parameter
$\delta_n,\delta_n'$, i.e.,
$\delta_n'=\Theta(\delta_n)$,
$\delta_n =\Theta\big( n^{-\frac{1}{6}}(\mu_n^*)^{-\frac{1}{6}}
(\log n)^{\frac{4}{3}}\big)$
.
\begin{corollary}\label{asymcol4}
Assume the model satisfies
 assumption \ref{asymass1}
(1)-(4)
and item (5) with true parameter replaced
by pseudo true parameter.
Assume the discretization satisfies:
$\delta_n'=\Theta(\delta_n)$,
$\delta_n =\Theta\big( n^{-\frac{1}{6}}(\mu_n^*)^{-\frac{1}{6}}
(\log n)^{\frac{4}{3}}\big)$.
If $(\log n)^2 = o(n\mu_n^*)$,
then
 we have:
\begin{align}
\inf\limits_{\b{\al} \text{ is true }}
\big\{\ \sup\limits_i\
||\bhtal_i-\b{\al}||_2^2
\ \big\}
=O_p(n^{-\frac{1}{3}}(\mu_n^*)^{-\frac{1}{3}}
(\log n)^{\frac{8}{3}}).
\end{align}
Thus uniform consistency of the MLE on the discretized
parameter space
in model (\ref{asymmodel2}) holds
if $(\log n)^8 = o(n\mu_n^*)$ and
$\delta_n'=\Theta(\delta_n)$,
$\delta_n =\Theta\big(
 n^{-\frac{1}{6}}(\mu_n^*)^{-\frac{1}{6}}
(\log n)^{\frac{4}{3}}
\big)$.
\end{corollary}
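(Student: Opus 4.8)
The plan is to obtain the bound from two ingredients already in hand: exact recovery of the pseudo true parameter on the grid (Theorem~\ref{asymth6}), and the closeness of any pseudo true parameter to a genuine true parameter (Proposition~\ref{asymprop2}). The mechanism is that once the grid spacing $\delta_n$ is chosen coarser than the $\ell_2$ fluctuation of the restricted MLE yet still with $\delta_n\to 0$, the estimator $\bhtal$ must coincide \emph{exactly} with some pseudo true parameter $\btal^*$, after which the only remaining error is the deterministic discretization bias controlled by Proposition~\ref{asymprop2}. Concretely, on the event $\bhtal=\btal^*$, taking the true parameter $\b{\al}^*$ supplied by the combined conclusion of Proposition~\ref{asymprop2},
\[
\sup_i||\bhtal_i-\b{\al}_i^*||_2^2=\sup_i||\btal_i^*-\b{\al}_i^*||_2^2\le C_6\delta_n'^2,
\]
and since $\delta_n'=\Theta(\delta_n)=\Theta(n^{-1/6}(\mu_n^*)^{-1/6}(\log n)^{4/3})$ we have $\delta_n'^2=\Theta(n^{-1/3}(\mu_n^*)^{-1/3}(\log n)^{8/3})$, which is exactly the asserted rate; the infimum over true parameters is then dominated by this particular $\b{\al}^*$.

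The real work is verifying that the hypotheses of Theorem~\ref{asymth6} hold with probability at least $1-\varepsilon$ for the prescribed $\delta_n$. First I would invoke the discretized versions of Theorem~\ref{asymth5} and Corollary~\ref{asymcol2}---legitimate because items (1)--(4) of Assumption~\ref{asymass1} pass to any subgrid and item~(5) is assumed relative to the pseudo true parameter---to obtain, for the pseudo true parameter nearest $(\bhtal,\b{\h{\t{\rho}}})$,
\[
\Delta_\al=||\bhtal-\btal^*||_2^2=O_p\big(n^{1/2}(\mu_n^*)^{-1/2}(\log n)^2\big),\qquad
\Delta_\rho=||\b{\h{\t{\rho}}}-\b{\t{\rho}}^*||_2^2=O_p\big(n^{-1/2}(\mu_n^*)^{-1/2}(\log n)^2\big).
\]
The first precondition $\Delta_\al<n\delta_n^3/(2e^3Cs\sqrt d\,C_\varepsilon C_{\mathcal B})$ is then immediate and holds as soon as the weak-consistency bounds are available: indeed $n\delta_n^3=\Theta(n^{1/2}(\mu_n^*)^{-1/2}(\log n)^4)$ exceeds $\Delta_\al$ by a factor $(\log n)^2$. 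For the second precondition the two leading terms $\log n\sqrt{\Delta_\al/n}$ and $\log n\sqrt{\Delta_\rho}$ are both $O_p(n^{-1/4}(\mu_n^*)^{-1/4}(\log n)^2)$, and the residual term is $(\log n)^{3/2}/\sqrt{n\mu_n^*}$; comparing the dominant one with $\delta_n/(2C_\varepsilon)$ gives the requirement $(\log n)^{2/3}\ll(n\mu_n^*)^{1/12}$, i.e. $(\log n)^8=o(n\mu_n^*)$, while the residual term asks only for the far weaker $(\log n)^{1/2}\ll n\mu_n^*$. Choosing the hidden constants to match the confidence level $\varepsilon$ (through the $C_\varepsilon$ of Lemma~\ref{asymlem4}) converts these $O_p$ statements into the ``with probability at least $1-\varepsilon$'' form Theorem~\ref{asymth6} demands.

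With the preconditions in place, Theorem~\ref{asymth6} yields $\bhtal=\btal^*$ with probability at least $1-\varepsilon-n^{-2}$, and the display of the first paragraph gives $\sup_i||\bhtal_i-\b{\al}_i^*||_2^2\le C_6\delta_n'^2$ on that event. Since $\varepsilon>0$ is arbitrary, this upgrades to $\inf_{\b{\al}\text{ true}}\sup_i||\bhtal_i-\b{\al}_i||_2^2=O_p(n^{-1/3}(\mu_n^*)^{-1/3}(\log n)^{8/3})$, and this rate tends to $0$ precisely when $(\log n)^8=o(n\mu_n^*)$, which is the stated uniform consistency. I would also remark that the prescribed $\delta_n=\Theta(n^{-1/6}(\mu_n^*)^{-1/6}(\log n)^{4/3})$ tends to $0$ only when $(\log n)^8=o(n\mu_n^*)$; in the complementary range of the weaker hypothesis $(\log n)^2=o(n\mu_n^*)$ no genuine grid on the compact $\mathcal{B}$ realizes this spacing, so the two conditions in the statement are effectively linked and the argument above covers every meaningful case.

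The main obstacle is the rate bookkeeping behind the second precondition: the spacing $\delta_n$ must simultaneously \emph{dominate} the weak-consistency fluctuation $n^{-1/4}(\mu_n^*)^{-1/4}(\log n)^2$---so that snapping to the grid recovers $\btal^*$ exactly rather than a neighbour---and \emph{vanish} quadratically, $\delta_n^2\to0$, so that the discretization bias from Proposition~\ref{asymprop2} disappears. Balancing these competing demands is what produces the $(\log n)^8$ threshold and explains why the uniform rate $\delta_n'^2$ is so much better than the continuous-space rate of Corollary~\ref{asymcol1}. A secondary subtlety is that Theorem~\ref{asymth6} speaks of pseudo true parameters while the conclusion is stated against true parameters; the bridge is exactly the combined conclusion of Proposition~\ref{asymprop2}, and one must keep the infimum over true parameters on the right-hand side throughout.
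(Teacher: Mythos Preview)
Your proposal is correct and follows exactly the route the paper indicates: the paper's own proof is omitted, stating only that the corollary ``follows directly from theorem~\ref{asymth5}, \ref{asymth6}, proposition~\ref{asymprop2}'', and you have supplied the details of precisely this combination---using Corollary~\ref{asymcol2} (i.e.\ Theorem~\ref{asymth5} plus item~(3)) on the discretized space to feed the preconditions of Theorem~\ref{asymth6}, then Proposition~\ref{asymprop2} to pass from the pseudo true to a genuine true parameter. Your rate computations verifying the two preconditions of Theorem~\ref{asymth6} and your handling of the apparent gap between the hypotheses $(\log n)^2=o(n\mu_n^*)$ and $(\log n)^8=o(n\mu_n^*)$ are accurate.
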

The proof of corollary \ref{asymcol4}
follows directly from
theorem \ref{asymth5}, \ref{asymth6},
proposition \ref{asymprop2}
and is therefore
 omitted.

\section{Simulation}
\label{asymsecsimulation}

In subsection \ref{asymsecsimulation}
the experiments verify that
the MLE is asymptotically normal in the
exponential network model with interaction
effect when $\mu_n = 1$.
In subsection \ref{asymsubsec7},
we check if the uniform consistency holds under
weaker condition than that in corollary \ref{asymcol1}.
The experiment results incline a positive
answer.
In subsection \ref{asymsubsec8}, we
show that the MLE can be overfitting.
Both  uniform error bound and
mean square error bound  are not reduced when
the stopping criterion,
which we used in the optimization algorithm approximating the MLE,
 is strengthen.

We employ the exponential network model
with interaction effect in all experiments.
To approximate the MLE, we use a
combination of gradient descent (or Newton-Raphson)
 and coordinate descent.
We iteratively update the estimated parameter $\b{\eta},\b{\rho}$
until a certain stopping criterion is satisfied.
In each round of iteration, we
update parameters $\b{\al}_i,i=1,\cdots,n$ and $\b{\rho}$
 one by one (coordinate
descent). To update  $\b{\al}_i$,
we either use
gradient descent to maximize (with respect to $\b{\al}_i$)
the log likelihood, or use  Newton-Raphson
to find a zero of the partial derivative of
the log likelihood
(with respect to $\b{\al}_i$). Both methods
yield a sub-iteration. The sub-iteration is
stopped  until certain sub-stopping-criterion is satisfied.
The details
 are given in each subsection.

\subsection{Asymptotic normality}
\label{asymsecsimulation}
In this experiment,
we check the asysmptotic normality of MLE in model
(\ref{asymmodel1}).
We use the following algorithm to approximate
the maximizer of the likelihood:
\begin{algorithm}[H]
\caption{Coordinate descent and Newton-Raphson}
\begin{algorithmic}
\State $\alpha_i^{(0)} = \alpha_i^*;\beta_i^{(0)} = \beta_i^*;\rho^{(0)} = \rho^*$;$t=0$;
\While{ $e^{(t)}<10^{-3}$}
\For{ $i=1,2,\cdots,n+1$}
  \If{ $i\leq n$}
\State approximate the zero of $\frac{\partial \ell_i}{\partial \alpha_i},\frac{\partial \ell_i}{\partial \beta_i}$ by Newton-Raphson method. i.e.,
\State iteratively update $\alpha_i,\beta_i$ as in Newton-Raphson method until the following sub-stopping-criterion is
\State reached,
\State $e_{i,s}= \max\{\ |\alpha_i^{s+1}-\alpha_i^{s}|, |\beta_i^{s+1}-\beta_i^s|\ \}<10^{-4}$.
 \State Where $\alpha_i^s,\beta_i^s$ refers to the updated  $\alpha_i,\beta_i$ after the $s^{th}$  iteration.
\Else { $i=n+1$}
 \State update $\rho$ in the same way.
 \EndIf
\EndFor

\State Let $\b{\alpha}^{(t)},\b{\beta}^{(t)},\rho^{(t)}$  be the parameter vector after the $t^{th}$ round iteration.
\State Let  $e^{(t)} = \sup\limits_{i\leq n} \big\{\ |\alpha_i^{(t)}-\alpha_i^{(t-1)}|,|\beta_i^{(t)}-\beta_i^{(t-1)}|, |\rho^{(t)}-\rho^{(t-1)}|\ \big\}$.
\State t = t+1;
 \EndWhile

 \end{algorithmic}
\end{algorithm}

We conduct three groups of experiments
corresponding to
figure \ref{asymsimul1},\ref{asymsimul2}
and \ref{asymsimul3} respectively.
The parameters  in the
three groups are generated in the
following way.
\begin{enumerate}
\item $\alpha_1 = 0,\mu_n=1$ (is known); $\alpha_i (i\ne 1),\beta_i$ are
independent uniformly distributed over [-1,1]
 and $\rho=0.6$.

 \item $\alpha_1=0,\mu_n=1$ (is known); $\alpha_i (i\ne 1),\beta_i$ are
 independent standard normal  and
 $\rho = 0.3$.

 \item $\alpha_1=0,\mu_n=1$ (is known); $\alpha_i\in
 \{0.3,0.7\} (i\ne 1)$
 and $\beta_i\in \{0.4, 0.6\}$ are
 independent and chosen uniformly at random; $\rho = -0.7$.
\end{enumerate}
For each group of parameter, we generate
networks of size $n=100,200,400$ by model
(\ref{asymmodel1}). In each setting 200 replications
are used.
The parameters
and the initial value of parameters in the optimization
algorithm are fixed through out all replications.
 We plot the Q-Q plot
comparing the empirical distribution
of the following random variables
 against the standard normal distribution:
 $\sqrt{n}(\h{\alpha}_2-\alpha^*_2)$,
$\sqrt{n}(\h{\alpha}_{50}-\alpha^*_{50})$, $\sqrt{n}(\h{\alpha}_{100}-\alpha^*_{100})$,
 $\sqrt{n}(\h{\beta}_1-\beta^*_1)$, $\sqrt{n}(\h{\beta}_{50}-\beta^*_{50})$,
 $\sqrt{n}(\h{\beta}_{100}-\beta^*_{100})$ and
 $n(\h{\rho}-\rho^*)$.
 The parameters $\alpha_1,\cdots,\alpha_{100},
 \beta_1,\cdots,\beta_{100}$ are fixed throughout all
 three groups of experiments in order to make the comparison.
 The results show that
 the distribution
 of all these random variables in all three groups of experiments
 converge to some normal distribution.

\begin{figure}[H]
\centering
\includegraphics[width=4.5in,height=3.5in]{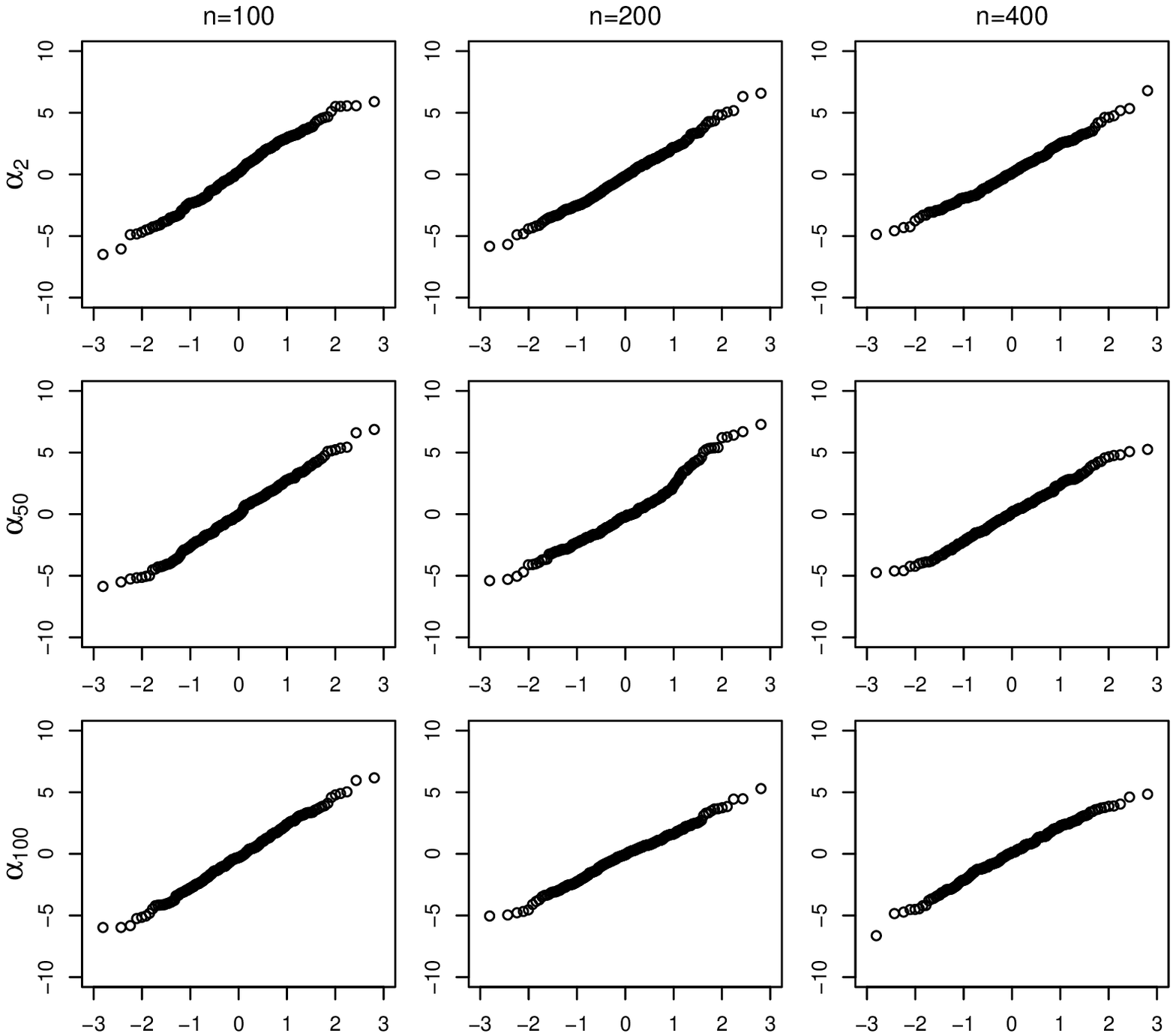} \\
\includegraphics[width=4.5in,height=3.5in]{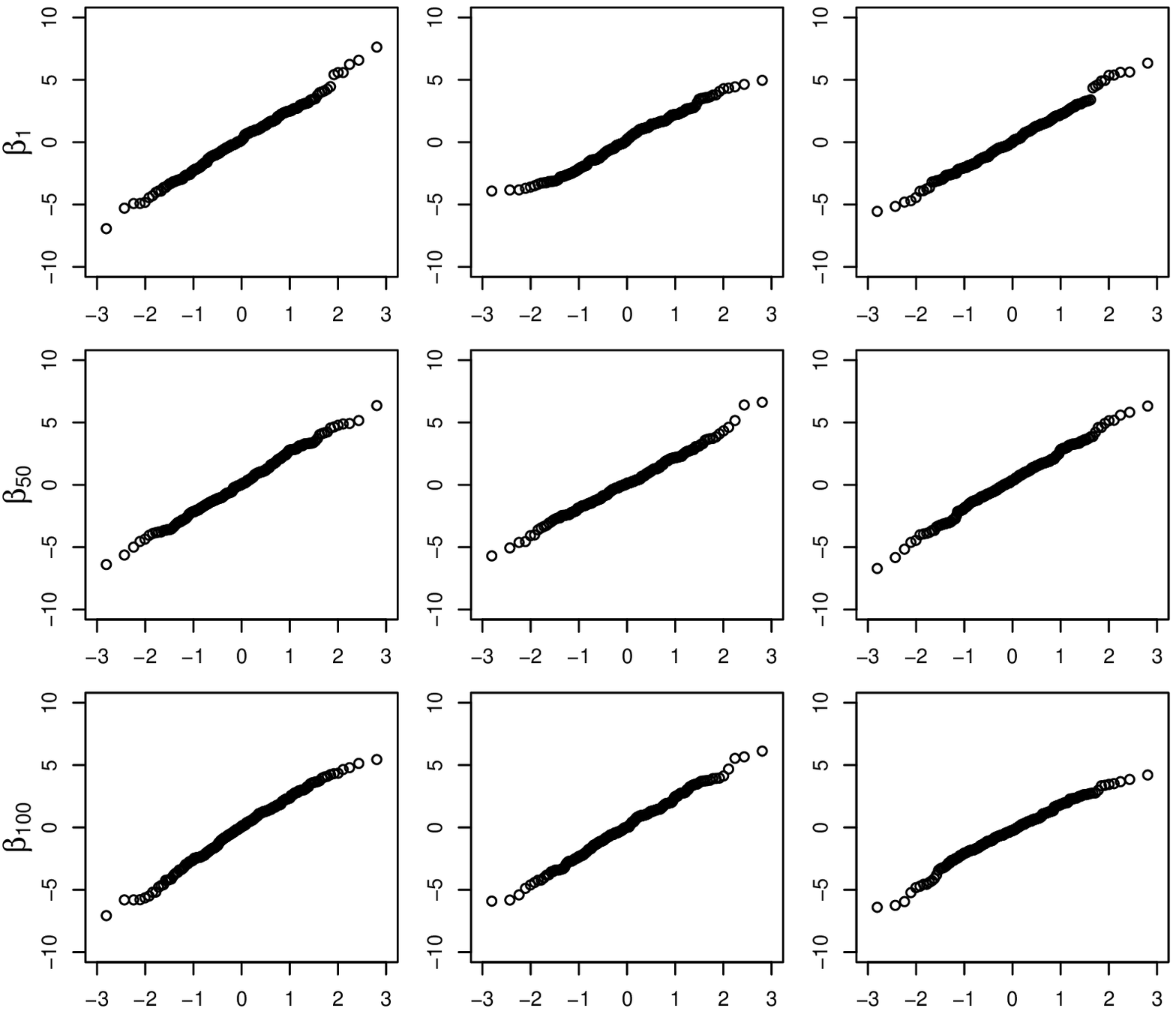} \\
\includegraphics[width=4.5in,height=1.2in]{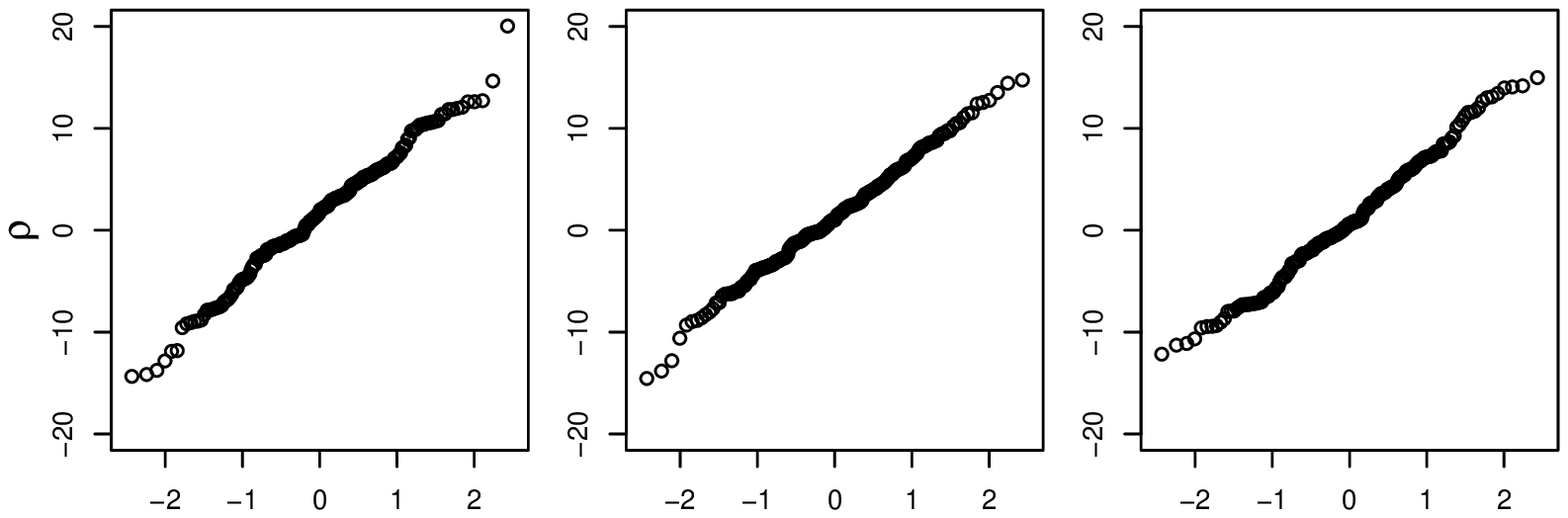}
\caption{The Q-Q plots of the  error (group 1).}
\label{asymsimul1}
\end{figure}

\begin{figure}[H]
\centering
\includegraphics[width=4.5in,height=3.5in]{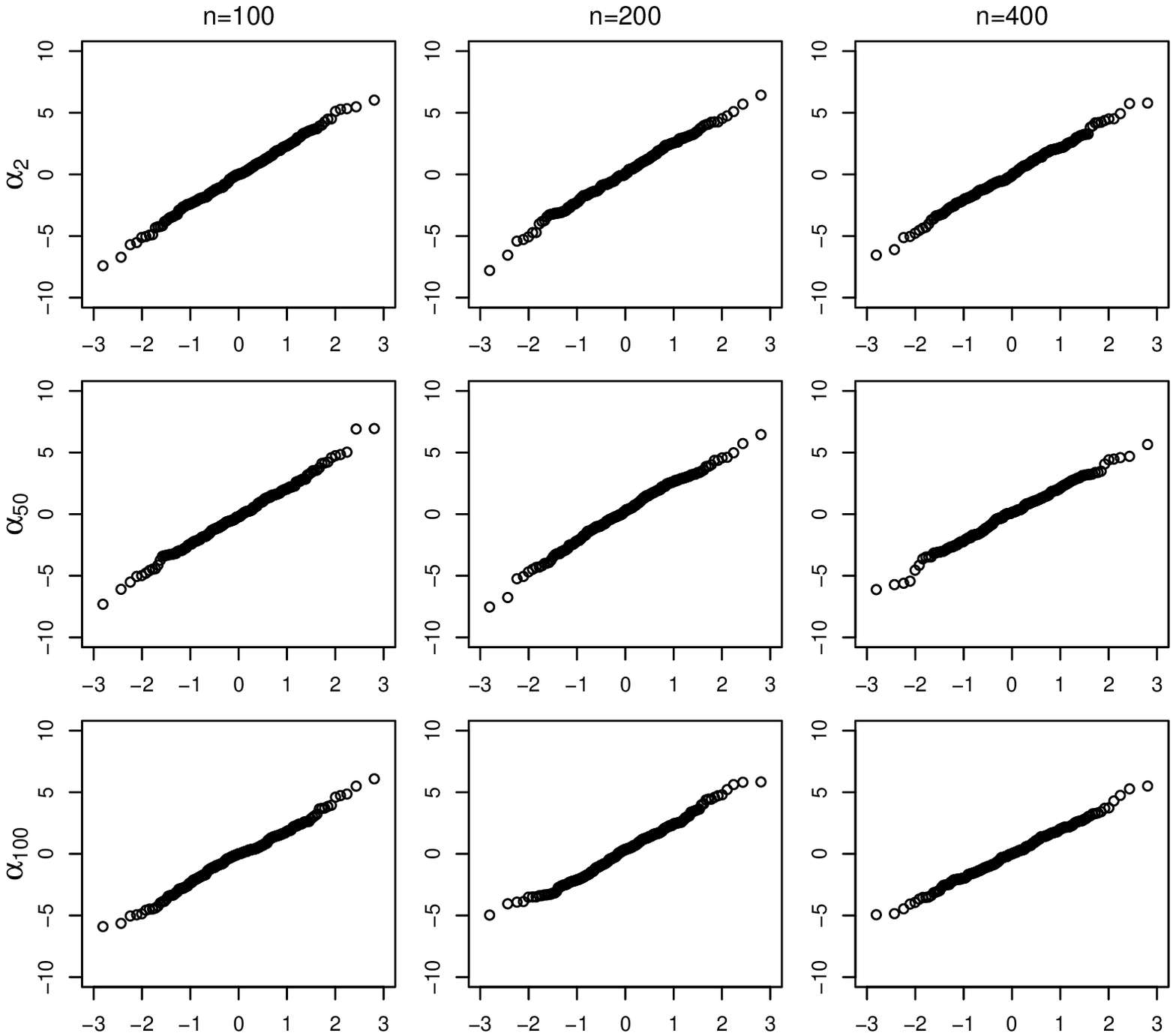} \\
\includegraphics[width=4.5in,height=3.5in]{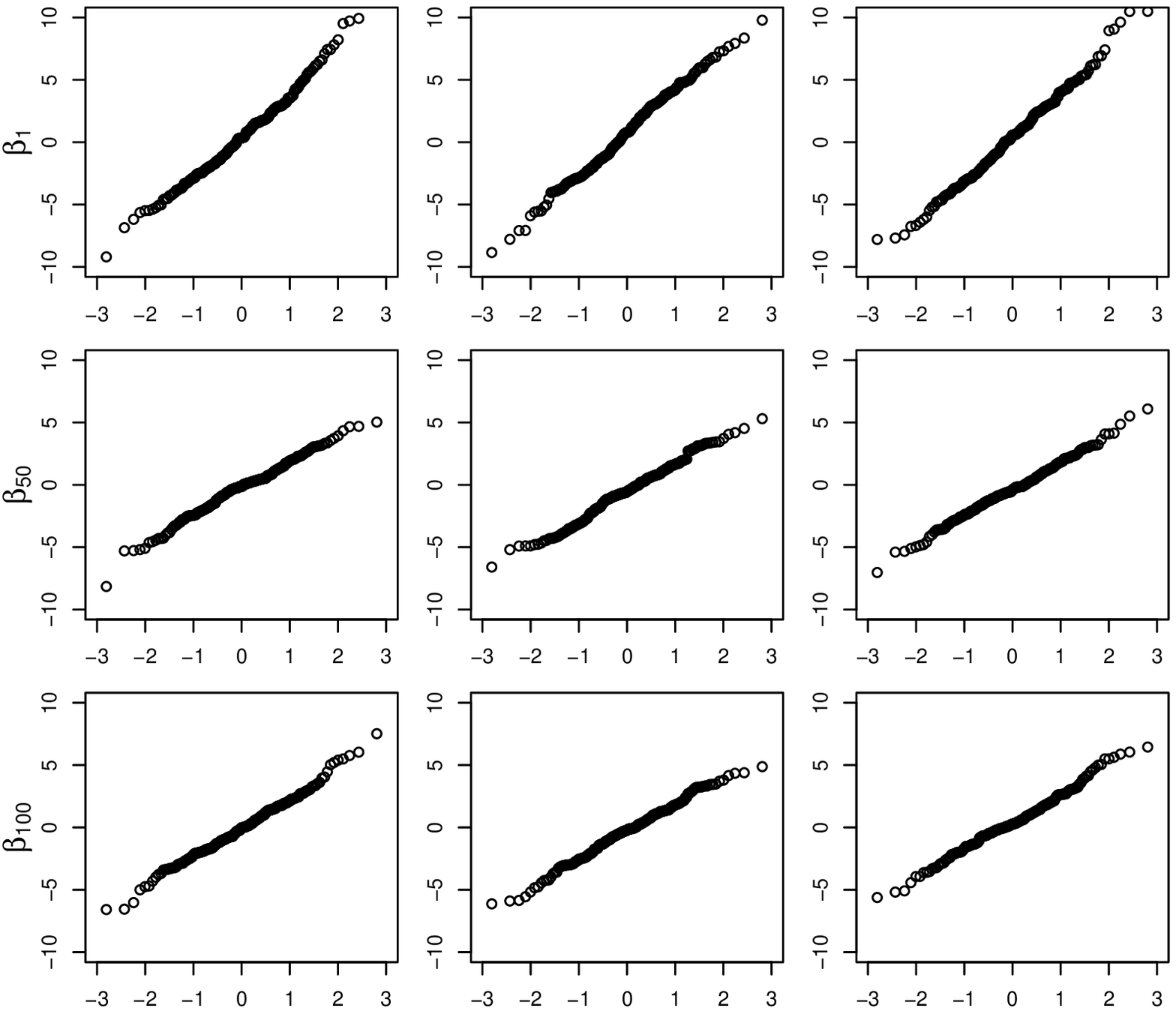} \\
\includegraphics[width=4.5in,height=1.2in]{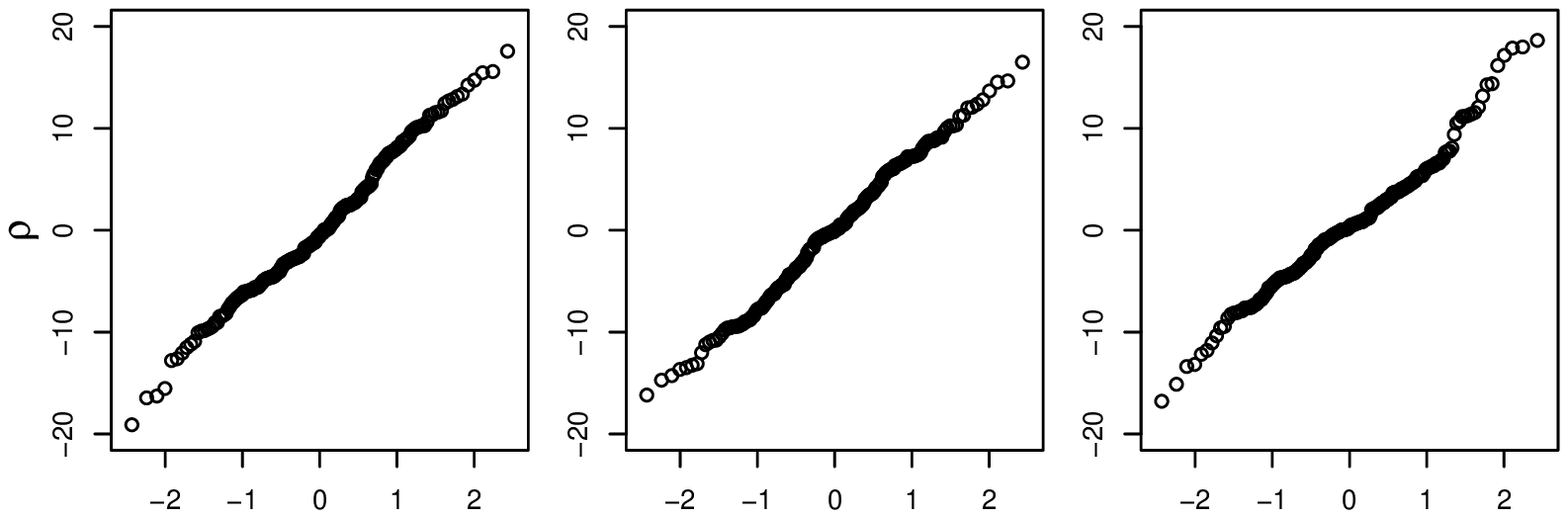}
\caption{The Q-Q plots of the error (group 2).}
\label{asymsimul2}
\end{figure}

\begin{figure}[H]
\centering
\includegraphics[width=4.5in,height=3.5in]{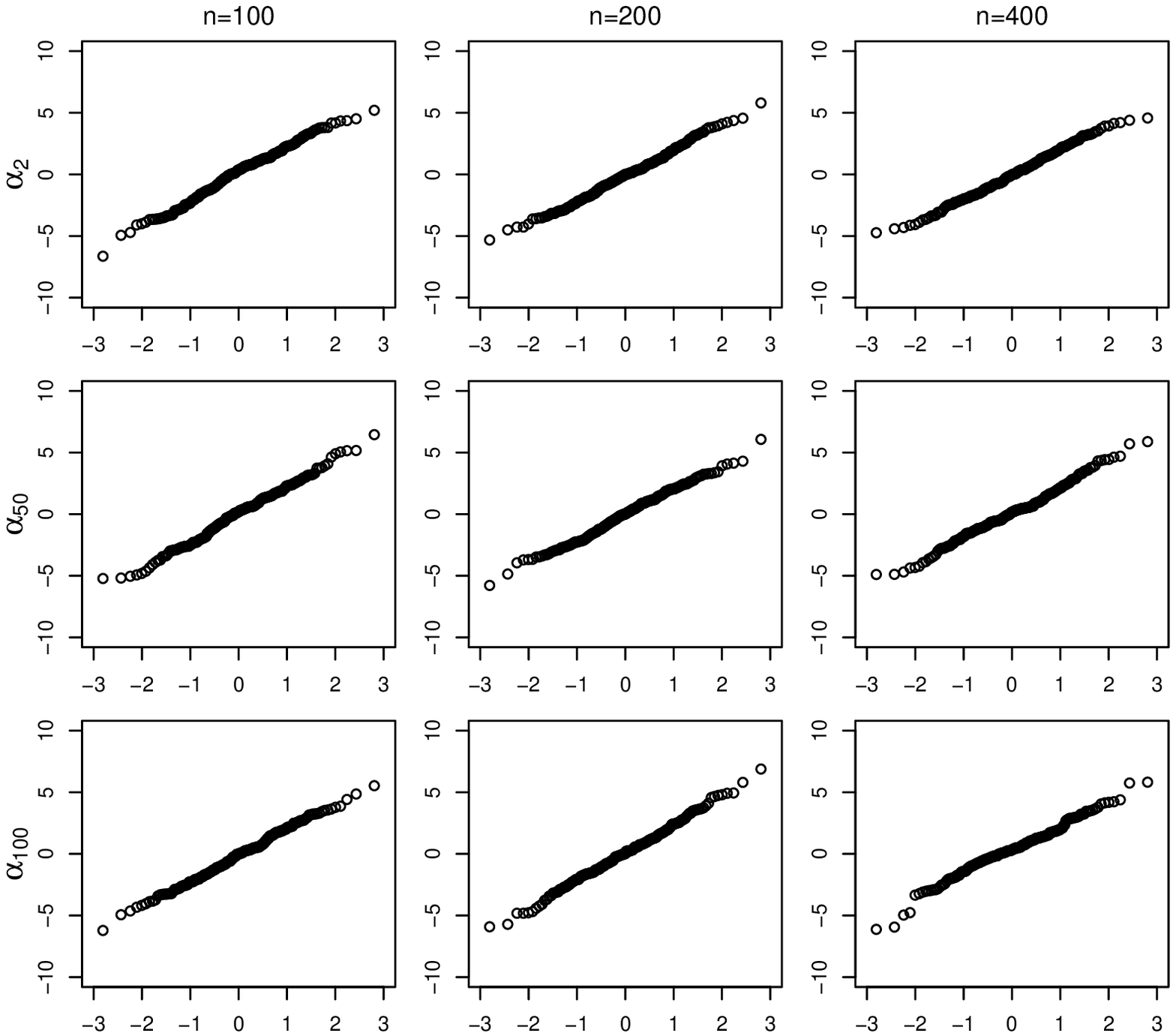} \\
\includegraphics[width=4.5in,height=3.5in]{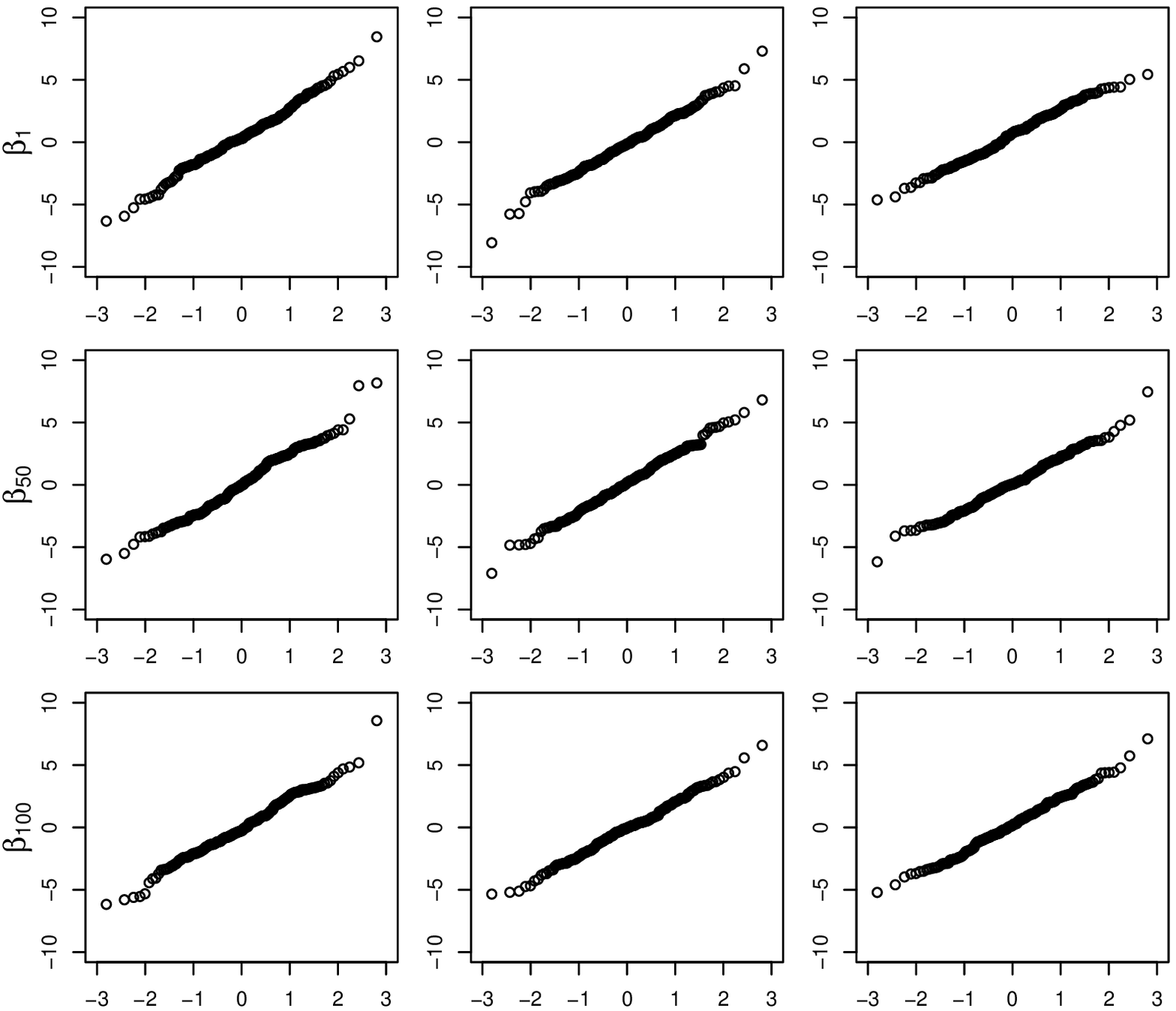} \\
\includegraphics[width=4.5in,height=1.2in]{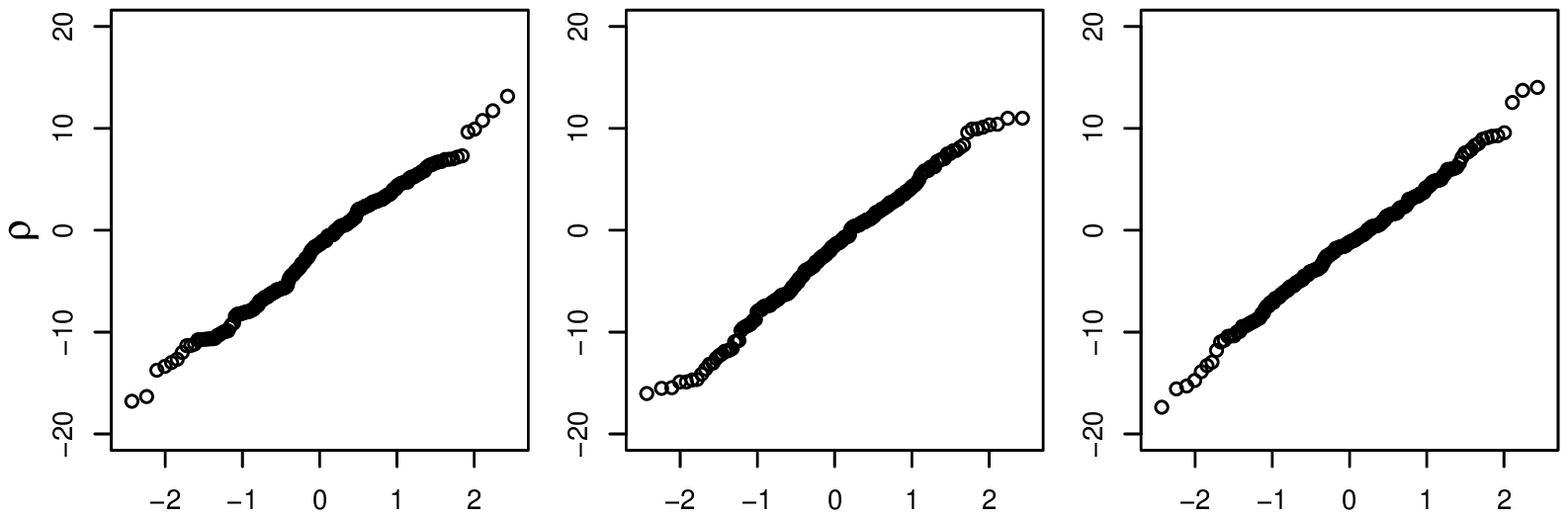}
\caption{The Q-Q plots of the error (group 3).}
\label{asymsimul3}
\end{figure}

\subsection{Uniform consistency condition}
\label{asymsubsec7}
The purpose of this subsection is to check if uniform
consistency holds under much weaker condition than
that in corollary \ref{asymcol1}.
We conduct one group of
experiment, figure \ref{asymsimul4}, on  model
(\ref{asymmodel1}) with
 $\mu_n=10\cdot n^{-\frac{2}{3}}$ being known,
 $\alpha_1 = 0$; $\alpha_i (i\ne 1), \beta_i$ being
 independently uniformly distributed over [-1,1]
 and $\rho=0.3$.
We use the following algorithm to approximate
the maximizer of the likelihood:
\begin{algorithm}[H]
\caption{Coordinate descent and gradient descent}
\begin{algorithmic}
\State  $\alpha_i^{(0)} = \alpha_i^*;\beta_i^{(0)} = \beta_i^*;\rho^{(0)} = \rho^*$;$t=0$;
\While{ $e^{(t)}<0.05$ }
\For{ $i=1,2,\cdots,n+1$}
  \If{ $i\leq n$}
\State maximize $\ell_i$ with respect to $\alpha_i,\beta_i$ by gradient descent algorithm, i.e.,
\State iteratively update $\alpha_i,\beta_i$ as in gradient descent algorithm until the following sub-stopping-criterion is
\State reached:
\State $e_{i,s}= \max\{\ |\frac{\partial \ell_i}{\partial \alpha_i}(\b{\alpha}_{-i}^{(t-1)},\b{\beta}_{-i}^{(t-1)},
\alpha_{i}^{s},\beta_i^s,\rho^{(t-1)},\mu_n)|, |\frac{\partial \ell_i}{\partial \beta_i}(\b{\alpha}_{-i}^{(t-1)},\b{\beta}_{-i}^{(t-1)},
\alpha_{i}^{s},\beta_i^s,\rho^{(t-1)},\mu_n)|\ \}<10^{-3}$.
 \State Where $\alpha_i^s,\beta_i^s$ refers to the updated  $\alpha_i,\beta_i$ after the $s^{th}$  iteration
 and $\alpha_i^0 = \alpha_i^{(t-1)},\beta_i^0 = \beta_i^{(t-1)}$.
\Else { $i=n+1$}
 \State update $\rho$ in the same way.
 \EndIf
\EndFor

\State Let $\b{\alpha}^{(t)},\b{\beta}^{(t)},\rho^{(t)}$  be the parameter vector after the $t^{th}$ round iteration.
\State Let  $e^{(t)} = \sup\limits_{i\leq n} \big\{\
|\frac{\partial \ell_i}{\partial \alpha_i}(\b{\alpha}^{(t)},\b{\beta}^{(t)},\rho^{(t)},\mu_n)|,
|\frac{\partial \ell_i}{\partial \beta_i}(\b{\alpha}^{(t)},\b{\beta}^{(t)},\rho^{(t)},\mu_n)|,
|\frac{\partial \ell}{\partial \rho}(\b{\alpha}^{(t)},\b{\beta}^{(t)},\rho^{(t)},\mu_n)|\ \big\}$.
\State t = t+1;
 \EndWhile

 \end{algorithmic}
\end{algorithm}

We generate
networks of size $n=400,1400$ and $5000$ by model
(\ref{asymmodel1}).
In each setting 30 replications are used.
The parameters
and the initial value of parameters in the optimization
algorithm are fixed through out all replications.
We use mean square error bound, namely
$\frac{1}{n}\sum\limits_{i\leq n}
\big(\ |\hat{\alpha}_i-\alpha_i^*|^2+|\h{\beta}_i-
\beta_i^*|^2\big)$, as the weak consistency measure.
We use uniform error bound, namely
$\sup\limits_{i\leq n}
\big\{\ |\hat{\alpha}_i-\alpha_i^*|^2+ |\h{\beta}_i-
\beta_i^*|^2\ \big\}$,
 as the uniform consistency measure.
The results show that,
both mean square error bound (subfigure \ref{asymsimul4-a})
and the uniform error bound (subfigure \ref{asymsimul4-b})
decreases as $n$ increases.

\begin{figure}[H]
  \centering
  \subfigure[Mean square error bound]{
    \label{asymsimul4-a} 
    \includegraphics[width=2.5in,height=2.5in]{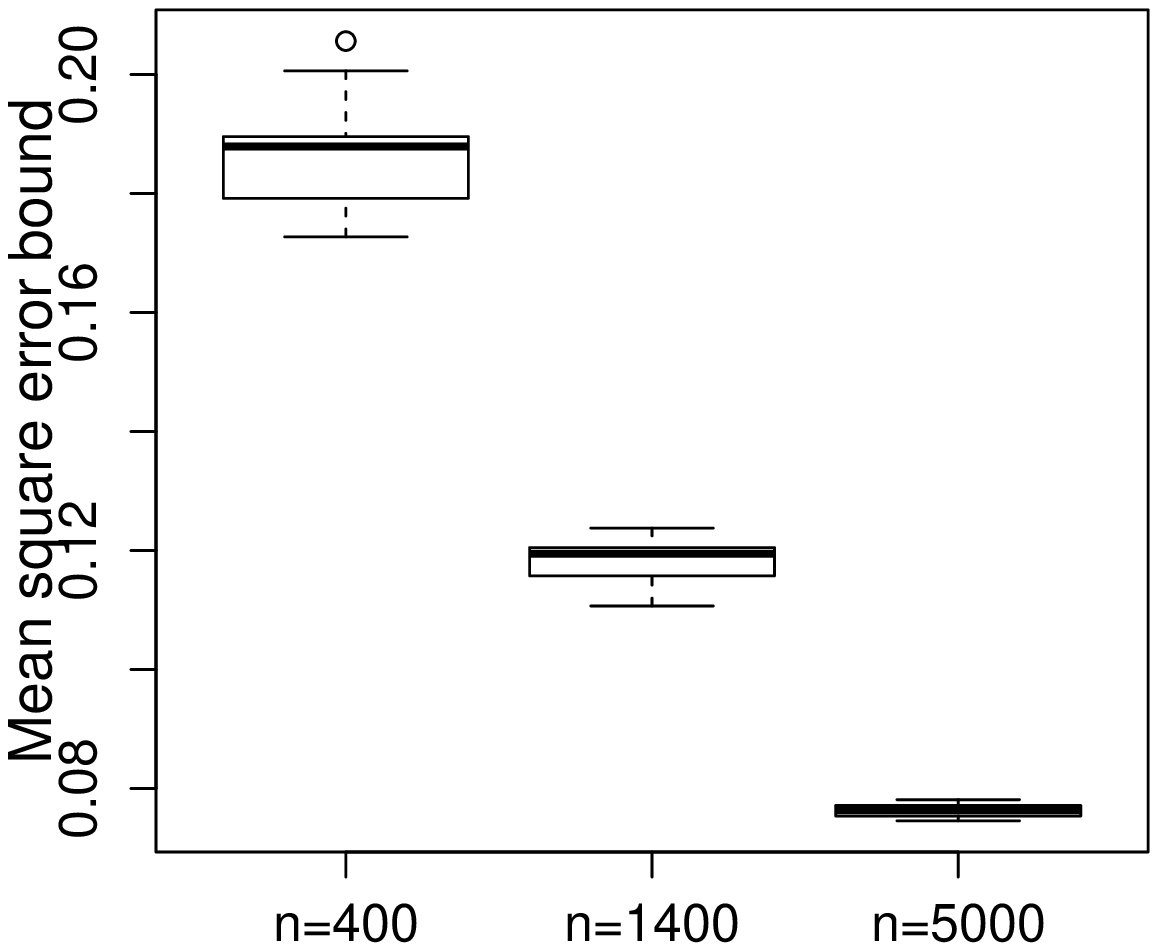}}
  \subfigure[Uniform error bound]{
    \label{asymsimul4-b} 
    \includegraphics[width=2.5in,height=2.5in]{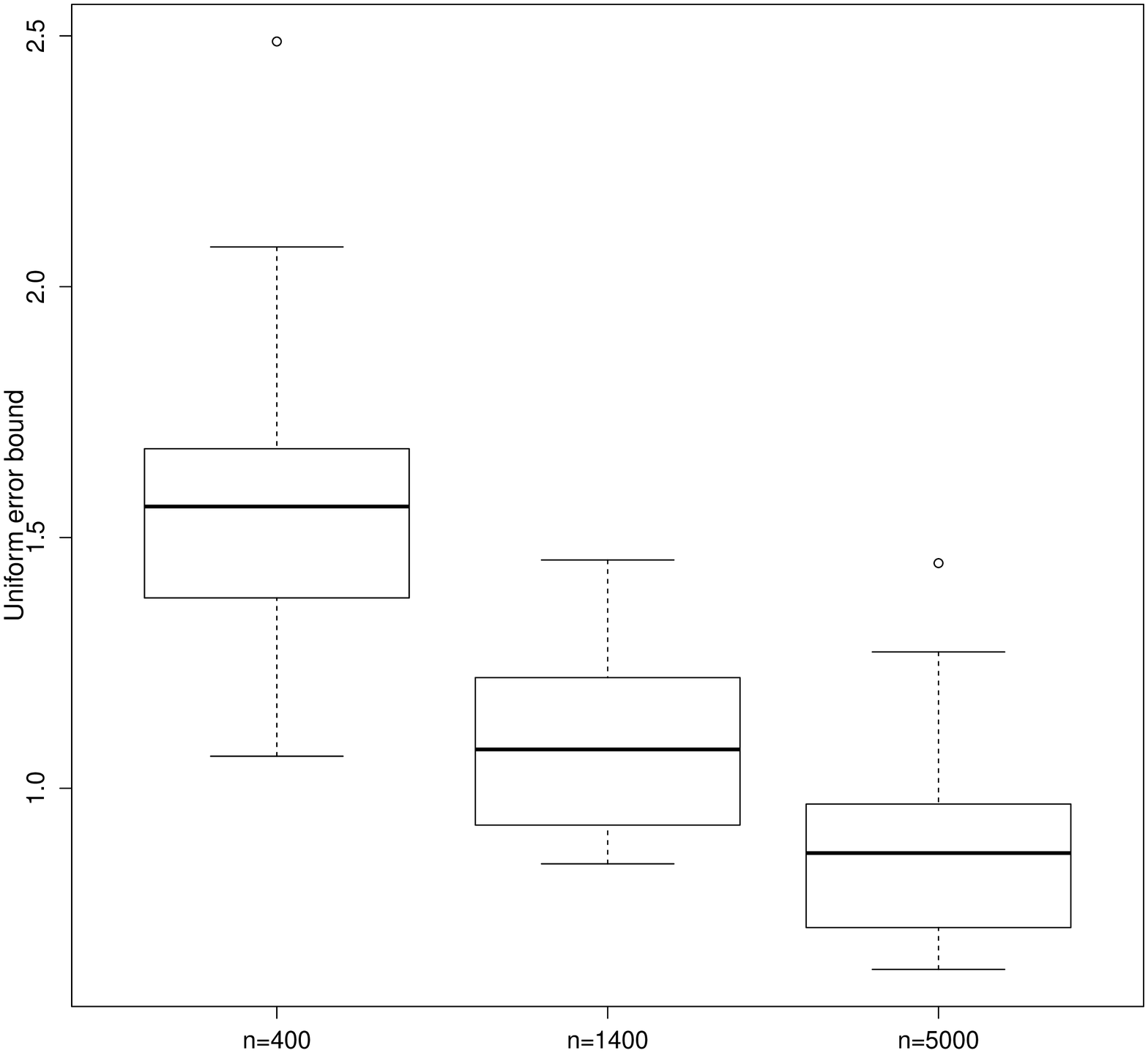}}

  \caption{Consistency under weaker condition}
  \label{asymsimul4} 
\end{figure}

\subsection{Overfitting}
\label{asymsubsec8}

In this subsection, our results show that the MLE can be
overfitting. Because the uniform error bound can be reduced when
the stopping criterion is relaxed. Meanwhile, the mean square
error bound is not reduced.
We conduct one group of
experiment, figure \ref{asymsimul4}, on the exponential network model
(\ref{asymmodel1}) with
 $\mu_n=10\cdot n^{-\frac{2}{3}}$ being known,
 $\alpha_1 = 0$; $\alpha_i (i\ne 1), \beta_i$ being
 independently uniformly distributed over [-1,1]
 and $\rho=0.3$.
We use the following algorithm to approximate
the maximizer of the likelihood. Let $\varepsilon_i,\varepsilon'_i$
be independent and uniformly  distributed over [-0.5,0.5].
\begin{algorithm}[H]
\caption{Coordinate descent and gradient descent}
\begin{algorithmic}
\State  $\alpha_i^{(0)} = \alpha_i^*+\varepsilon_i;\beta_i^{(0)} = \beta_i^*+\varepsilon'_i;$ $\rho^{(0)} = \rho^*$;$t=0$;
\While{ $e^{(t)}<e$ }
\For{ $i=1,2,\cdots,n+1$}
  \If{ $i\leq n$}
\State maximize $\ell_i$ with respect to $\alpha_i,\beta_i$ by gradient descent algorithm, i.e.,
\State iteratively update $\alpha_i,\beta_i$ as in gradient descent algorithm until the following sub-stopping-criterion is
\State reached:
\State $e_{i,s}= \max\{\ |\frac{\partial \ell_i}{\partial \alpha_i}(\b{\alpha}_{-i}^{(t-1)},\b{\beta}_{-i}^{(t-1)},
\alpha_{i}^{s},\beta_i^s,\rho^{(t-1)},\mu_n)|, |\frac{\partial \ell_i}{\partial \beta_i}(\b{\alpha}_{-i}^{(t-1)},\b{\beta}_{-i}^{(t-1)},
\alpha_{i}^{s},\beta_i^s,\rho^{(t-1)},\mu_n)|\ \}<10^{-3}$.
 \State Where $\alpha_i^s,\beta_i^s$ refers to the updated  $\alpha_i,\beta_i$ after the $s^{th}$  iteration
 and $\alpha_i^0 = \alpha_i^{(t-1)},\beta_i^0 = \beta_i^{(t-1)}$.
\Else { $i=n+1$}
 \State update $\rho$ in the same way.
 \EndIf
\EndFor

\State Let $\b{\alpha}^{(t)},\b{\beta}^{(t)},\rho^{(t)}$  be the parameter vector after the $t^{th}$ round iteration.
\State Let  $e^{(t)} = \sup\limits_{i\leq n} \big\{\
|\frac{\partial \ell_i}{\partial \alpha_i}(\b{\alpha}^{(t)},\b{\beta}^{(t)},\rho^{(t)},\mu_n)|,
|\frac{\partial \ell_i}{\partial \beta_i}(\b{\alpha}^{(t)},\b{\beta}^{(t)},\rho^{(t)},\mu_n)|,
|\frac{\partial \ell}{\partial \rho}(\b{\alpha}^{(t)},\b{\beta}^{(t)},\rho^{(t)},\mu_n)|\ \big\}$.
\State t = t+1;
 \EndWhile

 \end{algorithmic}
\end{algorithm}
We refer to the constant $e$ in
the above algorithm as the stopping criterion.
We generate
 networks of size $n=1000,2000$ by model
(\ref{asymmodel1}) and
test the algorithm for $e= 0.08, 0.02$.
In each setting 30 replications are used.
The parameters
and the initial value of parameters in the optimization
algorithm are fixed through out all replications.
Figure \ref{asymsimul5} shows how mean square error bound
(subfigure \ref{asymsimul5-a})
 and uniform error bound (subfigure \ref{asymsimul5-b})
vary with stopping criterion $e$.
 Both error bounds are not explicitly
 reduced when $e$ decreases.

\begin{figure}[H]
  \centering
  \subfigure[Mean square error bound]{
    \label{asymsimul5-a} 
    \includegraphics[width=4in,height=2in]{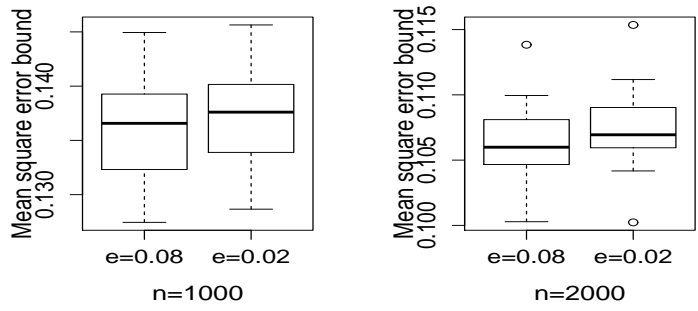}}
  \hspace{0.3in}
  \subfigure[Uniform error bound]{
    \label{asymsimul5-b} 
    \includegraphics[width=4in,height=2in]{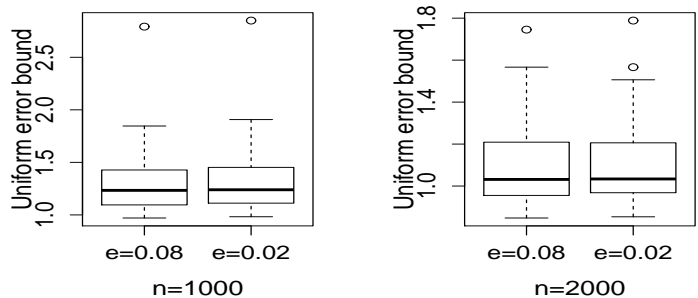}}

  \caption{Overfitting}
  \label{asymsimul5} 
\end{figure}

\section{Conclusion}
Our results confirm that the condition ensuring
weak consistency in continuous parameter case
is almost the same as that in discrete parameter
case. But the uniform consistency condition
we obtained is stronger than that in
the discrete parameter case. It is not clear if the
uniform consistency condition
in the continuous parameter case is necessarily
stronger than that in the discrete parameter case.
Our experiments in section
 \ref{asymsecsimulation} show that the MLE in the exponential
network model with interaction effect might be
asymptotically normal when $\mu_n=1$. This is stronger
than the consistency results we proved.
The experiments in section \ref{asymsubsec7}
 also show that it is possible that
uniform consistency condition holds under much weaker
condition than that in corollary \ref{asymcol1}.
However, the experiment in section
\ref{asymsubsec8} shows that relaxing the stopping criterion
 somehow reduces uniform error bound. This indicates that in
continuous parameter case, MLE may indeed overfit.
We take this as an indirect evidence confirming that
the uniform consistency condition in continuous parameter
case is necessarily stronger (but may not be necessarily
as strong as that in corollary \ref{asymcol1}).
We also show that by applying
the MLE on the discretized parameter space,
the  uniform consistency condition
is almost the same as that in the
discrete parameter case.

\section{Proof}

\subsection{Proof of theorem \ref{asymth5}}

The proof of theorem \ref{asymth5}
concern bounding
the fluctuation of log likelihood
on the whole parameter space.
We firstly establish an auxiliary
lemma on the fluctuation of
the sum of independent discrete
random variable.
\begin{lemma}\label{asymlem3}
Suppose $Z_i,i\leq m$ are
$m$ independent random variables with
$|Z_i|\leq C_m$.
There is a univeral
constant $c_0$ such that
for any $u$, if
$\dfrac{|u|}{\sum\limits_{i\leq m}
\oE[Z_i^2]}<\frac{c_0}{C_m}$,
let
$Z = \sum\limits_{i\leq m} Z_i$,
then we have,
\begin{align}
\mbbP\bigg(
\big|
Z- \oE
[Z]\big|\geq u
\bigg)
\leq
\exp\bigg\{
-\frac{u^2}{2\sum\limits_{i\leq m}
\oE[Z_i^2]}
+\frac
{u^3 \sum\limits_{i\leq m }
\oE[|Z_i|^3]
}{\big(\sum\limits_{i\leq m}
\oE[Z_i^2]\big)^3}
\bigg\}.
\end{align}

\end{lemma}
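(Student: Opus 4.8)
The plan is to run the classical Chernoff (exponential-moment) method, but instead of optimizing the tilting parameter exactly I will substitute the Gaussian-regime value and control everything through a cubic Taylor remainder of the moment generating function. Write $V_2=\sum_{i\le m}\oE[Z_i^2]$ and $V_3=\sum_{i\le m}\oE[|Z_i|^3]$ for brevity. For $t>0$, Markov's inequality applied to $e^{t(Z-\oE[Z])}$ together with independence gives $\mbbP(Z-\oE[Z]\ge u)\le e^{-tu}\,e^{-t\oE[Z]}\prod_{i\le m}\oE[e^{tZ_i}]$, so the whole problem reduces to a second-order control of a single factor $\oE[e^{tZ_i}]$, with the linear terms cancelling cleanly against $e^{-t\oE[Z]}$.

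The key estimate is an expansion of the moment generating function with an explicit remainder using boundedness. I would write $\oE[e^{tZ_i}]=1+t\oE[Z_i]+\tfrac{t^2}{2}\oE[Z_i^2]+S_i$ with $S_i=\oE\big[e^{tZ_i}-1-tZ_i-\tfrac12 t^2 Z_i^2\big]$, and bound $|S_i|\le \tfrac16 |t|^3 e^{|t|C_m}\oE[|Z_i|^3]$ via the elementary inequality $|e^x-1-x-\tfrac12 x^2|\le \tfrac16|x|^3 e^{|x|}$ applied with $|Z_i|\le C_m$. Then $1+a\le e^a$ yields $\oE[e^{tZ_i}]\le\exp\big(t\oE[Z_i]+\tfrac{t^2}{2}\oE[Z_i^2]+c'|t|^3\oE[|Z_i|^3]\big)$ with $c'=\tfrac16 e^{|t|C_m}$. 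Multiplying over $i$ and cancelling the linear part leaves $\mbbP(Z-\oE[Z]\ge u)\le \exp\big(-tu+\tfrac{t^2}{2}V_2+c't^3V_3\big)$.

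Finally I would substitute $t=u/V_2$, which converts $-tu+\tfrac{t^2}{2}V_2$ into exactly $-\tfrac{u^2}{2V_2}$ and the cubic term into $c'\,u^3V_3/V_2^3$, reproducing the claimed exponent. This is precisely where the hypothesis $|u|/V_2<c_0/C_m$ is used: it forces $tC_m=uC_m/V_2<c_0$, so the remainder factor obeys $e^{|t|C_m}\le e^{c_0}$ and $c'$ can be driven to at most $1$ by choosing the universal constant $c_0$ small. The lower tail is handled by the identical argument applied to $-Z_i$ (using $\oE[(-Z_i)^2]=\oE[Z_i^2]$ and $\oE[|{-Z_i}|^3]=\oE[|Z_i|^3]$), and combining the two tails gives the two-sided bound. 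The main bookkeeping obstacle is the constant chasing: verifying the remainder inequality, tracking how $c_0$ both keeps the tilt inside the region where $e^{|t|C_m}$ is controlled and pins the cubic coefficient below the stated value, and noting that summing the two tail contributions contributes only a harmless multiplicative factor, negligible in the union-bound applications where this lemma is invoked.
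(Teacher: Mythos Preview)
Your proposal is correct and follows essentially the same route as the paper: the Chernoff bound, a cubic Taylor control on each factor, and the substitution $t=u/V_2$ with the smallness hypothesis $|u|/V_2<c_0/C_m$ used to keep $|t|C_m$ bounded and fix the cubic coefficient. The only cosmetic difference is that the paper Taylor-expands $\log \oE[e^{tZ_i}]$ directly rather than expanding $\oE[e^{tZ_i}]$ and passing through $1+a\le e^a$; the two routes are equivalent, and your remark about the harmless factor of $2$ from combining the two tails matches the paper's treatment of the second tail as ``similar''.
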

\begin{proof}
The proof concern Chernoff inequality:
\begin{align}\label{asymeq24}
\mbbP\bigg(
Z- \oE
[Z]\leq -u
\bigg)
\leq \inf\limits_{t<0}
\exp
\big\{
\log(\oE[e^{tZ}])-
\oE[tZ]
+tu\big\}.
\end{align}
Note that
$\log(\oE[e^{tZ}]) =
\sum\limits_{i\leq m}
\log \oE[e^{tZ_i}]$.
Therefore if $|t| <\frac{c_0}{C_m}$, by
Taylor expansion
\begin{align}\label{asymeq23}
\log \oE[e^{tZ_i}]
\leq t\oE[Z_i]+
\frac{1}{2}t^2\oE[Z_i^2]
+ |t|^3 \oE[|Z_i|^3].
\end{align}
Set $t = -\dfrac{u}{\sum\limits_{i\leq m}
\oE[Z_i^2]}$,
so $|t| <\frac{c_0}{C_m}$.
Therefore
substitute (\ref{asymeq23})
into (\ref{asymeq24})
we have,
\begin{align}
\mbbP\bigg(
Z- \oE
[Z]\leq -u
\bigg)
\leq
\exp\big\{
-\dfrac{u^2}{2\sum\limits_{i\leq m}
\oE[Z_i^2]}
+\dfrac
{u^3 \sum\limits_{i\leq m }
\oE[|Z_i|^3]
}{\big(\sum\limits_{i\leq m}
\oE[Z_i^2]\big)^3}
\big\}.
\end{align}

It is similar to prove
 the other direction.

\end{proof}

\begin{lemma}\label{asymlem5}
\begin{align}
\sup\limits_{\boldsymbol{\al}\in\mathcal{B}_1
\times\cdots\times\mathcal{B}_n,
\boldsymbol{\rho}\in\mathcal{B},\mu_n
\in[\frac{1}{n},1]} \bigg|\ell(\mb{Y};\boldsymbol{\al},\boldsymbol{\rho},\mu_n)
-\oE\big[
 \ell(\mb{Y};\boldsymbol{\al},\boldsymbol{\rho},\mu_n)\
\big]\bigg|
=O_p(n^{\frac{3}{2}}\sqrt{\mu_n^*}(\log n)^2)
.\end{align}
\end{lemma}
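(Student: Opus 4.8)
The plan is to bound the supremum by combining a pointwise tail bound from Lemma \ref{asymlem3} with a covering ($\varepsilon$-net) argument, using the Lipschitz continuity of $\ell$ to control the gap between net points and arbitrary parameters. Write $Z_{ij}(\boldsymbol{\alpha},\boldsymbol{\rho},\mu_n) = \ell_{ij}(Y_{ij};\boldsymbol{\alpha},\boldsymbol{\rho},\mu_n) - \oE[\ell_{ij}]$, so that $\ell - \oE[\ell] = \sum_{i<j} Z_{ij}$ is a sum of $\binom{n}{2}$ independent terms. Assumption \ref{asymass1}(2) together with $\mu_n\geq 1/n$ gives $|Z_{ij}| \leq C_m$ with $C_m = O(\log n)$: on the outcome $Y_{ij}\neq\mathbf{0}$ we have $\log\Theta_{ij} = \log\mu_n + \log f = O(\log n)$, while on $Y_{ij}=\mathbf{0}$ we have $\log\Theta_{ij} = O(\mu_n)$.

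The key quantitative input is the size of the variance proxy. Since the data are drawn from the true model, $\mbbP(Y_{ij}\neq\mathbf{0}) = O(\mu_n^*)$, and a direct computation of the centered second and third moments gives $\oE[Z_{ij}^2] = O(\mu_n^*(\log n)^2)$ and $\oE[|Z_{ij}|^3] = O(\mu_n^*(\log n)^3)$; the contribution of the dominant outcome $Y_{ij}=\mathbf{0}$ drops out after centering, because its log-probability is within $O(\mu_n^*\log n)$ of $\oE[\ell_{ij}]$. Summing, $V := \sum_{i<j}\oE[Z_{ij}^2] = O(n^2\mu_n^*(\log n)^2)$ and $T := \sum_{i<j}\oE[|Z_{ij}|^3] = O(n^2\mu_n^*(\log n)^3)$. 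Applying Lemma \ref{asymlem3} with $u = K n^{3/2}\sqrt{\mu_n^*}(\log n)^2$, the Gaussian term is $u^2/(2V) = \tfrac{K^2}{2}n(\log n)^2(1+o(1))$, while the cubic correction $u^3 T/V^3$ is of smaller order precisely when $(\log n)^2 = o(n\mu_n^*)$ (the regime of Theorem \ref{asymth5}); the admissibility condition $|u|/V < c_0/C_m$ holds in the same regime. Hence, for each fixed parameter, $\mbbP(|\sum_{i<j}Z_{ij}| \geq u) \leq \exp\{-\tfrac{K^2}{2}n(\log n)^2(1+o(1))\}$.

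Next I would discretize. Because $\mathcal{B}\subset\mathbb{R}^d$ is compact (Assumption \ref{asymass1}(1)), the set $\mathcal{B}_1\times\cdots\times\mathcal{B}_n\times\mathcal{B}$ lies in a bounded region of $\mathbb{R}^{d(n+1)}$ and admits a $\delta$-net of cardinality at most $(C/\delta)^{d(n+1)}$; a separate net of $[1/n,1]$ handles $\mu_n$ and contributes only an $O(\log n)$ factor to the log-cardinality. Taking $\delta$ polynomially small in $n$ yields $\log N = O(n\log n)$, so the union bound $N\exp\{-\tfrac{K^2}{2}n(\log n)^2\}\to 0$ for $K$ large, since $n(\log n)^2 \gg n\log n$. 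To pass from the net to the full supremum I would use that $\ell$ is Lipschitz in the parameters: by Assumption \ref{asymass1}(2) the likelihood is bounded away from $0$, and with the bounded gradient of $f$ the map $(\boldsymbol{\alpha},\boldsymbol{\rho})\mapsto\ell$ has Lipschitz constant $O(n^2)$ (each $\boldsymbol{\alpha}_i$ enters only its $n-1$ incident edges, $\boldsymbol{\rho}$ enters all $\binom{n}{2}$ of them), while the $\mu_n$ direction contributes a factor controlled by $1/\mu_n = O(n)$ only on the $O_p(n^2\mu_n^*)$ present edges. Choosing $\delta$ small enough (e.g.\ $\delta = n^{-4}$) makes the discretization error negligible relative to the target $n^{3/2}\sqrt{\mu_n^*}(\log n)^2$, which is itself at least of order $n$ since $\mu_n^*\geq 1/n$.

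The main obstacle is the tension in choosing the net fineness $\delta$: it must be small enough that the Lipschitz error falls below the target bound uniformly over the whole range $\mu_n\in[1/n,1]$ --- where the $\mu_n$- and $\boldsymbol{\rho}$-directional Lipschitz constants are largest --- yet coarse enough that $\log N \ll u^2/V = n(\log n)^2$. Reconciling these, together with verifying that the cubic correction in Lemma \ref{asymlem3} remains subdominant across the full parameter range (which is what forces the $(\log n)^2 = o(n\mu_n^*)$ condition), is the delicate bookkeeping at the heart of the proof; everything else is routine moment estimation and a union bound.
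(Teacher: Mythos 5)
Your proposal follows essentially the same route as the paper: a covering of the parameter space with $\log$-cardinality $O(n\log n)$, the Chernoff/Bernstein-type bound of Lemma \ref{asymlem3} applied at $u = n^{3/2}\sqrt{\mu_n^*}(\log n)^2$ with the same moment estimates ($|\ell_{ij}|=O(\log n)$, second-moment sum of order $n^2\mu_n^*(\log n)^2$, third-moment sum of order $n^2\mu_n^*(\log n)^3$, cubic correction $O(n^{1/2}(\mu_n^*)^{-1/2}(\log n)^3)$), and a union bound closed by the condition $(\log n)^2=o(n\mu_n^*)$. The only cosmetic difference is that you center the summands before invoking the concentration lemma while the paper applies it to the raw $\ell_{ij}$; the bookkeeping is otherwise identical, so the proof is correct and matches the paper's.
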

\begin{proof}
Since $\mathcal{B}$ is compact
by assumption \ref{asymass1} item (1), we cover
$\mathcal{B}_1\times\cdots\times
\mathcal{B}_n$ with
$n^{2n}$ many points,
cover $\mathcal{B}$ with
$n^2$ many points, cover
$[\frac{1}{n},1]$ with
$n^3$ many points
 such that for every
 $\boldsymbol{\al}\in \mathcal{B}_1\times
 \cdots\times \mathcal{B}_n,\boldsymbol{\rho}\in\mathcal{B},
 \mu_n\in[\frac{1}{n},1]$ there exists
 a covering point such that
 the log likelihood on the covering
 point is close to
 that on $\boldsymbol{\al},\boldsymbol{\rho}\in\mathcal{B},
 \mu_n$.
So it suffices to show that
\begin{align}
\sup\limits_{(\boldsymbol{\al}
,\boldsymbol{\rho},\mu_n)\in B_n}
\bigg|\ell(\mb{Y};\boldsymbol{\al},\boldsymbol{\rho},\mu_n)
-\oE\big[
 \ell(\mb{Y};\boldsymbol{\al},\boldsymbol{\rho},\mu_n)\
\big]\bigg|
=O_p(n^{\frac{3}{2}}\sqrt{\mu_n^*}(\log n)^2),
\end{align}
where $B_n$ is the covering set.
By assumption \ref{asymass1} item (2)
on $f$ and
the fact $\mu_n^*\geq \frac{1}{n}$,
we have:
\begin{align}
&|\ell_{ij}(Y_{ij};\boldsymbol{\al}_i,\boldsymbol{\al}_j,\boldsymbol{\rho},\mu_n)|\leq
2\log n,
\\ \nonumber 
&\sum\limits_{i<j\leq n}
\oE\big[|\ell_{ij}(Y_{ij};\boldsymbol{\al}_i,
\boldsymbol{\al}_j,\boldsymbol{\rho},\mu_n)|^3
\big]
=O( \mu_n^* n^2 (\log n)^3),
\\ \nonumber
&\sum\limits_{i<j\leq n}
\oE\big[|\ell_{ij}(Y_{ij};\boldsymbol{\al}_i,
\boldsymbol{\al}_j,\boldsymbol{\rho},\mu_n)|^2
\big]
=\Theta( \mu_n^* n^2 (\log n)^2).
\end{align} 

Set $u = n^{\frac{3}{2}}\sqrt{\mu_n^*}
(\log n)^2$. Clearly, by condition on
$\mu_n^*$ in theorem \ref{asymth5} 
$$\dfrac{u}{\sum\limits_{i<j\leq n}
\oE[|\ell_{ij}
(Y_{ij};\boldsymbol{\al}_i,\boldsymbol{\al}_j,\boldsymbol{\rho},\mu_n)|^2]}
 = O(\frac{1}{\sqrt{\mu_n^* n}})=
 o(\frac{1}{\log n}).$$
 So the condition of lemma \ref{asymlem3}
on $u$ is satisfied.
 Furthermore,
 \begin{align}
 \dfrac{u^3 \sum\limits_{i<j\leq n}
\oE[|\ell_{ij}
(Y_{ij};\boldsymbol{\al}_i,\boldsymbol{\al}_j,\boldsymbol{\rho},\mu_n)|^3]}
{
\big(
\sum\limits_{i<j\leq n}
\oE[|\ell_{ij}
(Y_{ij};\boldsymbol{\al}_i,\boldsymbol{\al}_j,\boldsymbol{\rho},\mu_n)|^2]
\big)^3
}=O(n^{\frac{1}{2}}(\mu_n^*)^{-\frac{1}{2}}
(\log n)^3).
 \end{align}

Therefore by Borel Cantali's lemma and
lemma \ref{asymlem3},
\begin{align}
&
\mbbP
\bigg(
\sup\limits_{
(\boldsymbol{\al},\boldsymbol{\rho},\mu_n)\in B_n}
\bigg|\ell(\mb{Y};\boldsymbol{\al},\boldsymbol{\rho},\mu_n)
-\oE\big[
 \ell(\mb{Y};\boldsymbol{\al},\boldsymbol{\rho},\mu_n)\
\big]\bigg|
\geq u
\bigg)\\ \nonumber
\leq &
\exp
\bigg\{
O(n\log n)-
\dfrac{u^2}{\sum\limits_{i<j\leq n}
\oE[|\ell_{ij}
(Y_{ij};\boldsymbol{\al}_i,\boldsymbol{\al}_j,\boldsymbol{\rho},\mu_n)|^2]}
+
O(n^{\frac{1}{2}}(\mu_n^*)^{-\frac{1}{2}}
(\log n)^3)
\bigg\}
\\ \nonumber
\leq &\exp
\big\{
O(n\log n)- \Omega(n(\log n)^2)
+O(n^{\frac{1}{2}}(\mu_n^*)^{-\frac{1}{2}}
(\log n)^3)
\big\}
\\ \nonumber
&\text{ by condition on }\mu_n^*
\text{ in theorem \ref{asymth5}},
(\log n)^2=o(\mu^*_n n)
\\ \nonumber
\rightarrow&0
\end{align}

\end{proof}
Now it is direct to prove
theorem \ref{asymth5}.
Let $\overline{\mu}_n = \frac{1}{n(n-1)}
\sum\limits_{i\ne j\leq n}
I(Y_{ij}\ne \mb{0})$ if
$\mu_n$ represent average degree;
and for notation consistency,
set $\overline{\mu}_n = \mu_n^*$
 when $\mu_n$ is known.
Calculating
the variance of $\overline{\mu}_n$,
we have
$|\overline{\mu}_n-\mu_n^*| = O_p(\frac{\sqrt{\mu_n^*}}{n})$.
So by assumption \ref{asymass1} item (2),
$$
\big|
\ell(\mb{Y};\b{\al},\b{\rho},\overline{\mu}_n)
-
\ell(\mb{Y};\b{\al},\b{\rho},\mu_n^*)
\big|=O_p(n\sqrt{\mu_n^*})
$$
uniformly in $\b{\al},\b{\rho}$.
Let $\mb{X}$ be an independent copy of $\mb{Y}$.
By lemma \ref{asymlem5},
\begin{align}\label{asymeq49}
&\ell(\mb{Y};\boldsymbol{\al}^*,\boldsymbol{\rho}^*,\mu_n^*)
-\oE\big[
\ell(\mb{Y};\boldsymbol{\al}^*,\boldsymbol{\rho}^*,\mu_n^*)
\big],\\ \nonumber
&\ell(\mb{Y};\boldsymbol{\hat{\al}},\hat{\boldsymbol{\rho}},\mu_n^*)
-\oE_{\mb{X}}\big[
\ell(\mb{X};\boldsymbol{\hat{\al}},\hat{\boldsymbol{\rho}},\mu_n^*)
\big]
\\ \nonumber
=&O_p(n^{\frac{3}{2}}\sqrt{\mu_n^*}(\log n)^2).
\end{align}
Also note that,
\begin{align}\label{asymeq50}
&\big|
\ell(\mb{Y};\boldsymbol{\hat{\al}},\hat{\boldsymbol{\rho}},\overline{\mu}_n)
-
\ell(\mb{Y};\boldsymbol{\hat{\al}},\hat{\boldsymbol{\rho}},\mu_n^*)
\big|
=O_p(n\sqrt{\mu_n^*}),
\\ \nonumber
&
\ell(\mb{Y};\boldsymbol{\hat{\al}},\hat{\boldsymbol{\rho}},\overline{\mu}_n)
\geq
\ell(\mb{Y};\boldsymbol{\al}^*,\boldsymbol{\rho}^*,\overline{\mu}_n),
\text{ since }\boldsymbol{\hat{\al}}\text{ is MLE.}
\\ \nonumber
&\oE_{\mb{X}}\big[
\ell(\mb{X};\boldsymbol{\hat{\al}},\hat{\boldsymbol{\rho}},\mu_n^*)
\big]
\leq
\oE\big[
\ell(\mb{Y};\boldsymbol{\al}^*,\boldsymbol{\rho}^*,\mu_n^*)
\big].
\end{align}
Thus combine (\ref{asymeq49})(\ref{asymeq50}),
\begin{align}\label{asymeq25}
\bigg|\oE_{\mb{X}}\big[
\ell(\mb{X};\boldsymbol{\hat{\al}},\hat{\boldsymbol{\rho}},\mu_n^*)
\big]
-
\oE\big[
\ell(\mb{Y};\boldsymbol{\al}^*,\boldsymbol{\rho}^*,\mu_n^*)
\big]\bigg|
 = O_p(n^{\frac{3}{2}}\sqrt{\mu_n^*}(\log n)^2+n\sqrt{\mu_n^*}).
\end{align}

Meanwhile,
(\ref{asymeq25})
together with the assumption
$c\leq f\leq C$
imply that
\begin{align}
 \mu_n^*\sum\limits_{i<j\leq n}
||f(\cdot;\hat{\boldsymbol{\al}}_i,\hat{\boldsymbol{\al}}_j,\hat{\boldsymbol{\rho}})
- f(\cdot;\boldsymbol{\al}_i^*,\boldsymbol{\al}_j^*,\boldsymbol{\rho}^*)||_2^2
=O_p(n^{\frac{3}{2}}\sqrt{\mu_n^*}(\log n)^2).
\end{align}
So the proof is accomplished.

\subsection{Proof of corollary \ref{asymcol2}}
The conclusion
of theorem \ref{asymth5}
together with assumption \ref{asymass1} item (3)
imply that for some true parameter $\b{\al}^*$,
\begin{align}\nonumber
\mu_n^* n
\sum\limits_{i\leq n}
||\hat{\boldsymbol{\al}}_i-\boldsymbol{\al}_i^*||_2^2
+\mu_n^* n^2
||\hat{\boldsymbol{\rho}}-\boldsymbol{\rho}^*||_2^2
=O_p(n^{\frac{3}{2}}\sqrt{\mu_n^*}(\log n)^2).
\end{align}
Or equivalently,
$$
\sum\limits_{i\leq n}
||\hat{\boldsymbol{\al}}_i-\boldsymbol{\al}_i^*||_2^2
= n^{\frac{1}{2}}(\mu_n^*)^{-\frac{1}{2}}
(\log n)^2,
$$
$$
||\hat{\boldsymbol{\rho}}-\boldsymbol{\rho}^*||_2^2
=n^{-\frac{1}{2}}
(\mu_n^*)^{-\frac{1}{2}}
(\log n)^2.
$$

\subsection{Proof of theorem \ref{asymth4}}

 The key observation is the following:
 \begin{align}\label{asymeq26}
 \sup\limits_{i\leq n,
 \boldsymbol{\al}_i\in\mathcal{B}_i}
 \ell_i(\mb{Y};\boldsymbol{\al}_i,\boldsymbol{\boldsymbol{\al}}_{-i}^*,\boldsymbol{\rho}^*,\mu_n^*)
 -\oE
 \big[
 \ell_i(\mb{Y};\boldsymbol{\al}_i,\boldsymbol{\al}_{-i}^*,\boldsymbol{\rho}^*,\mu_n^*)
 \big]
  = O_p(n^{\frac{1}{2}}
  (\mu_n^*)^{\frac{1}{2}}(\log n)^{\frac{3}{2}}).
 \end{align}
(\ref{asymeq26}) follows by
standard covering method,
concentration inequality
(lemma \ref{asymlem3}) plus
Borel Cantelli's lemma. Here the
space to be covered is $\cup_{i\leq n}
\mathcal{B}_i$, which is substantially
smaller than that
 in the proof of theorem \ref{asymth5}
 (namely
$\mathcal{B}_1\times\cdots\times
\mathcal{B}_n$).

 Another  fact implied by
 assumption \ref{asymass1} item (4) is
  \begin{align}\label{asymeq27}
&\sup\limits_{i}
\bigg|
 \ell_i(\mb{Y};\boldsymbol{\al}_i,\boldsymbol{\al}_{-i}^*,\boldsymbol{\rho}^*,\mu_n^*)
 -
 \ell_i(\mb{Y};\boldsymbol{\al}_i,\boldsymbol{\al}_{-i},\boldsymbol{\rho},
 \mu_n)
 \bigg|
\\ \nonumber
=&
O(||\boldsymbol{\boldsymbol{\al}}-\boldsymbol
 {\al}^*||_{1,\mb{Y}_i})
 +
O_p(
\mu_n^* \log n ||\boldsymbol{\boldsymbol{\al}}-\boldsymbol
 {\al}^*||_{1}+
\mu_n^* n \log n
 ||\boldsymbol{\rho}-\boldsymbol{\rho}^*||_1
 +
 n\log n|\overline{\mu}_n-\mu_n^*|
).
 \end{align}

In notation $||\boldsymbol{\al}-\boldsymbol
 {\al}^*||_{1,\mb{Y}_i}$,
 $\mb{Y}_i$ is regarded as a $0-1$ sequence
 $(I(Y_{i1}\ne \mb{0}), I(Y_{i2}\ne\mb{0}),\cdots,
 I(Y_{in} \ne \mb{0}))$.
To see this note that:
for $y\ne \mb{0}$, by assumption \ref{asymass1}
item (4)
\begin{align}\label{asymeq35}
 &\big|
 \ell_{ij}(y;\boldsymbol{\al}_i,\boldsymbol{\al}_j^*,\boldsymbol{\rho}^*,\mu_n^*)
 -
 \ell_{ij}(y;\boldsymbol{\al}_i,\boldsymbol{\al}_j,\boldsymbol{\rho},
 \mu_n)
 \big|\\ \nonumber
  =& O(||\boldsymbol{\al}_j-\boldsymbol{\al}_j^*||_1+
  ||\boldsymbol{\al}_i-\boldsymbol{\al}_i^*||_1+
 ||\boldsymbol{\rho}-\boldsymbol{\rho}^*||_1
 +
 \frac{1}{\mu_n^*}|\overline{\mu}_n-\mu_n^*|)
\end{align},
 and
 \begin{align}\label{asymeq36} \big|
 &\ell_{ij}(\mb{0};\boldsymbol{\al}_i,\boldsymbol{\al}_j^*,\boldsymbol{\rho}^*,\mu_n^*)
 -
 \ell_{ij}(\mb{0};\boldsymbol{\al}_i,\boldsymbol{\al}_j,\boldsymbol{\rho},
 \mu_n)
 \big|
 \\ \nonumber
  =& O(\mu_n^*||\boldsymbol{\al}_j-\boldsymbol{\al}_j^*||_1+
   \mu_n^*||\boldsymbol{\al}_i-\boldsymbol{\al}_i^*||_1+
 \mu_n^*||\boldsymbol{\rho}-\boldsymbol{\rho}^*||_1
 +
 |\overline{\mu}_n-\mu_n^*|)
 .\end{align}
Equation (\ref{asymeq35}) explains
the term $||\boldsymbol{\boldsymbol{\al}}-\boldsymbol{\al}^*||_{1,
\mb{Y}_i}$ in (\ref{asymeq27}).
The other terms in (\ref{asymeq27})
are explained by (\ref{asymeq36})(\ref{asymeq35})
and the following fact:
\begin{align}\label{asymeq37}
\sup\limits_{i}
\sum\limits_{j\leq n}
I(Y_{ij}=\mb{0})=
O_p(\mu_n^* n\log n).
\end{align}
(\ref{asymeq37}) follow by applying
lemma \ref{asymlem3} plus Borel-Cantelli's
lemma by standard procedure.  

\begin{lemma}\label{asymlem4}
For every $\varepsilon>0$ there exists
$C_\varepsilon$ such that with probability
at least $1-\varepsilon$ the following
holds:
\begin{align}\label{asymeq34}
&
\sup\limits_i\
n\mu_n^*\cdot ||\hat{\boldsymbol{\al}}_i-\boldsymbol{\al}^*_i||_2^2
\ \leq \  C_\varepsilon\big
(\ ||\boldsymbol{\hat{\al}}-\boldsymbol
 {\al}^*||_{1,\mb{Y}_i}
 +\mu_n^*\log n\sqrt{n\Delta_\al}+
 \mu_n^* n\log n\sqrt{\Delta_\rho}
 +n^{\frac{1}{2}}
  (\mu_n^*)^{\frac{1}{2}}(\log n)^{\frac{3}{2}}
  \ \big).
\end{align}
\end{lemma}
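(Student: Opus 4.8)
The plan is to turn the coordinatewise optimality of the MLE at node $i$ into a lower bound on $n\mu_n^*||\hat{\boldsymbol{\al}}_i-\boldsymbol{\al}_i^*||_2^2$. Since $\hat{\boldsymbol{\al}},\hat{\boldsymbol{\rho}}$ maximize the full log likelihood, freezing every coordinate except the $i$-th shows that $\hat{\boldsymbol{\al}}_i$ maximizes $\ell_i(\mb{Y};\boldsymbol{\al}_i,\hat{\boldsymbol{\al}}_{-i},\hat{\boldsymbol{\rho}},\overline{\mu}_n)$ over $\boldsymbol{\al}_i\in\mathcal{B}_i$, so in particular
$$\ell_i(\mb{Y};\hat{\boldsymbol{\al}}_i,\hat{\boldsymbol{\al}}_{-i},\hat{\boldsymbol{\rho}},\overline{\mu}_n)\ \geq\ \ell_i(\mb{Y};\boldsymbol{\al}_i^*,\hat{\boldsymbol{\al}}_{-i},\hat{\boldsymbol{\rho}},\overline{\mu}_n).$$
This \emph{basic inequality} is the starting point. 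The difficulty is that the frozen coordinates $\hat{\boldsymbol{\al}}_{-i},\hat{\boldsymbol{\rho}}$ are random and correlated with $\mb{Y}_i$, so concentration cannot be applied to $\ell_i$ directly.

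The key move is to \emph{decouple} node $i$ from the remaining estimated parameters. I would apply the Lipschitz estimate (\ref{asymeq27}) on both sides of the basic inequality to replace $(\hat{\boldsymbol{\al}}_{-i},\hat{\boldsymbol{\rho}},\overline{\mu}_n)$ by the fixed true configuration $(\boldsymbol{\al}_{-i}^*,\boldsymbol{\rho}^*,\mu_n^*)$; each replacement costs an error of the form $O(||\hat{\boldsymbol{\al}}-\boldsymbol{\al}^*||_{1,\mb{Y}_i})+O_p(\mu_n^*\log n\,||\hat{\boldsymbol{\al}}-\boldsymbol{\al}^*||_1+\mu_n^* n\log n\,||\hat{\boldsymbol{\rho}}-\boldsymbol{\rho}^*||_1+n\log n\,|\overline{\mu}_n-\mu_n^*|)$. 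I would then express these errors through $\Delta_\al,\Delta_\rho$: by Cauchy--Schwarz $||\hat{\boldsymbol{\al}}-\boldsymbol{\al}^*||_1\le\sqrt{n}\,\sqrt{\Delta_\al}$ and $||\hat{\boldsymbol{\rho}}-\boldsymbol{\rho}^*||_1=O(\sqrt{\Delta_\rho})$, while $|\overline{\mu}_n-\mu_n^*|=O_p(\sqrt{\mu_n^*}/n)$ (from the proof of theorem \ref{asymth5}) makes the last error $O_p(\sqrt{\mu_n^*}\log n)$, which is absorbed into $n^{\frac{1}{2}}(\mu_n^*)^{\frac{1}{2}}(\log n)^{\frac{3}{2}}$. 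These are exactly the four terms on the right-hand side of (\ref{asymeq34}).

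Once both sides are evaluated at the nonrandom configuration $(\boldsymbol{\al}_{-i}^*,\boldsymbol{\rho}^*,\mu_n^*)$, the empirical $\ell_i$ can be replaced by its expectation uniformly in $i$ and in $\boldsymbol{\al}_i$ via (\ref{asymeq26}), at cost $O_p(n^{\frac{1}{2}}(\mu_n^*)^{\frac{1}{2}}(\log n)^{\frac{3}{2}})$. Writing $L_i(\boldsymbol{\al}_i)=\oE[\ell_i(\mb{Y};\boldsymbol{\al}_i,\boldsymbol{\al}_{-i}^*,\boldsymbol{\rho}^*,\mu_n^*)]$, the basic inequality becomes: on an event of probability at least $1-\varepsilon$, the gap $L_i(\boldsymbol{\al}_i^*)-L_i(\hat{\boldsymbol{\al}}_i)$ is bounded by $C_\varepsilon$ times the right-hand side of (\ref{asymeq34}), where $C_\varepsilon$ and the event are precisely those produced by the $O_p$ statements in (\ref{asymeq26}) and (\ref{asymeq27}). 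It remains to lower bound the expectation gap. As $\boldsymbol{\al}_i^*$ is true, $L_i(\boldsymbol{\al}_i^*)-L_i(\hat{\boldsymbol{\al}}_i)=\sum_{j\ne i}\mathrm{KL}\big(\,\Theta_{ij}(\cdot;\boldsymbol{\al}_i^*,\boldsymbol{\al}_j^*,\boldsymbol{\rho}^*)\,\big\|\,\Theta_{ij}(\cdot;\hat{\boldsymbol{\al}}_i,\boldsymbol{\al}_j^*,\boldsymbol{\rho}^*)\,\big)\ge 0$, and since $c\le f\le C$ with bounded derivatives (assumption \ref{asymass1} items (2),(4)) each Kullback--Leibler term is comparable to $\mu_n^*\,||f(\cdot;\hat{\boldsymbol{\al}}_i,\boldsymbol{\al}_j^*,\boldsymbol{\rho}^*)-f(\cdot;\boldsymbol{\al}_i^*,\boldsymbol{\al}_j^*,\boldsymbol{\rho}^*)||_2^2$. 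Summing over $j$ and invoking assumption \ref{asymass1} item (5), which converts the aggregate profile separation into a lower bound of order $n\,||\hat{\boldsymbol{\al}}_i-\boldsymbol{\al}_i^*||_2^2$, yields $L_i(\boldsymbol{\al}_i^*)-L_i(\hat{\boldsymbol{\al}}_i)\ge c'\,n\mu_n^*\,||\hat{\boldsymbol{\al}}_i-\boldsymbol{\al}_i^*||_2^2$. Combining this with the upper bound and taking $\sup_i$ gives (\ref{asymeq34}).

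The step I expect to be the main obstacle is the decoupling: carrying out the replacement of $(\hat{\boldsymbol{\al}}_{-i},\hat{\boldsymbol{\rho}})$ by $(\boldsymbol{\al}_{-i}^*,\boldsymbol{\rho}^*)$ \emph{uniformly in} $i$, so that a single event of probability $1-\varepsilon$ controls all nodes simultaneously, and verifying that the data-dependent term $||\hat{\boldsymbol{\al}}-\boldsymbol{\al}^*||_{1,\mb{Y}_i}$ (the $L_1$ discrepancy over the neighbours of $i$) is the only contribution that resists absorption and must be retained in the statement. The passage from the quadratic Kullback--Leibler gap to the target $n\mu_n^*||\hat{\boldsymbol{\al}}_i-\boldsymbol{\al}_i^*||_2^2$ through assumption \ref{asymass1} items (4)--(5) is the other delicate point, since it is exactly where the per-node identifiability hypothesis is consumed.
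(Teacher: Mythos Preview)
Your proposal is correct and follows essentially the same route as the paper: start from the coordinatewise basic inequality (\ref{asymeq29}), decouple via the Lipschitz bound (\ref{asymeq27})/(\ref{asymeq28}) with the Cauchy--Schwarz reduction of $||\hat{\boldsymbol{\al}}-\boldsymbol{\al}^*||_1$ to $\sqrt{n\Delta_\al}$, pass to expectations via the uniform concentration (\ref{asymeq26}), and close with assumption \ref{asymass1} item (5). The only cosmetic difference is ordering: the paper first combines (\ref{asymeq28}) and (\ref{asymeq26}) to control the empirical-versus-expected gap at the \emph{estimated} configuration (equation (\ref{asymeq31})), obtains the expected gap there (equation (\ref{asymeq33})), and only then swaps back to the true configuration, whereas you swap to the true configuration first and apply concentration afterward; the error terms and the final invocation of item (5) are identical.
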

\begin{proof}
Using Cauchy-Schwartz  inequality
we have
$\sum\limits_{i\leq n}
||\hat{\boldsymbol{\al}}_i-\boldsymbol{\al}^*_i||_1
 =O( \sqrt{n\Delta_\al})$.
 Also note that
  $n|\overline{\mu}_n-\mu_n^*|
  =O_p(1)$.
Therefore using (\ref{asymeq27}),
we have uniformly in $i,\b{\al}_i$:
\begin{align}\label{asymeq28}
&\big|
 \ell_i(\mb{Y}_i;\boldsymbol{\al}_i,\boldsymbol{\al}_{-i}^*,\boldsymbol{\rho}^*,\mu_n^*)
 -
 \ell_i(\mb{Y}_i;\boldsymbol{\al}_i,\boldsymbol{\hat{\al}}_{-i},\hat{\boldsymbol{\rho}},
 \overline{\mu}_n)
 \big|\\ \nonumber
 =& O(||\boldsymbol{\hat{\al}}-\boldsymbol
 {\al}^*||_{1,\mb{Y}_i})
 +
 O_p(\mu_n^*\log n\sqrt{n\Delta_\al}+
 \mu_n^* n\log n\sqrt{\Delta_\rho}
 +\log n\sqrt{\mu_n^*}).
\end{align}

Note that for all $i\leq n$,
\begin{align}\label{asymeq29}
\ell_i(\mb{Y}_i;
\hat{\boldsymbol{\al}}_i,\boldsymbol{\hat{\al}}_{-i},\hat{\boldsymbol{\rho}},
\overline{\mu}_n)
\geq
\ell_i(\mb{Y}_i;
\boldsymbol{\al}_i^*,\boldsymbol{\hat{\al}}_{-i},\hat{\boldsymbol{\rho}},
\overline{\mu}_n),
\end{align}
since $\hat{\boldsymbol{\al}}$ is
MLE.
While for all $i\leq n$
\begin{align}\label{asymeq30}
\oE_{\mb{X}_i}
\big[
\ell_i(\mb{X}_i;
\hat{\boldsymbol{\al}}_i,\boldsymbol{\hat{\al}}_{-i},\hat{\boldsymbol{\rho}},
\overline{\mu}_n)
\big]
\leq
\oE_{\mb{Y}_i}
\big[
\ell_i(\mb{Y}_i;
\boldsymbol{\al}_i^*,\boldsymbol{\al}_{-i}^*,
\boldsymbol{\rho}^*,\mu_n^*)
\big].
\end{align}
By assumption \ref{asymass1} item (4)
and using Cauchy-Schwarz inequality,
we have, uniformly in $i$:
\begin{align}\label{asymeq51}
\bigg|\ \oE_{\mb{X}_i}
\big[
\ell_i(\mb{X}_i;
\boldsymbol{\al}_i^*,\boldsymbol{\hat{\al}}_{-i},\hat{\boldsymbol{\rho}},
\overline{\mu}_n)
\big]
-
\oE_{\mb{X}_i}
\big[
\ell_i(\mb{X}_i;
\boldsymbol{\al}_i^*,\boldsymbol{\al}_{-i}^*,
\boldsymbol{\rho}^*,\mu_n^*)
\big]\ \bigg|
= O_p(\mu_n^*\log n\sqrt{n\Delta_\al}+
 \mu_n^* n\log n\sqrt{\Delta_\rho}
 +\log n\sqrt{\mu_n^*}
 ).
\end{align}
By (\ref{asymeq30})(\ref{asymeq51}),
uniformly in $i$:
\begin{align}\label{asymeq52}
\oE_{\mb{X}_i}
\big[
\ell_i(\mb{X}_i;
\hat{\boldsymbol{\al}}_i,\boldsymbol{\hat{\al}}_{-i},\hat{\boldsymbol{\rho}},
\overline{\mu}_n)
\big]
\leq
\oE_{\mb{X}_i}
\big[
\ell_i(\mb{X}_i;
\boldsymbol{\al}_i^*,\boldsymbol{\hat{\al}}_{-i},\hat{\boldsymbol{\rho}},
\overline{\mu}_n)
\big]+
O_p(\mu_n^*\log n\sqrt{n\Delta_\al}+
 \mu_n^* n\log n\sqrt{\Delta_\rho}
 +\log n\sqrt{\mu_n^*}
 ).
\end{align}

Combine (\ref{asymeq28}),(\ref{asymeq26})
we have, uniformly in $i$,
\begin{align}\label{asymeq31}
&
\bigg|\ell_i(\mb{Y}_i;
\hat{\boldsymbol{\al}}_i,\boldsymbol{\hat{\al}}_{-i},\hat{\boldsymbol{\rho}},
\overline{\mu}_n)
-
\oE_{\mb{X}_i}
\big[
\ell_i(\mb{X}_i;
\hat{\boldsymbol{\al}}_i,\boldsymbol{\hat{\al}}_{-i},\hat{\boldsymbol{\rho}},
\overline{\mu}_n)
\big]\ \bigg|,
\\ \nonumber
&
\bigg|\ \ell_i(\mb{Y}_i;
\boldsymbol{\al}_i^*,\boldsymbol{\hat{\al}}_{-i},\hat{\boldsymbol{\rho}},
\overline{\mu}_n)
-
\oE_{\mb{X}_i}
\big[
\ell_i(\mb{X}_i;
\boldsymbol{\al}_i^*,\boldsymbol{\hat{\al}}_{-i},\hat{\boldsymbol{\rho}},
\overline{\mu}_n)
\big]\ \bigg|
\\ \nonumber
=&  O(||\boldsymbol{\hat{\boldsymbol{\al}}}-\boldsymbol
 {\al}^*||_{1,\mb{Y}_i})
 +
 O_p(\mu_n^*\log n\sqrt{n\Delta_\al}+
 \mu_n^* n\log n\sqrt{\Delta_\rho}
 )
 +O_p(n^{\frac{1}{2}}
  (\mu_n^*)^{\frac{1}{2}}(\log n)^{\frac{3}{2}}).
\end{align}
Thus combine
(\ref{asymeq29}),(\ref{asymeq52})
with (\ref{asymeq31}) we have,
uniformly in $i$,
\begin{align}\label{asymeq33}
&
\bigg|
\oE_{\mb{X}_i}
\big[
\ell_i(\mb{X}_i;
\hat{\boldsymbol{\al}}_i,\boldsymbol{\hat{\al}}_{-i},\hat{\boldsymbol{\rho}},
\overline{\mu}_n)
\big]\
-
\oE_{\mb{X}_i}
\big[
\ell_i(\mb{X}_i;
\boldsymbol{\al}_i^*,\boldsymbol{\hat{\al}}_{-i},\hat{\boldsymbol{\rho}},
\overline{\mu}_n)
\big]
\bigg|
\\ \nonumber
=&  O(||\boldsymbol{\hat{\al}}-\boldsymbol
 {\al}^*||_{1,\mb{Y}_i})
 +
 O_p(\mu_n^*\log n\sqrt{n\Delta_\al})+
 \mu_n^* n\log n\sqrt{\Delta_\rho}
 )
+O_p(n^{\frac{1}{2}}
  (\mu_n^*)^{\frac{1}{2}}(\log n)^{\frac{3}{2}}).
\end{align}
Using (\ref{asymeq28}) again, (\ref{asymeq33})
becomes: uniformly in $i$,
\begin{align}\label{asymeq34}
&
\bigg|
\oE_{\mb{X}_i}
\big[
\ell_i(\mb{X}_i;
\hat{\boldsymbol{\al}}_i,\boldsymbol{\al}_{-i}^*,\boldsymbol{\rho}^*,
\mu_n^*)
\big]\
-
\oE_{\mb{Y}_i}
\big[
\ell_i(\mb{Y}_i;
\boldsymbol{\al}_i^*,\boldsymbol{\al}_{-i}^*,\boldsymbol{\rho}^*,
\mu_n^*)
\big]
\bigg|
\\ \nonumber
=&  O(||\boldsymbol{\hat{\al}}-\boldsymbol
 {\al}^*||_{1,\mb{Y}_i})
 +
 O_p(\mu_n^*\log n\sqrt{n\Delta_\al})+
 \mu_n^* n\log n\sqrt{\Delta_\rho}
)
+O_p(n^{\frac{1}{2}}
  (\mu_n^*)^{\frac{1}{2}}(\log n)^{\frac{3}{2}}).
\end{align}
But by assumption \ref{asymass1} item (5),
\begin{align}\nonumber
\bigg|
\oE_{\mb{X}_i}
\big[
\ell_i(\mb{X}_i;
\hat{\boldsymbol{\al}}_i,\boldsymbol{\al}_{-i}^*,\boldsymbol{\rho}^*,
\mu_n^*)
\big]\
-
\oE_{\mb{Y}_i}
\big[
\ell_i(\mb{Y}_i;
\boldsymbol{\al}_i^*,\boldsymbol{\al}_{-i}^*,\boldsymbol{\rho}^*,
\mu_n^*)
\big]
\bigg|
=\Omega(
n\mu_n^* \cdot
||\boldsymbol{\hat{\al}}_i
-\boldsymbol{\al}_i^*||_2^2
)
\end{align}
uniformly in $i$.
Thus conclusion of lemma \ref{asymlem4}
follows.
\end{proof}

Using lemma \ref{asymlem4}
with  $||\hat{\boldsymbol{\al}}-
\boldsymbol{\al}||_{1,\mb{Y}_i}$
replaced by $||\hat{\boldsymbol{\al}}-
\boldsymbol{\al}||_1$ and
applying Cauchy-Schwarz inequality, we
easily have:
\begin{align}\nonumber
\sup\limits_i
||\hat{\boldsymbol{\al}_i}-
\boldsymbol{\al}_i^*||_2^2
=O_p(n^{-\frac{1}{2}}
\Delta_\al^{\frac{1}{2}}
(\mu_n^*)^{-1}\log n
+\sqrt{\Delta_\rho}\log n
+n^{-\frac{1}{2}}(\mu_n^*)^{-\frac{1}{2}}
(\log n)^{\frac{3}{2}})
\end{align}
Thus the conclusion of theorem \ref{asymth4}
follows.

 \subsection{Proof of proposition \ref{asymprop2}}
 Note that
 by definition of $\delta_n'$,
  the discretized parameter
 space admit some
 $\b{\t{\al}}$
 with
 $
 ||
 \b{\t{\al}}-
 \b{\al}^*
 ||_2^2 \leq
 n\delta_n'^2
 $ for some true parameter
 $\b{\al}^*$.
  So by assumption
  \ref{asymass1} item (4)
  for some constant
  $C_f>0$ depending only
  on $f$,
  \begin{align}\label{asymeq38}
 0\geq \oE[
  \ell(\mb{Y};
  \b{\t{\al}},\b{\rho}^*,\mu_n^*)]
  -
  \oE[
   \ell(\mb{Y};
  \b{\al}^*,\b{\rho}^*,\mu_n^*)
  ]
  \geq
  -C_f\mu_n^*n^2\delta_n'^2.
   \end{align}
 But
 \begin{align}
 \oE[
  \ell(\mb{Y};
  \b{\t{\al}},\b{\rho}^*,\mu_n^*)]
  \leq
  \oE[
  \ell(\mb{Y};
  \b{\t{\al}}^*,\b{\t{\rho}}^*,\mu_n^*)]
  \leq
  \oE[
   \ell(\mb{Y};
  \b{\al}^*,\b{\rho}^*,\mu_n^*)
  ].
 \end{align}
Therefore by assumption \ref{asymass1} item (2),
 \[
 \sum\limits_{i<j\leq n}
 ||
 f(\cdot;\b{\t{\al}}_i^*,\b{\t{\al}}_j^*,\b{\t{\rho}}^*,
 \mu_n^*)
 -
 f(
 \cdot;
 \b{\al}_i^*,\b{\al}_j^*,\b{\rho}^*,\mu_n^*
 )
 ||_2^2
 \leq
 C_4 n^2\delta_n'^2,
 \]
 for some constant $C_4>0$ and some
 true parameter $\b{\al}^*$.
 Thus the conclusion (1) follows
 from assumption \ref{asymass1} item (3).

To prove  conclusion (2),
  for all $i\leq n$, let
 $\b{\t{\al}}_i'$ be
 the closest point to
 $\b{\al}_i^*$ in
 $\mtB_i$.
So by
assumption \ref{asymass1} item (4) we have,
for some constant $C_5'$
\begin{align}\label{asymeq39}
\sup\limits_{i\leq n}
\bigg| \oE[
  \ell_i(\mb{Y}_i;
  \b{\t{\al}}_i',\b{\t{\al}}_{-i}^*,\b{\t{\rho}}^*,\mu_n^*)]
  -
  \oE[
   \ell_i(\mb{Y}_i;
  \b{\al}_i^*,\b{\al}_{-i}^*,\b{\rho}^*,\mu_n^*)
  ]
  \ \bigg|
\leq C_5'\mu_n^*(n\delta_n'^2+\Delta_\al
+n\Delta_\rho).
\end{align}
But
 \begin{align}\label{asymeq40}
 &\oE[
  \ell_i(\mb{Y}_i;
  \b{\t{\al}}_i',\b{\t{\al}}_{-i}^*,\b{\t{\rho}}^*,\mu_n^*)]
    \leq
    \oE[
  \ell_i(\mb{Y}_i;
  \b{\t{\al}}_i^*,\b{\t{\al}}_{-i}^*,\b{\t{\rho}}^*,\mu_n^*)]
    \leq
   \oE[
   \ell_i(\mb{Y}_i;
  \b{\al}_i^*,\b{\al}_{-i}^*,\b{\rho}^*,\mu_n^*)
  ].
 \end{align}
 Combine (\ref{asymeq39})(\ref{asymeq40}) we have,
 \begin{align}\nonumber
\sup\limits_{i\leq n}
\bigg| \oE[
  \ell_i(\mb{Y}_i;
  \b{\t{\al}}_i^*,\b{\t{\al}}_{-i}^*,\b{\t{\rho}}^*,\mu_n^*)]
  -
  \oE[
   \ell_i(\mb{Y}_i;
  \b{\al}_i^*,\b{\al}_{-i}^*,\b{\rho}^*,\mu_n^*)
  ]
  \ \bigg|
 \leq C_5'\mu_n^*(n\delta_n'^2+
 \Delta_\al+n\Delta_\rho),
 \end{align}
 which implies
 \begin{align}\nonumber
\sup\limits_{i\leq n}
\bigg| \oE[
  \ell_i(\mb{Y}_i;
  \b{\t{\al}}_i^*,\b{\al}_{-i}^*,\b{\rho}^*,\mu_n^*)]
  -
  \oE[
   \ell_i(\mb{Y}_i;
  \b{\al}_i^*,\b{\al}_{-i}^*,\b{\rho}^*,\mu_n^*)
  ]
  \ \bigg|
 \leq 2C_5'\mu_n^*(n\delta_n'^2
 +\Delta_\al+n\Delta_\rho).
 \end{align}
 By assumption \ref{asymass1} item (5),
 this implies
 $$
 \sup\limits_i
 ||\b{\t{\al}}_i^*-
 \b{\al}_i^*
 ||_2^2 = O(\delta_n'^2+\frac{\Delta_\al}{n}+\Delta_\rho)
 .$$

\subsection{Proof of theorem \ref{asymth6}}
Note that lemma \ref{asymlem6} holds with true parameter
replaced by pseudo true parameter. i.e.,
 for all
$\varepsilon>0$ there exists $C_\varepsilon>0$
such that with probability
at least $1-\varepsilon$
the following holds:
\begin{align}\label{asymeq41}
(\forall i\leq n)
\ \ n\mu_n^*\cdot ||\bhtal_i-\btal^*_i||_2^2
\ \leq  \  C_\varepsilon
\big(
\frac{\sqrt{d}}{\delta_n}||\bhtal-\btal^*||_{2,\mb{Y}_i}^2
 +\mu_n^*\log n\sqrt{n\Delta_\al}+
 \mu_n^* n\log n\sqrt{\Delta_\rho}
+n^{\frac{1}{2}}
  (\mu_n^*)^{\frac{1}{2}}(\log n)^{\frac{3}{2}}
  \big).
\end{align}
Here $\Delta_\al=\sum\limits_{i\leq n}
||\boldsymbol{\h{\t{\al}}}_i-\boldsymbol{\t{\al}}_i^*||_2^2
$,
$ \Delta_\rho= ||\boldsymbol{\h{\t{\rho}}}-\boldsymbol{\t{\rho}}^*||_2^2
$
and
$||\mb{v}||_{2,\mb{Y}_i}^2$ refers
to
$\sum\limits_{j:Y_{ij}\ne \mb{0}}
v_j^2$.
The major
 trouble is
 the term
 $||\bhtal-
 \btal^*||_{1,\mb{Y}_i}$.
By definition of $\delta_n$,
we have,
\begin{align}\nonumber
||\bhtal_i-
 \btal^*_i||_1
 \leq \frac{\sqrt{d}}{\delta_n}
 ||\bhtal_i-
 \btal^*_i||_2^2.
\end{align}
 To simplify notation,
set $\Delta =
\log n\sqrt{\frac{\Delta_\al}{n}}
+\log n \sqrt{\Delta_\rho}+
\frac{(\log n)^{\frac{3}{2}}}
{\sqrt{n\mu_n^*}}$;
set $A$ be the
adjacency matrix of network,
i.e., $ \big(\ I(Y_{ij}\ne\mb{0})
\ \big)_{i,j\leq n}$;
let $u_i =
||\bhtal_i-\btal^*_i||_2^2$,
$\mb{u} = (u_1,\cdots,u_n)$.
Event (\ref{asymeq41}) becomes
\begin{align}\label{asymeq42}
\mb{u}\leq
\frac{\sqrt{d}C_\varepsilon}{n\mu_n^*\delta_n}A\mb{u}+
C_\varepsilon\Delta
.\end{align}
By assumption \ref{asymass1} item (1) for some constant
$C_{\mathcal{B}}$, $u_i\leq C_{\mathcal{B}}$.
Also note that $u_i\ne 0$ implies
$u_i\geq \delta_n$. Let
$\1u =
(I(u_1\ne 0), I(u_2\ne 0), \cdots,I(u_n\ne 0)\ )$.
Therefore (\ref{asymeq42}) becomes
\begin{align}\label{asymeq48}
\1u\leq
\frac{\sqrt{d}C_\varepsilon C_{\mathcal{B}}}
{n\mu_n^*\delta_n^2}A\1u
+\frac{C_\varepsilon\Delta}{\delta_n}.
\end{align}

The point is that if
$\Delta<\frac{\delta_n}
{2C_{\mathcal{B}}}$ and
$u_i\ne 0$, then
there are  at least
$\frac{n\mu_n^*\delta_n^2}
{2\sqrt{d}C_\varepsilon C_{\mathcal{B}}}$
many $j$ in the neighbor of $i$ with
$u_j\ne 0$. But each $j$ will further
forces more nodes $j'$ with non zero
$u_{j'}$.
 Set
$M_> = \{i: u_i\ne 0\}$;
$N_i = \{j:Y_{ij}\ne\mb{0}\}$.
Through this procedure
we are able to show that
$\delta_n|M_>|$,
which is a lower bound of
$||\bhtal-\btal^*||_2^2$, is too large to be
smaller than $\Delta_\al$ if $M_> \ne \emptyset$.
More strictly speaking, if $\Delta<
\frac{\delta_n}
{2C_\varepsilon}$,
then
event
(\ref{asymeq48}) implies that
$(\forall i\in M_>)
|N_i\cap M_>|
\geq \frac{n\mu_n^*\delta_n^2}
{2\sqrt{d}C_\varepsilon C_{\mathcal{B}}}$
.
 Note that
 $\Delta\geq
 \frac{(\log n)^{\frac{3}{2}}}{\sqrt{n\mu_n^*}}$,
 so $\Delta<
\frac{\delta_n}
{2C_\varepsilon}$
  implies that
  the condition of lemma
  \ref{asymlem6},
namely $\conditionofdeltasecond$,
holds.
Therefore by lemma \ref{asymlem6},
 conditional on
 event (\ref{asymeq48}),
 with probability at least
 $1-\frac{1}{(1-\varepsilon)n^2}$:
 $M_>=\emptyset$ or
 $|M_>|>\kupper$.
 But $|M_>|>\kupper$ is impossible since
 $\Delta_\al<\conditionofdeltathird$ .
 In summary, if
   $\Delta_\al<\conditionofdeltathird$,
  $\Delta<\frac{\delta_n}{2C_\varepsilon}$,
  then we have,
  for every $\varepsilon>0$ with
  probability at least $1-\varepsilon-\frac{1}{n^2}$
  $M_>=\emptyset$.
Thus the proof of theorem \ref{asymth6} is
accomplished once lemma \ref{asymlem6}
is proved.
\begin{lemma}\label{asymlem6}
If $\conditionofdeltasecond$,
  then we have,
\begin{align}\nonumber
\mbbP
\bigg(
\ (\exists M\subset\{1,2,\cdots,n\},
1\leq |M|\leq \kupper)
\ (\forall i\in M)
\ |N_i\cap M|\geq
\frac{n\mu_n^*\delta_n^2}{2\sqrt{d}C_\varepsilon C_\mathcal{B}}
\bigg)
\leq \frac{1}{n^2}.
\end{align}

\end{lemma}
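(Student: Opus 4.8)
The plan is to prove the lemma by a first--moment (union--bound) argument over all candidate ``dense'' sets $M$, after reducing the per--node neighbour condition to a lower bound on the number of edges inside $M$. The decisive point is that the size ceiling $\kupper$ has been calibrated precisely so that the Chernoff tail for the edge count beats the binomial entropy $\binom{n}{m}$.

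First I would fix notation. Write $k_0=\frac{n\mu_n^*\delta_n^2}{2\sqrt{d}C_\varepsilon C_{\mathcal{B}}}$ for the within--$M$ degree threshold, and record the clean identity $\kupper=\frac{k_0}{e^3 Cs\mu_n^*}$, which is exactly what ties the two quantities in the statement together. Since $|\mathcal{S}|=s$ and $f\le C$ by assumption \ref{asymass1} item (2), each pair is joined with probability $p_{ij}=\mbbP(Y_{ij}\ne\mb{0})=\mu_n^*\sum_{y\ne\mb{0}}f(y;\cdots)\le Cs\mu_n^*$, and the indicators $I(Y_{ij}\ne\mb{0})$ over distinct pairs $\{i,j\}$ are independent.

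Next I would carry out the reduction and the tail estimate. Fix $M$ with $|M|=m$. Writing $E(M)=\sum_{\{i,j\}\subset M}I(Y_{ij}\ne\mb{0})$ for the number of edges inside $M$, the event $(\forall i\in M)\,|N_i\cap M|\ge k_0$ forces $2E(M)=\sum_{i\in M}|N_i\cap M|\ge m k_0$, i.e. $E(M)\ge mk_0/2$. Now $E(M)$ is a sum of independent Bernoullis with $\oE[E(M)]\le\binom{m}{2}Cs\mu_n^*\le\tfrac{m^2}{2}Cs\mu_n^*$, so the standard multiplicative Chernoff bound $\mbbP(E(M)\ge a)\le\big(e\,\oE[E(M)]/a\big)^a$ with $a=mk_0/2$ gives $e\,\oE[E(M)]/a\le emCs\mu_n^*/k_0$. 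The constraint $m\le\kupper=\frac{k_0}{e^3Cs\mu_n^*}$ then yields $emCs\mu_n^*/k_0\le e^{-2}$, hence $\mbbP(E(M)\ge mk_0/2)\le e^{-mk_0}$.

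Finally I would union over the $\binom{n}{m}\le n^m$ sets of each size and sum over $1\le m\le\kupper$, obtaining $\sum_m n^m e^{-mk_0}=\sum_m e^{-m(k_0-\log n)}$. The hypothesis $\conditionofdeltasecond$ is, after squaring, exactly $\frac{n\mu_n^*\delta_n^2}{8\sqrt d C_\varepsilon C_{\mathcal B}}\ge 20\log n$, i.e. $k_0\ge 80\log n$; thus $k_0-\log n\ge 79\log n$ and the geometric series is bounded by $2n^{-79}\le n^{-2}$ for $n\ge 2$, which is the claim. The main obstacle is the constant calibration in this chain: the factor $e^3$ built into $\kupper$ is precisely what pushes the Chernoff ratio $e\,\oE[E(M)]/a$ below $e^{-2}$, making the exponent $mk_0$ dominate the entropy $m\log n$. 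Verifying $\oE[E(M)]\le\tfrac{m^2}{2}Cs\mu_n^*$ (via $|\mathcal S|=s$, $f\le C$) and that $m\le\kupper$ forces the ratio below $e^{-2}$ is the delicate accounting; the rest is routine.
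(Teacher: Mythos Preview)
Your argument is correct. The reduction from the per-vertex degree condition to $E(M)\ge mk_0/2$, the bound $\oE[E(M)]\le \binom{m}{2}Cs\mu_n^*$, the Chernoff tail $\mbbP(E(M)\ge a)\le (e\mu/a)^a$, the calibration $m\le \kupper\Rightarrow e\mu/a\le e^{-2}$, and the final geometric sum using $k_0\ge 80\log n$ all check out with the stated constants.

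Your route is genuinely different from the paper's, and in fact simpler. The paper splits into two regimes according to the size of $M$: for $|M|\le 4k_0$ it uses Azuma's inequality to control the centered edge count (the deviation bound there is of order $e^{-(k_0/4)^2}$, independent of $|M|$), and only for $4k_0\le |M|\le \kupper$ does it invoke a Chernoff argument, carried out by explicitly optimizing the exponent $\log \oE[e^{tZ}]-tu$ and then simplifying. You bypass this case distinction entirely by using the multiplicative Poisson-tail form $(e\mu/a)^a$, which is sharp enough for all $m$ in the given range and directly yields $e^{-mk_0}$. The paper's split buys nothing extra here; it appears to be an artifact of using Azuma (which is insensitive to the smallness of $\mu_n^*$) in the small-$m$ regime rather than the Bernoulli Chernoff bound throughout. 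Your single-case argument is cleaner and exposes more transparently why the factor $e^3$ in the definition of $\kupper$ is exactly what is needed.
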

\begin{proof}
Note that if
$\kupper<\const1lem6$ then the conclusion
follows trivially. So in
the following proof, assume
$\kupper\geq \const1lem6$,
which implies
$\conditiononmuth6$.

We prove the conclusion
by firstly evaluating
$\mbbP((\forall i\in M)
\ |N_i\cap M|\geq
\frac{n\mu_n^*\delta_n^2}{2\sqrt{d}C_\varepsilon C_\mathcal{B}})$
for a given set
$M\subset\{1,2,\cdots,n\}$
and then apply Borel Cantelli's lemma.
Set
$Z_{ij}=
I(Y_{ij}\ne \mb{0})$,
$Z = \sum\limits_{i<j\in M}
Z_{ij}$.
To evaluate
$\mbbP((\forall i\in M)
\ |N_i\cap M|\geq
\frac{n\mu_n^*\delta_n^2}{2\sqrt{d}C_\varepsilon C_\mathcal{B}})$,
we further divided into two cases:
\begin{enumerate}
\item $|M|\leq
\klowercase2;
$
\item $
\klowercase2 \leq |M|\leq \kupper$.
\end{enumerate}

Firstly we deal with case (1).
We use the following Azuma's inequality:
\begin{lemma}[Azuma's inequality]
\label{asymth2}
Let $Y_t,t\in \{0,1,\cdots,m\}$ be
a martingale with
$|Y_{t+1}-Y_t|\leq B'$ for all
$t\leq m$ and $Y_0=0$. Then we have
\begin{align}
\mbfP(|Y_m|\geq u)
\leq e^{-\frac{u^2}{B'^{2} m}}.
\end{align}
\end{lemma}
Clearly $|N_i\cap M|\geq
\frac{n\mu_n^*\delta_n^2}{2\sqrt{d}C_\varepsilon C_\mathcal{B}}$ for
all $i\in M$ implies
that
$$
\bigg|\ \big\{\
\{i,j\}:
i,j\in M, Y_{ij}\ne\mb{0}
\big\}\bigg|
=\frac{1}{2} \sum\limits_{i\in M}
|N_i\cap M|
\geq |M|
\frac{n\mu_n^*\delta_n^2}{4\sqrt{d}C_\varepsilon C_\mathcal{B}}.
$$
Note that
the partial sum
of $Z_{ij}-\oE[Z_{ij}],i,j\in M$ is a martingale.
Moreover,
since
$|Z_{ij}|\leq 1$, so by assumption
\ref{asymass1} item (2)
$\oE[Z]\leq \frac{|M|(|M|-1)}{2}Cs\mu_n^*$.
Therefore if $|M|
\leq
\klowercase2$,
 then we have
\begin{align}
\mbbP
\bigg(
Z\geq |M|
\frac{n\mu_n^*\delta_n^2}{4\sqrt{d}C_\varepsilon C_\mathcal{B}}
\bigg)\ \leq&
\ \mbbP
\bigg(
Z-\oE[Z]\geq |M|
\frac{n\mu_n^*\delta_n^2}{4\sqrt{d}C_\varepsilon C_\mathcal{B}}
-\frac{|M|(|M|-1)}{2}Cs\mu_n^*\bigg)
\\ \nonumber
&\text{ since in case (1) }
Cs\mu_n^*<1/8\Rightarrow
\frac{|M|-1}{2}Cs\mu_n^*
\leq
\frac{n\mu_n^*\delta_n^2}{8\sqrt{d}C_\varepsilon C_\mathcal{B}}
\\ \nonumber
\leq&
\ \mbbP
\bigg(
Z-\oE[Z]\geq |M|
\frac{n\mu_n^*\delta_n^2}{8\sqrt{d}C_\varepsilon C_\mathcal{B}}
\bigg)
\\ \nonumber
&\text{ by Azuma's inequality and the fact}
\big|Z_{ij}-\oE[Z_{ij}]\ \big|\leq 1
\\ \nonumber
\leq & \exp\big\{
-\dfrac{
 |M|^2
\big(\frac{n\mu_n^*\delta_n^2}{8\sqrt{d}C_\varepsilon C_\mathcal{B}}
\big)^2
}
{
|M|^2
}
\big\}.
\end{align}
Thus applying Borel Cantelli's lemma,
if $k\leq
\klowercase2
$,
\begin{align}\label{asymeq47}
&\mbbP\bigg(\
(\exists M,|M|=k)\
(\forall i\in M)\
|N_i\cap M|\geq
\frac{n\mu_n^*\delta_n^2}{2\sqrt{d}C_\varepsilon C_\mathcal{B}}
\bigg)
\\ \nonumber
\leq&\
n^k \cdot
\exp\big\{
-
 \big(\frac{n\mu_n^*\delta_n^2}{8\sqrt{d}C_\varepsilon C_\mathcal{B}}
\big)^2
\big\}
\\ \nonumber
\leq&\
\exp
\bigg\{
\frac{n\mu_n^*\delta_n^2}{8\sqrt{d}C_\varepsilon C_\mathcal{B}}
\cdot\big[
16\log n-
\frac{n\mu_n^*\delta_n^2}{8\sqrt{d}C_\varepsilon C_\mathcal{B}}
\big]
\bigg\}
\\ \nonumber
&\text{ since by condition of lemma \ref{asymlem6}
 }
\conditionofdelta
\\ \nonumber
\leq& \exp
\{
-4\log n
\}.
\end{align}

Now we deal with case (2).
By Chernoff inequality and independence of $Z_{ij}$,
for every $u,t>0$
\begin{align}\label{asymeq43}
&\mbbP
\big(
Z\geq
u
\big)
\leq
\exp\bigg\{
\sum\limits_{i<j\in M}
\log(\oE[e^{tZ_{ij}}])
-tu
\bigg\}
\\ \nonumber
&\text{\ since by assumption
\ref{asymass1} item (2)  }
\mbbP(Z_{ij} = 1)\leq Cs\mu_n^*
\\ \nonumber
\leq &
\exp\bigg\{
\frac{|M|(|M|-1)}{2}
\log\big[
1+(e^t-1)Cs\mu_n^*
\big]
-tu
\bigg\}.
\end{align}
Set
\begin{align}\nonumber
u \ &= |M|
\frac{n\mu_n^*\delta_n^2}{4\sqrt{d}C_\varepsilon C_\mathcal{B}},
\\ \nonumber
w \ &= \frac{u}{\frac{1}{2}|M|(|M|-1)},
\\ \nonumber
e^t-1 &=
\frac{w/(Cs\mu_n^*) -1}{1-w}.
\end{align}
Continue (\ref{asymeq43}),
\begin{align}\label{asymeq44}
&\frac{|M|(|M|-1)}{2}
\log\big[
1+(e^t-1)Cs\mu_n^*
\big]
-tu\\ \nonumber
 =& \frac{|M|(|M|-1)}{2}
 \bigg[
 \log(\frac{1-Cs\mu_n^*}{1-w})
 -w\log\big(\ \frac{w(\frac{1}{Cs\mu_n^*}-1)}
 {1-w}\ \big)
 \bigg]
 \\ \nonumber
 =&\frac{|M|(|M|-1)}{2}
 \bigg[
 -w\log w-(1-w)\log(1-w)
 +w\log(Cs\mu_n^*)
 +(1-w)\log(1-Cs\mu_n^*)
 \bigg]
 \\ \nonumber
&\text{ since
 in case (2) }
 w\leq \frac{1}{4}
,
 \text{ so }
 -(1-w)\log (1-w)+(1-w)\log(1-Cs\mu_n^*)
 \leq w,
 \\ \nonumber
\leq&\
\frac{|M|(|M|-1)}{2}
\bigg[\
w\big(-\log w +
\log(Cs\mu_n^*)
+1\big)
\ \bigg]
\\ \nonumber
&\ \text{ since }
  |M|\leq \kupper
 \Rightarrow
  e^3
 Cs\mu_n^* \leq w
 \Rightarrow
 -\log w +
\log(Cs\mu_n^*)
+1\leq -2,
\\ \nonumber
\leq&
-2\cdot |M|
\frac{n\mu_n^*\delta_n^2}{4\sqrt{d}C_\varepsilon C_\mathcal{B}}
.\end{align}
Combine (\ref{asymeq43})(\ref{asymeq44}) and
apply Borel Cantelli's lemma,
if $\klowercase2
\leq  k\leq \kupper$
\begin{align}\label{asymeq46}
&\mbbP\bigg(
(\exists M, |M|=k)
\ (\forall i\in M)
|N_i\cap M|\geq
\frac{n\mu_n^*\delta_n^2}{2\sqrt{d}C_\varepsilon C_\mathcal{B}}
\bigg)
\\ \nonumber
\leq&
\exp\big\{
k\log n-
k\cdot
\frac{n\mu_n^*\delta_n^2}{2\sqrt{d}C_\varepsilon C_\mathcal{B}}
\big\}
\\ \nonumber
&\text{ since }
\conditionofdelta
\\ \nonumber
<&
e^{-75k\log n}.
\end{align}

Now combine the conclusion in
case (1) and (2), namely
(\ref{asymeq47})(\ref{asymeq46}), we obviously
have,
$$
\mbbP
\bigg(
(\exists M,
1\leq |M|\leq \kupper)
\ (\forall i\in M)\
|N_i\cap M|\geq
\frac{n\mu_n^*\delta_n^2}{2\sqrt{d}C_\varepsilon C_\mathcal{B}}
\bigg)
\leq
n(e^{-4\log n}+e^{-75\log n})
\leq \frac{1}{n^2}.
$$
The proof is thus accomplished.
\end{proof}

\subsection{Proof of theorem \ref{asymth1}}

We firstly establish some properties
of the density function $g$.
Let $\theta_{ij} = \alpha_i+\beta_j$.
Denote by $g(\cdot;\theta_{ij},\theta_{ji},\rho)$
(or $g(\cdot;\alpha_i,\alpha_j,\beta_i,\beta_j,\rho)$)
the  density function of $Y_{ij}$
in model (\ref{asymmodel1}).
We need the following properties on the
log likelihood of model
(\ref{asymmodel1}).
\begin{proposition}\label{asymprop1}
There exists constant
$\delta,\varepsilon,C_7, C_8,C_9>0$
(depending on $B$) such that
for all $\theta_{1},\theta_{2},\vartheta_{1},
\vartheta_{2}\in [-2B,2B],\rho,\varrho\in [-B,B],
X\in \{0,1\}^2 $ we have:
\begin{enumerate}
\item
$|\theta_{1}-\vartheta_{1}|+|\theta_{2}-\vartheta_{2}|
+|\rho-\varrho|>\delta$ implies
$$
\oE_{Y\sim g(\cdot;\theta_1,\theta_2,\rho)}
(\log g(Y;\theta_1,\theta_{2},\rho))
-\oE_{Y\sim g(\cdot\theta_1,\theta_2,\rho)}
\log g(Y;\vartheta_2,\vartheta_2,\varrho)
>\varepsilon;
$$

\item
$|\theta_{1}-\vartheta_{1}|+|\theta_{2}-\vartheta_{2}|
+|\rho-\varrho|\leq \delta$ implies
\begin{align}
&|\log g(X;\theta_1,\theta_{2},\rho)
-\log g(X;\vartheta_{1},\vartheta_2,\varrho)|
\\ \nonumber
\leq&
C_8 (|\theta_{1}-\vartheta_{1}|+|\theta_{2}-\vartheta_{2}|
+|\rho-\varrho|);
\end{align}
\item
$|\theta_{1}-\vartheta_{1}|+|\theta_{2}-\vartheta_{2}|
+|\rho-\varrho|\leq \delta$ implies
\begin{align}
&\oE_{Y\sim g(\cdot;\theta_1,\theta_2,\rho)}
(\log g(Y;\theta_{1},
\theta_{2},\rho))
-\oE_{Y\sim g(\cdot;\theta_1,\theta_2,\rho)}
(\log g(Y;\vartheta_{1},
\vartheta_{2},\varrho))
\\ \nonumber
\geq& C_7 ((\theta_{1}-\vartheta_{1})^2+
(\theta_{2}-\vartheta_{2})^2
+(\rho-\varrho)^2);
\end{align}

\item $
|\log g(X;\theta_{1},\theta_{2},\rho)|<C_9$.
\end{enumerate}
\end{proposition}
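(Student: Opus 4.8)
The plan is to treat model (\ref{asymmodel1}) as a three-parameter exponential family in the natural parameter $\boldsymbol\theta=(\theta_1,\theta_2,\rho)$ with sufficient statistic $T(X)=(a,b,ab)$ for $X=(a,b)\in\{0,1\}^2$, and to exploit that the parameter ranges over the compact convex box $\mathcal{K}=[-2B,2B]^2\times[-B,B]$. Writing $A(\boldsymbol\theta)=\log Z=\log\bigl(1+e^{\theta_1}+e^{\theta_2}+e^{\theta_1+\theta_2+\rho}\bigr)$, one has $\log g(X;\boldsymbol\theta)=\langle\boldsymbol\theta,T(X)\rangle-A(\boldsymbol\theta)$, $\nabla A(\boldsymbol\theta)=\oE_{\boldsymbol\theta}[T]$, and $\nabla^2 A(\boldsymbol\theta)=\mathrm{Cov}_{\boldsymbol\theta}(T)$. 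All four claims are standard regularity statements for such a family, and the single analytic fact driving items (1) and (3) is a uniform positive lower bound on $\mathrm{Cov}_{\boldsymbol\theta}(T)$ over $\mathcal{K}$.

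I would dispatch items (4) and (2) first, as they are immediate. For (4), on $\mathcal{K}$ the normalizer satisfies $1\le Z\le 1+2e^{2B}+e^{5B}$, so $A$ is bounded; together with $|\langle\boldsymbol\theta,T(X)\rangle|\le 5B$ this gives $|\log g|<C_9$. For (2), since $\nabla_{\boldsymbol\theta}\log g(X;\boldsymbol\theta)=T(X)-\nabla A(\boldsymbol\theta)$ and both $T(X)$ and $\nabla A(\boldsymbol\theta)=\oE_{\boldsymbol\theta}[T]\in[0,1]^3$ are bounded, the gradient is uniformly bounded on $\mathcal{K}$, so $\log g(X;\cdot)$ is globally Lipschitz with some constant $C_8$; restricting to parameters within $\delta$ is then automatic.

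The heart of the argument is item (3). For any $\boldsymbol\theta,\boldsymbol\vartheta\in\mathcal{K}$ the left-hand side is the Kullback--Leibler divergence, which for an exponential family equals the Bregman divergence $B_A(\boldsymbol\vartheta,\boldsymbol\theta):=A(\boldsymbol\vartheta)-A(\boldsymbol\theta)-\langle\nabla A(\boldsymbol\theta),\boldsymbol\vartheta-\boldsymbol\theta\rangle$, and by Taylor's theorem
$$B_A(\boldsymbol\vartheta,\boldsymbol\theta)=\tfrac12(\boldsymbol\vartheta-\boldsymbol\theta)^\top\nabla^2 A(\boldsymbol\xi)(\boldsymbol\vartheta-\boldsymbol\theta)$$
for some $\boldsymbol\xi$ on the segment between $\boldsymbol\theta$ and $\boldsymbol\vartheta$, hence in the convex set $\mathcal{K}$. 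It therefore suffices to show $\nabla^2 A(\boldsymbol\xi)=\mathrm{Cov}_{\boldsymbol\xi}(T)\succeq\lambda_0 I$ for a constant $\lambda_0>0$ uniform over $\mathcal{K}$. I would prove positive definiteness pointwise by noting that $g>0$ on all four configurations of $\{0,1\}^2$, so $\mathrm{Cov}_{\boldsymbol\xi}(T)$ is singular only if some nontrivial linear combination of $a,b,ab$ is almost surely constant; but the four vectors $(1,a,b,ab)$ evaluated at the points of $\{0,1\}^2$ form an invertible matrix, so $\{1,a,b,ab\}$ are linearly independent and no such relation exists. Uniformity then follows from compactness: the minimal configuration probability $\min_{X,\boldsymbol\theta\in\mathcal{K}}g(X;\boldsymbol\theta)$ is bounded below by some $p_0>0$ (as $g$ is continuous and strictly positive on a compact domain), which yields a quantitative $\lambda_0$; equivalently $\boldsymbol\xi\mapsto\lambda_{\min}(\mathrm{Cov}_{\boldsymbol\xi}(T))$ is continuous and positive on a compact set, hence bounded below. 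This gives (3) with $C_7=\lambda_0/2$. This uniform nondegeneracy is the only real obstacle; everything else is bookkeeping.

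Finally, item (1) follows from the same machinery. The cleanest route is to observe that the quadratic lower bound of (3) in fact holds for \emph{all} $\boldsymbol\theta,\boldsymbol\vartheta\in\mathcal{K}$, since the Bregman--Taylor identity and the bound $\nabla^2 A\succeq\lambda_0 I$ are global on $\mathcal{K}$; thus an $\ell_1$ separation exceeding $\delta$ forces $\|\boldsymbol\theta-\boldsymbol\vartheta\|_2>\delta/\sqrt3$ and hence a divergence exceeding $\tfrac{\lambda_0}{6}\delta^2=:\varepsilon$, the factor $\sqrt3$ being the dimensional constant converting the $\ell_1$ separation of the statement to the $\ell_2$ norm. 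Alternatively one may argue by compactness directly: the divergence is continuous and, by identifiability of the nondegenerate family, strictly positive on the compact set $\{(\boldsymbol\theta,\boldsymbol\vartheta)\in\mathcal{K}^2:\ |\theta_1-\vartheta_1|+|\theta_2-\vartheta_2|+|\rho-\varrho|\ge\delta\}$, so it attains a positive minimum $\varepsilon$ there. Either way the constants $\delta,\varepsilon,C_7,C_8,C_9$ depend only on $B$, as required.
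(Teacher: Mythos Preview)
Your proof is correct and reaches the same destination as the paper's---nondegeneracy of the Fisher information on the compact parameter box---but by a cleaner route. The paper declares items (1), (2), (4) trivial and for item (3) analyses the Hessian of the expected log-likelihood at the true parameter, writing it as the covariance of the score vector and then verifying rank $3$ by an explicit computation: it writes out the $4\times 3$ matrix of values $g_k(X)/g(X)$ at the four points $X\in\{0,1\}^2$ and performs row reduction. You instead work entirely in the exponential-family framework: the KL divergence is the Bregman divergence of the log-partition function $A$, Taylor's theorem places the Hessian $\nabla^2 A(\boldsymbol\xi)=\mathrm{Cov}_{\boldsymbol\xi}(T)$ at an intermediate point of the convex box, and nondegeneracy follows immediately from the affine independence of the sufficient statistic $(a,b,ab)$ on $\{0,1\}^2$ together with compactness. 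Your argument is model-agnostic (it applies verbatim to any full-rank discrete exponential family on a compact parameter set), it makes the passage from pointwise nondegeneracy to the uniform quadratic bound explicit, and it yields item (1) as a free corollary of the global quadratic lower bound rather than as a separate step; the paper's explicit row reduction, by contrast, is self-contained and avoids invoking any exponential-family machinery.
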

\begin{proof}
Items (1)(2)(4) are trivial.

Fix an arbitrary
$\theta_1,\theta_2,\rho$ and
regard
$\oE_{Y\sim g(\cdot\theta_1,\theta_2,\rho)}
(\log g(Y;\vartheta_1,\vartheta_2,\varrho))$
as a function
in
$\vartheta_1,\vartheta_2,\varrho$.
We prove item (3) by analysing
the Hessian matrix
of
$\oE_{Y\sim g(\cdot\theta_1,\theta_2,\rho)}
(\log g(Y;\vartheta_1,\vartheta_2,\varrho))$,
namely
$H(\vartheta_1,\vartheta_2,\varrho)$, at
$(\vartheta_1,\vartheta_2,
\varrho)=
(\theta_1,\theta_2,\rho)$.
Let
$g_1(X)=
\frac{\partial g(X;
\theta_1,\theta_2,\rho)}
{\partial \theta_1}$,
$g_2(X)=
\frac{\partial g(X;
\theta_1,\theta_2,\rho)}
{\partial \theta_2}$,
$g_3(X)=
\frac{\partial g(X;
\theta_1,\theta_2,\rho)}
{\partial \rho}$
and $g(X) = g(X;\theta_1,\theta_2,\rho)$.
For any $(a_1,a_2)$, we write
$g_i(a_1,a_2),
g(a_1,a_2)$ for $g_i((a_1,a_2)),
g((a_1,a_2))$ respectively.
It is well known that
for any vector
$\mathbf{v}\in\mathbb{R}^3$,
$\mathbf{v}^T
H(\theta_1,\theta_2,\rho) \mathbf{v} =-
\oE_{Y\sim g(\cdot\theta_1,\theta_2,\rho)}
\big[\ (\ v_1
\frac{g_1(Y)}{g(Y)}+
v_2\frac{g_2(Y)}{g(Y)}+
v_3\frac{g_3(Y)}{g(Y)}\ )^2\ \big]$.
Therefore it suffices to prove that
the rank of the following matrix is
$3$:
\begin{align}\label{asymeq18}
\begin{bmatrix}
\frac{g_1(0,0)
}{g(0,0)}, \frac{g_2(0,0)}{g(0,0)},
\frac{g_3(0,0)}{g(0,0)}\\
\frac{g_1(1,0)}{g(1,0)},
\frac{g_2(1,0)}{g(1,0)},
\frac{g_3(1,0)}{g(1,0)}\\
\frac{g_1(0,1)}{g(0,1)},
\frac{g_2(0,1)}{g(0,1)},
\frac{g_3(0,1)}{g(0,1)}\\
\frac{g_1(1,1)}{g(1,1)},
\frac{g_2(1,1)}{g(1,1)},
\frac{g_3(1,1)}{g(1,1)}
\end{bmatrix}.
\end{align}
Let $Z =
1+ e^{\theta_1}+
e^{\theta_{2}}+
e^{\theta_1+\theta_2+\rho}$,
 $x_1= e^{\theta_1}
$,
$x_2= e^{\theta_2}
$ and
$x_3=e^{\rho}
$.
By simple calculation,
\begin{align}\label{asymeq17}
&g_1(a_1,a_2) =
\frac{a_1 Z-x_1(1+x_2x_3)}
{Z^2}
e^{\theta_1 a_1+
\theta_2 a_2+\rho a_1 a_2},\\ \nonumber
&g_2(a_1,a_2) =
\frac{a_2 Z-x_2(1+x_1x_3)}
{Z^2}e^{\theta_1 a_1+
\theta_2 a_2+\rho a_1 a_2},\\ \nonumber
&g_3(a_1,a_2) =
\frac{a_1 a_2
Z-x_1 x_2 x_3}
{Z^2}e^{\theta_1 a_1+
\theta_2 a_2+\rho a_1 a_2}.
\end{align}
Substitute (\ref{asymeq17})
into (\ref{asymeq18})
and set
$
\begin{bmatrix}
\mathbf{g}_1\\
\mathbf{g}_2\\
\mathbf{g}_3
\end{bmatrix}
=\begin{bmatrix}
g_1(0,0), g_2(0,0), g_3(0,0)\\
g_1(1,0), g_2(1,0), g_3(1,0)\\
g_1(0,1), g_2(0,1), g_3(0,1)\\
\end{bmatrix},
$
we have,
\begin{align}
&\begin{bmatrix}
\mathbf{g}_1\\
\mathbf{g}_2\\
\mathbf{g}_3
\end{bmatrix}=
\begin{bmatrix}
-\frac{x_1(1+x_2x_3)}{Z^2},&
-\frac{x_2(1+x_1x_3)}{Z^2},&
-\frac{x_1x_2x_3}{Z^2}\\ \nonumber
\frac{x_1}{Z}
-\frac{x_1^2(1+x_2x_3)}{Z^2},&
-\frac{x_2x_1(1+x_1x_3)}{Z^2},&
-\frac{x_1^2 x_2x_3}{Z^2}\\ \nonumber
-\frac{x_1x_2(1+x_2x_3)}{Z^2},&
\frac{x_2}{Z}-\frac{x_2^2(1+x_1x_3)}{Z^2}
,&-\frac{x_1x_2^2x_3}{Z^2}
\end{bmatrix}.
\end{align}
Note that,
\begin{align}
\begin{bmatrix}
\mathbf{g}_1\\
\mathbf{g}_2-x_1\mathbf{g}_1\\
\mathbf{g}_3-x_2\mathbf{g}_2
\end{bmatrix}
= \begin{bmatrix}
-\frac{x_1(1+x_2x_3)}{Z^2},&
-\frac{x_2(1+x_1x_3)}{Z^2},&
-\frac{x_1x_2x_3}{Z^2}\\
\frac{x_1}{Z},&
0, &0\\ \nonumber
0,&\frac{x_2}{Z},&0
\end{bmatrix}.
\end{align}
Since
$x_1,x_2,x_3,Z\neq 0$ so
$\begin{bmatrix}
\mathbf{g}_1\\
\mathbf{g}_2-x_1\mathbf{g}_1\\
\mathbf{g}_3-x_2\mathbf{g}_2
\end{bmatrix}$ is obviously
a matrix of rank $3$.
Since $g(X)\neq 0$ for all
$X\in \{0,1\}^2$, therefore
the matrix
(\ref{asymeq18}) is of rank $3$.
Thus the conclusion of
item (3) follows.

\end{proof}

Note that
item (1)(2) of assumption \ref{asymass1}
clearly holds for model (\ref{asymmodel1}).
Therefore by theorem \ref{asymth5},
it holds that
$$
\sum\limits_{i<j\leq n}
||g(\cdot;\hat{\theta}_{ij},\hat{\theta}_{ji},\hat{\rho})-
g(\cdot;\theta_{ij}^*,\theta_{ji}^*,\rho^*)||_2^2
=O_p(n^{\frac{3}{2}}(\log n)^2).
$$
 By proposition \ref{asymprop1}
item (3) we have:
\begin{align}\label{asymeq16}
  \sum\limits_{i<j\leq n}
  \min\{(
  \hat{\theta}_{ij}-\theta^*_{ij})^2
  ,\delta\}+
  n^2 \min\{(\rho^*-\hat{\rho})^2,\delta\}
  =O_p(n^{\frac{3}{2}}(\log n)^2).
\end{align}
But $\delta$ is a constant, therefore
(\ref{asymeq16}) is equivalent to
\begin{align}\label{asymeq19}
  \sum\limits_{i<j\leq n}(
  \hat{\theta}_{ij}-\theta^*_{ij})^2+
  n^2 (\rho^*-\hat{\rho})^2
  =O_p(n^{\frac{3}{2}}(\log n)^2).
\end{align}
So it follows directly that
\begin{align}\label{asymeq21}
|\hat{\rho}-\rho^*|
=O_p(n^{-\frac{1}{4}}\log n).
\end{align}

Now
it is easy to deduce the error of
$\hat{\boldsymbol{\alpha}},
\hat{\boldsymbol{\beta}},
\hat{\rho}$.
Let $w^\alpha_i = \hat{\alpha}_i-\alpha^*_{i}
-\Delta \alpha$,
$w^\beta_i = \hat{\beta}_i-\beta^*_{i}
-\Delta \beta$.
For simplicity, we define two distribution
$\mu^\alpha = \frac{1}{n}
\sum\limits_{i\leq n}
\delta_{w^\alpha_i},
\mu^\beta = \frac{1}{n}
\sum\limits_{i\leq n}
\delta_{w^\beta_i}$, i.e.,
$\mu^\alpha,\mu^\beta$ are
uniform distribution on
$\{\hat{\alpha}_i-\alpha^*_{i}-
\Delta\alpha\}_{i\leq n},
\{\hat{\beta}_i-\beta^*_{i}-
\Delta \beta\}_{i\leq n}$, respectively.
Consider the
two random variables $w^\alpha$,
$w^\beta$ subject to distributions
$\mu^\alpha,\mu^\beta$ respectively.

Since $\oE(w^\alpha)=\oE(w^\beta)=0$
and $w^\alpha\perp w^\beta$,
using (\ref{asymeq19}) we have,
\begin{align}
&\oE((w^\alpha)^2)+
\oE((w^\beta)^2)+
(\Delta \alpha+\Delta \beta)^2
=\oE[
\ (w^\alpha+w^\beta+\Delta\alpha+\Delta\beta)^2
\ ]
=O_p(n^{-\frac{1}{2}}(\log n)^2).
\end{align}
This is clearly equivalent to
$(\Delta \alpha+\Delta \beta)^2
=
O_p(n^{-\frac{1}{2}}(\log n)^2)$;
and
$||
\hat{\boldsymbol{\alpha}}-\boldsymbol{\alpha}^*
-\Delta \alpha||_2^2
+||\hat{\boldsymbol{\beta}}-\boldsymbol{\beta}^*
-\Delta \beta||_2^2
=O_p(\sqrt{n}(\log n)^2)
$.

Thus we finished the proof of
all conclusions of theorem \ref{asymth1}.

\subsection{Proof of theorem \ref{asymth3}}

Note that $\boldsymbol{\alpha}^*+\Delta\alpha,
\boldsymbol{\beta}^*-\Delta\alpha$
is a true parameter. It follows from theorem
\ref{asymth1} that
$$
||
\hat{\boldsymbol{\alpha}}-\boldsymbol{\alpha}^*
-\Delta \alpha||_2^2
+||\hat{\boldsymbol{\beta}}-\boldsymbol{\beta}^*
+\Delta \al||_2^2
=O_p(\sqrt{n}(\log n)^2)
$$
By proposition \ref{asymprop1}, $g$
satisfies assumption \ref{asymass1} item (2)(4)(5).
Thus by theorem \ref{asymth4},
\begin{align}
\sup\limits_{i\leq n}\
(\hat{\alpha}_i-\alpha^*_i-\Delta\alpha)^2
+(\hat{\beta}_i-\beta^*_i+\Delta\alpha)^2
=O_p(n^{-\frac{1}{4}}\log n).
\end{align}
But $\hat{\alpha}_1 = \alpha^*_1 = 0$.
So
$\Delta \alpha = O_p(n^{-\frac{1}{8}}(\log n)^{\frac{1}{2}})$.
Thus
\begin{align}
\sup\limits_{i\leq n}\
(\hat{\alpha}_i-\alpha^*_i)^2
+(\hat{\beta}_i-\beta^*_i)^2
=O_p(n^{-\frac{1}{4}}\log n).
\end{align}
The conclusion of theorem \ref{asymth3} follows.

\section{Acknowledgement}
We would like to thank Jing Zhou for comments
on a draft of this work.

\bibliographystyle{apalike}

\bibliography{sbm}

\end{document}